  \let\Cref\crtCref
  \let\cref\crtcref
\newcommand{\subfree}{SFMTT}
\newcommand{\customlabel}[2]{%
  \protected@write \@auxout {}{\string \newlabel {#1}{{#2}{\thepage}{#2}{#1}{}} }%
  \Hy@raisedlink{\hypertarget{#1}{}}#2%
}
\newcommand{\newrulename}[1]{\customlabel{rule:#1}{\textsc{#1}}}
\newcommand{\inlinerulenamestyle}[1]{\textsc{\footnotesize #1}}
\newcommand{\inlinerulename}[1]{{\footnotesize \ref{rule:#1}}}
\newcommand{\case}{\textsc{case}}
\newcommand{\explanation}[1]{\text{(#1)}}
\newcommand{\unitMod}{\mathbb{1}}
\newcommand{\unitTwocell}{1}
\DeclareMathOperator{\Hom}{\mathrm{Hom}}
\newcommand{\catM}{\mathcal{M}}
\newcommand{\SSyn}{\mathrm{SSyn}}
\newcommand{\SCtx}[1]{\mathrm{SCtx}_{#1}}
\newcommand{\keysym}{\text{{\vbox{\hbox{\includegraphics[width=1em]{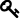}}}}}}
\newcommand{\locksym}{\text{{\vbox{\hbox{\includegraphics[width=.65em]{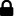}}}}}}
\newcommand{\emptyctx}{{\ensuremath{\cdot}}}
\newcommand{\emptysub}{\ !}
\newcommand{\emptytele}{\emptyctx}
\newcommand{\extendsctx}[2]{#1 \, . \, #2}
\newcommand{\extendctx}[3]{#1 \, . \, (#2 \shortmid #3)}
\newcommand{\extendlocktele}[2]{\locksym_{#1} \, . \, #2}
\newcommand{\judgment}[2]{#1 \, \textcolor{gray}{@ \, #2}}
\newcommand{\key}[4]{\keysym_{#4}^{\twocell{#1}{#2}{#3}}}
\newcommand{\lift}[1]{#1^+}
\newcommand{\lock}[2]{#1 \, . \, \locksym_{#2}}
\newcommand{\locks}[1]{\mathsf{locks}\left(#1\right)}
\newcommand{\Locktele}[2]{\mathsf{LockTele}(#2 \to #1)}
\newcommand{\Tele}[2]{\mathsf{sTele}(#2 \to #1)}
\newcommand{\sctx}[2]{\judgment{#1 \ \mathsf{sctx}}{#2}}
\newcommand{\sfexpr}[3]{\judgment{#1 \vdash_{\!\sfsymbol{}} #2 \ \mathsf{expr}}{#3}}
\newcommand{\sfarensub}[4]{\judgment{\vdash_{\!\sfsymbol{}} #1 \ \mathsf{aren/asub}(#2 \to #3)}{#4}}
\newcommand{\sfaren}[4]{\judgment{\vdash_{\!\sfsymbol{}} #1 \ \mathsf{aren}(#2 \to #3)}{#4}}
\newcommand{\sfasub}[4]{\judgment{\vdash_{\!\sfsymbol{}} #1 \ \mathsf{asub}(#2 \to #3)}{#4}}
\newcommand{\sfsub}[4]{\judgment{\vdash_{\!\sfsymbol{}} #1 \ \mathsf{sub}(#2 \to #3)}{#4}}
\newcommand{\sfrensub}[4]{\judgment{\vdash_{\!\sfsymbol{}} #1 \ \mathsf{ren/sub}(#2 \to #3)}{#4}}
\newcommand{\sfsymbol}{\mathsf{sf}}
\newcommand{\sfsigmeqsubsym}{\approx^{\textsf{\textup{obs}}}}
\newcommand{\sfsigmeqsub}[2]{#1 \sfsigmeqsubsym{} #2}
\newcommand{\sfvar}[3]{\judgment{#1 \vdash_{\!\sfsymbol{}} #2 \ \mathsf{var}}{#3}}
\newcommand{\twocell}[3]{#1 \in #2 \Rightarrow #3}
\newcommand{\weaken}[1]{\mathsf{weaken}(#1)}
\newcommand{\addtele}[2]{#1 \, . \, #2}
\newcommand{\mttsymbol}{\mathsf{ws}}
\newcommand{\mttexpr}[3]{\judgment{#1 \vdash_{\!\mttsymbol{}} #2 \ \mathsf{expr}}{#3}}
\newcommand{\mttsub}[4]{\judgment{\vdash_{\!\mttsymbol{}} #1 \ \mathsf{sub}(#2 \to #3)}{#4}}
\newcommand{\sigmeq}{\equiv^\upsigma}
\newcommand{\sigmeqexpr}[4]{\mttexpr{#1}{#2\sigmeq#3}{#4}}
\newcommand{\sigmeqsub}[5]{\mttsub{#1 \sigmeq #2}{#3}{#4}{#5}}
\newcommand{\transf}[4]{#4 \, \left[ \, #3 \, \right]_{\textsf{\textup{2-cell}}}^{#1 \Rightarrow #2}}
\newcommand{\substexpr}[3]{#2 \, \left[ \, #3 \, \right]_{\mathsf{#1}}}
\newcommand{\arensubexpr}[3][]{\substexpr{aren/asub}{#2}{#3}^{#1}}
\newcommand{\arenvar}[3]{\substexpr{aren,var}{#1}{#2}^{#3}}
\newcommand{\asubvar}[3]{\substexpr{asub,var}{#1}{#2}^{#3}}
\newcommand{\arenexpr}[3][]{\substexpr{aren}{#2}{#3}^{#1}}
\newcommand{\asubexpr}[3][]{\substexpr{asub}{#2}{#3}^{#1}}
\newcommand{\subexpr}[2]{\substexpr{sub}{#1}{#2}}
\newcommand{\subexprmtt}[2]{\substexpr{ws}{#1}{#2}}
\newcommand{\translate}[1]{\left\llbracket #1 \right \rrbracket}
\newcommand{\embed}[1]{\mathsf{embed}\!\left(#1\right)}
\newcommand{\vzero}[1]{\mathbf{v}_0^{#1}}
\newcommand{\suc}[1]{\mathsf{suc}\left(#1\right)}
\newcommand{\Bool}{\mathsf{Bool}}
\newcommand{\true}{\mathsf{true}}
\newcommand{\false}{\mathsf{false}}
\newcommand{\ifexpr}[4]{\mathsf{if}\left(#1; #2; #3; #4\right)}
\newcommand{\modfunc}[3]{\left(#1 \shortmid #2\right)\!\to #3}
\newcommand{\lam}[2]{\lambda^{#1}\left(#2\right)}
\newcommand{\app}[3]{\mathsf{app}_{#1}\left(#2; #3\right)}
\newcommand{\modty}[2]{\langle #1 \mid #2 \rangle}
\newcommand{\modtm}[2]{\mathsf{mod}_{#1}\left(#2\right)}
\newcommand{\letmod}[6]{\mathsf{letmod}_{#1, #2}\left(#3; #4; #5; #6\right)}
\newcommand{\id}{\mathsf{id}}
\newcommand{\ida}{\mathsf{id}^{\mathsf{a}}}
\newcommand{\rensubcons}[2]{#1 \, \raisebox{.2ex}{\scalebox{.75}{\textcircled{{\raisebox{.08ex}{\scalebox{.9}{\textsf{a}}}}}}} \, #2}
\newcommand{\concat}[2]{#1 +\!\!+ #2}
\newcommand{\sfmixseq}[4]{\judgment{\vdash_{\!\sfsymbol{}} #1 \ \mathsf{seq}(#2 \to #3)}{#4}}
\newcommand{\idm}{\mathsf{id}^{\mathsf{m}}}
\newcommand{\circm}{\ensuremath{\raisebox{.2ex}{\scalebox{.75}{\textcircled{{\raisebox{.08ex}{\scalebox{.9}{\textsf{m}}}}}}}}}
\newcommand{\mixsnocren}[2]{#1 \, \circm{}_{\mathsf{aren}} \, #2}
\newcommand{\mixsnocsub}[2]{#1 \, \circm{}_{\mathsf{asub}} \, #2}
\newcommand{\seqexpr}[2]{\substexpr{seq}{#1}{#2}}
\newcommand{\inferrule}[2]{%
  \begin{tabular}{l}%
    {{\scriptsize \newrulename{#1}}}\\[1pt]%
    #2%
  \end{tabular}%
}
\title{A Sound and Complete Substitution Algorithm for Multimode Type Theory: Technical Report}
\titlerunning{A Substitution Algorithm for Multimode Type Theory: Technical Report}
\author{Joris Ceulemans\footnote{Joris Ceulemans held a PhD fellowship (1184122N) of the Research Foundation -- Flanders (FWO) while working on this research. This research is partially funded by the Research Fund KU~Leuven and by the Research Foundation - Flanders (FWO; G030320N).}}{DistriNet, KU Leuven, Belgium \and \url{https://distrinet.cs.kuleuven.be/people/JorisCeulemans}}{joris.ceulemans@kuleuven.be}{https://orcid.org/0000-0001-9582-0789}{Held a PhD fellowship (1184122N) of the Research Foundation -- Flanders (FWO) while working on this research. This research is partially funded by the Research Fund KU~Leuven and by the Research Foundation - Flanders (FWO; G030320N).}
\author{Andreas Nuyts\footnote{Andreas nuyts holds a Postdoctoral fellowship (1247922N) of the Research Foundation -- Flanders (FWO).}}{DistriNet, KU Leuven, Belgium \and \url{https://anuyts.github.io/}}{andreas.nuyts@kuleuven.be}{https://orcid.org/0000-0002-1571-5063}{Holds a Postdoctoral fellowship (1247922N) of the Research Foundation -- Flanders (FWO).}
\author{Dominique Devriese}{DistriNet, KU Leuven, Belgium \and \url{https://distrinet.cs.kuleuven.be/people/DominiqueDevriese}}{dominique.devriese@kuleuven.be}{https://orcid.org/0000-0002-3862-6856}{}
\authorrunning{J. Ceulemans, A. Nuyts and D. Devriese}
\begin{document}

\maketitle

\section{Introduction}

This is the technical report accompanying the paper ``A Sound and Complete Substitution Algorithm for Multimode Type Theory''  \cite{mtt-sub-paper}.
It contains a full definition of WSMTT in \cref{sec:mtt}, including many rules for $\upsigma$-equivalence and a description of all rules that have been omitted.
Furthermore, we present completeness and soundness proofs of the substitution algorithm in full detail.
These can be found in \cref{sec:completeness,sec:soundness} respectively.
In order to make this document relatively self-contained, we also include a description of SFMTT in \cref{sec:sfmtt}.

\section{WSMTT: Full Description \& \texorpdfstring{$\upsigma$}{σ}-equivalence}
\label{sec:mtt}

\subsection{Extrinsically typed syntax}

\begin{figure}
  \centering
  
  \input{common-inference-rules/scoping-ctx}
  
  \caption{Definition of scoping contexts and lock telescopes. This figure is identical to Figure 3 in the paper.}
  \label{fig:scope-ctx}
\end{figure}

\begin{figure}
  \centering

  \input{common-inference-rules/mtt-intrinsically-scoped}
  
  \caption{Definition of WSMTT expressions (partial) and substitutions (full). This figure is identical to Figure 4 in the paper.}
  \label{fig:raw-mtt-expr-sub}
\end{figure}

\begin{figure}
  \centering

  \hspace*{\fill}
  \inferrule{wsmtt-expr-bool}{
    \prftree{
      \prfassumption{\sctx{\hat{\Gamma}}{m}}}{
      \mttexpr{\hat{\Gamma}}{\Bool}{m}}
  }
  \hfill
  \inferrule{wsmtt-expr-true}{
    \prftree{
      \prfassumption{\sctx{\hat{\Gamma}}{m}}}{
      \mttexpr{\hat{\Gamma}}{\true}{m}}
  }
  \hfill
  \inferrule{wsmtt-expr-false}{
    \prftree{
      \prfassumption{\sctx{\hat{\Gamma}}{m}}}{
      \mttexpr{\hat{\Gamma}}{\false}{m}}
  }
  \hspace*{\fill}
  \\[10pt]
  \hspace*{\fill}
  \inferrule{wsmtt-expr-if}{
    \prftree{
      \prfassumption{
        \begin{array}{l}
          \mttexpr{\extendsctx{\hat{\Gamma}}{\unitMod{}}}{A}{m} \\
          \mttexpr{\hat{\Gamma}}{s, t, t'}{m}
        \end{array}
      }}{
      \mttexpr{\hat{\Gamma}}{\ifexpr{A}{s}{t}{t'}}{m}}
    }
  \hfill
  \inferrule{wsmtt-expr-app}{
    \prftree{
      \prfassumption{\mu : m \to n}}{
      \prfassumption{
        \begin{array}{l}
          \mttexpr{\hat{\Gamma}}{f}{n} \\
          \mttexpr{\lock{\hat{\Gamma}}{\mu}}{t}{m}
        \end{array}
      }}{
      \mttexpr{\hat{\Gamma}}{\app{\mu}{f}{t}}{n}}
  }
  \hspace*{\fill}
  \\[10pt]
  \hspace*{\fill}
  \inferrule{wsmtt-expr-mod-ty}{
    \prftree{
      \prfassumption{\mu : m \to n}}{
      \prfassumption{\mttexpr{\lock{\hat{\Gamma}}{\mu}}{A}{m}}}{
      \mttexpr{\hat{\Gamma}}{\modty{\mu}{A}}{n}}
  }
  \hfill
  \inferrule{wsmtt-expr-mod-tm}{
    \prftree{
      \prfassumption{\mu : m \to n}}{
      \prfassumption{\mttexpr{\lock{\hat{\Gamma}}{\mu}}{t}{m}}}{
      \mttexpr{\hat{\Gamma}}{\modtm{\mu}{t}}{n}}
  }
  \hspace*{\fill}
  \\[10pt]
  \hspace*{\fill}
  \inferrule{wsmtt-expr-mod-elim}{
    \prftree{
      \prfassumption{
        \begin{array}{l}
          \mu : m \to n \\
          \nu : n \to o
        \end{array}
      }}{
      \prfassumption{
        \begin{array}{l}
          \mttexpr{\lock{\lock{\hat{\Gamma}}{\nu}}{\mu}}{A}{m} \\
          \mttexpr{\lock{\hat{\Gamma}}{\nu}}{t}{n}
        \end{array}
      }}{
      \prfassumption{
        \begin{array}{l}
          \mttexpr{\extendsctx{\hat{\Gamma}}{\nu}}{B}{o} \\
          \mttexpr{\extendsctx{\hat{\Gamma}}{\nu \circ \mu}}{s}{o}
        \end{array}
      }}{
      \mttexpr{\hat{\Gamma}}{\letmod{\nu}{\mu}{A}{B}{t}{s}}{o}}
  }
  \hspace*{\fill}

  \caption{Remaining constructors for WSMTT expressions, not covered in the paper}
  \label{fig:extra-wsmtt-constructors}
\end{figure}

The definition of scoping contexts and lock telescopes is repeated in \cref{fig:scope-ctx}.
All WSMTT expression and substitution constructors that were already covered by the paper are included in \cref{fig:raw-mtt-expr-sub}.
The other WSMTT constructors for expressions can be found in \cref{fig:extra-wsmtt-constructors}; the description of WSMTT substitutions was already complete in the paper.

The extra constructors for WSMTT expressions include a type of booleans (\inlinerulename{wsmtt-expr-bool}) with corresponding constructors (\inlinerulename{wsmtt-expr-true} and \inlinerulename{wsmtt-expr-false}) and dependent eliminator (\inlinerulename{wsmtt-expr-if}).
We see that when applying a (dependent) $\mu$-modal function to an expression $t$, that argument expression $t$ must be well-scoped in the locked context $\lock{\hat{\Gamma}}{\mu}$ (\inlinerulename{wsmtt-expr-app}).
Furthermore, there are the WSMTT versions of the formation (\inlinerulename{wsmtt-expr-mod-ty}) and introduction (\inlinerulename{wsmtt-expr-mod-tm}) for modal types rules from MTT.
The modal eliminator (\inlinerulename{wsmtt-expr-mod-elim}) corresponds to the MTT expression constructor $\mathsf{let}_\nu \, \modtm{\mu}{x} = t \ \mathsf{in} \ s$, which allows us to view a term $t$ of type $\modty{\mu}{A}$ as if it were of the form $\modtm{\mu}{x}$ when type checking the term $s$.
We refer to \cite{gratzer20-multimodal} for more details on this modal eliminator, as its behaviour with respect to substitution is not special and it does otherwise not play an important role in this report.

We emphasize again that all expression and substitution constructors in WSMTT can be obtained by removing the typing information from the corresponding constructors in MTT.

\subsection{\texorpdfstring{$\upsigma$}{σ}-equivalence}

\begin{figure}
  \centering

  \hspace*{\fill}
  \inferrule{wsmtt-eq-expr-refl}{%
    \prftree{
      \mttexpr{\hat{\Gamma}}{t}{m}}{
      \sigmeqexpr{\hat{\Gamma}}{t}{t}{m}}
  }
  \hfill
  \inferrule{wsmtt-eq-sub-id-right}{%
    \prftree{
      \prfassumption{\mttsub{\sigma}{\hat{\Gamma}}{\hat{\Delta}}{m}}}{
      \sigmeqsub{\sigma \circ \id}{\sigma}{\hat{\Gamma}}{\hat{\Delta}}{m}}
  }
  \hfill
  \inferrule{wsmtt-eq-expr-sub-id}{%
    \prftree{
      \prfassumption{\mttexpr{\hat{\Gamma}}{t}{m}}}{
      \sigmeqexpr{\hat{\Gamma}}{\subexprmtt{t}{\id}}{t}{m}}
  }
  \hspace*{\fill}
  \\[3pt]
  \hspace*{\fill}
  \inferrule{wsmtt-eq-expr-sub-compose}{%
    \prftree{
      \prfassumption{\mttexpr{\hat{\Xi}}{t}{m}}}{
      \prfassumption{\mttsub{\sigma}{\hat{\Delta}}{\hat{\Xi}}{m}}}{
      \prfassumption{\mttsub{\tau}{\hat{\Gamma}}{\hat{\Delta}}{m}}}{
      \sigmeqexpr{\hat{\Gamma}}{\subexprmtt{t}{\sigma \circ \tau}}{\subexprmtt{\subexprmtt{t}{\sigma}}{\tau}}{m}}
  }
  \hspace*{\fill}
  \\[3pt]
  \hspace*{\fill}
  \inferrule{wsmtt-eq-expr-cong-sub}{%
    \prftree{
      \prfassumption{\sigmeqexpr{\hat{\Delta}}{t_1}{t_2}{m}}}{
      \prfassumption{\sigmeqsub{\sigma_1}{\sigma_2}{\hat{\Gamma}}{\hat{\Delta}}{m}}}{
      \sigmeqexpr{\hat{\Gamma}}{\subexprmtt{t_1}{\sigma_1}}{\subexprmtt{t_2}{\sigma_2}}{m}}
  }
  \hspace*{\fill}
  \\[3pt]
  \hspace*{\fill}
  \inferrule{wsmtt-eq-expr-cong-lam}{%
    \prftree{
      \prfassumption{
        \begin{array}{l}
          \mu : m \to n \\
          \sigmeqexpr{\extendsctx{\hat{\Gamma}}{\mu}}{t_1}{t_2}{n}
        \end{array}
      }}{
      \sigmeqexpr{\hat{\Gamma}}{\lam{\mu}{t_1}}{\lam{\mu}{t_2}}{n}}
  }
  \hfill
  \inferrule{wsmtt-eq-expr-cong-app}{%
    \prftree{
      \prfassumption{\mu : m \to n}}{
      \prfassumption{
        \begin{array}{l}
          \sigmeqexpr{\hat{\Gamma}}{f_1}{f_2}{n} \\
          \sigmeqexpr{\lock{\hat{\Gamma}}{\mu}}{t_1}{t_2}{m}
        \end{array}
      }}{
      \sigmeqexpr{\hat{\Gamma}}{\app{\mu}{f_1}{t_1}}{\app{\mu}{f_2}{t_2}}{n}}
  }
  \hspace*{\fill}
  \\[3pt]
  \hspace*{\fill}
  \inferrule{wsmtt-eq-sub-cong-compose}{%
    \prftree{
      \prfassumption{
        \begin{array}{l}
          \sigmeqsub{\sigma_1}{\sigma_2}{\hat{\Delta}}{\hat{\Xi}}{m} \\
          \sigmeqsub{\tau_1}{\tau_2}{\hat{\Gamma}}{\hat{\Delta}}{m}
        \end{array}
      }}{
      \sigmeqsub{\sigma_1 \circ \tau_1}{\sigma_2 \circ \tau_2}{\hat{\Gamma}}{\hat{\Xi}}{m}}
  }
  \hfill
  \inferrule{wsmtt-eq-sub-cong-extend}{%
    \prftree{
      \prfassumption{\mu : m \to n}}{
      \prfassumption{
        \begin{array}{l}
          \sigmeqsub{\sigma_1}{\sigma_2}{\hat{\Gamma}}{\hat{\Delta}}{n} \\
          \sigmeqexpr{\lock{\hat{\Gamma}}{\mu}}{t_1}{t_2}{m}
        \end{array}
      }}{
      \sigmeqsub{\sigma_1 . t_1}{\sigma_2 . t_2}{\hat{\Gamma}}{\extendsctx{\hat{\Delta}}{\mu}}{n}}
  }
  \hspace*{\fill}
  \\[3pt]
  \hspace*{\fill}
  \inferrule{wsmtt-eq-sub-cong-lock}{%
    \prftree{
      \prfassumption{\mu : m \to n}}{
      \prfassumption{\sigmeqsub{\sigma_1}{\sigma_2}{\hat{\Gamma}}{\hat{\Delta}}{n}}}{
      \sigmeqsub{\lock{\sigma_1}{\mu}}{\lock{\sigma_2}{\mu}}{\lock{\hat{\Gamma}}{\mu}}{\lock{\hat{\Delta}}{\mu}}{m}}
  }
  \hspace*{\fill}
  \\[3pt]
  \hspace*{\fill}
  \inferrule{wsmtt-eq-expr-lam-sub}{%
    \prftree{
      \prfassumption{\mu : m \to n}}{
      \prfassumption{\mttexpr{\extendsctx{\hat{\Delta}}{\mu}}{t}{n}}}{
      \prfassumption{\mttsub{\sigma}{\hat{\Gamma}}{\hat{\Delta}}{n}}}{
      \sigmeqexpr{\hat{\Gamma}}{\subexprmtt{\left(\lam{\mu}{t}\right)}{\sigma}}{\lam{\mu}{\subexprmtt{t}{\lift{\sigma}}}}{n}}
  }
  \hspace*{\fill}
  \\[3pt]
  \hspace*{\fill}
  \inferrule{wsmtt-eq-expr-app-sub}{%
    \prftree{
      \prfassumption{\mu : m \to n}}{
      \prfassumption{\mttexpr{\hat{\Delta}}{f}{n}}}{
      \prfassumption{\mttexpr{\lock{\hat{\Delta}}{\mu}}{t}{m}}}{
      \prfassumption{\mttsub{\sigma}{\hat{\Gamma}}{\hat{\Delta}}{n}}}{
      \sigmeqexpr{\hat{\Gamma}}{\subexprmtt{\left(\app{\mu}{f}{t}\right)}{\sigma}}{\app{\mu}{\subexprmtt{f}{\sigma}}{\subexprmtt{t}{\lock{\sigma}{\mu}}}}{n}}
  }
  \hspace*{\fill}
  \\[3pt]
  \hspace*{\fill}
  \inferrule{wsmtt-eq-sub-empty-unique}{%
    \prftree{
      \prfassumption{\mttsub{\sigma}{\hat{\Gamma}}{\emptyctx{}}{m}}}{
      \sigmeqsub{\sigma}{\emptysub{}}{\hat{\Gamma}}{\emptyctx{}}{m}}
  }
  \hfill
  \inferrule{wsmtt-eq-expr-extend-var}{%
    \prftree{
      \prfassumption{\mu : m \to n}}{
      \prfassumption{\mttsub{\sigma}{\hat{\Gamma}}{\hat{\Delta}}{n}}}{
      \prfassumption{\mttexpr{\lock{\hat{\Gamma}}{\mu}}{t}{m}}}{
      \sigmeqexpr{\lock{\hat{\Gamma}}{\mu}}{\subexprmtt{\vzero{}}{\lock{(\sigma . t)}{\mu}}}{t}{m}}
  }
  \hspace*{\fill}
  \\[3pt]
  \hspace*{\fill}
  \inferrule{wsmtt-eq-sub-extend-weaken}{%
    \prftree{
      \prfassumption{
        \begin{array}{l}
          \mu : m \to n \\
          \mttsub{\sigma}{\hat{\Gamma}}{\hat{\Delta}}{n} \\
          \mttexpr{\lock{\hat{\Gamma}}{\mu}}{t}{m}
        \end{array}
      }}{
      \sigmeqsub{\pi \circ (\sigma . t)}{\sigma}{\hat{\Gamma}}{\hat{\Delta}}{n}}
  }
  \hfill
  \inferrule{wsmtt-eq-sub-extend-eta}{%
    \prftree{
      \prfassumption{
        \begin{array}{l}
          \phantom{\locksym_\mu} \\
          \mu : m \to n \\
          \mttsub{\sigma}{\hat{\Gamma}}{\extendsctx{\hat{\Delta}}{\mu}}{n}
        \end{array}
      }}{
      \sigmeqsub{\sigma}{(\pi \circ \sigma) . (\subexprmtt{\vzero{}}{\lock{\sigma}{\mu}})}{\hat{\Gamma}}{\extendsctx{\hat{\Delta}}{\mu}}{n}}
  }
  \hspace*{\fill}
  \\[3pt]
  \hspace*{\fill}
  \inferrule{wsmtt-eq-sub-lock-id}{%
    \prftree{
      \prfassumption{\mu : m \to n}}{
      \prfassumption{\sctx{\hat{\Gamma}}{n}}}{
      \sigmeqsub{\lock{\id}{\mu}}{\id}{\lock{\hat{\Gamma}}{\mu}}{\lock{\hat{\Gamma}}{\mu}}{m}}
  }
  \hspace*{\fill}
  \\[3pt]
  \hspace*{\fill}
  \inferrule{wsmtt-eq-sub-lock-compose}{%
    \prftree{
      \prfassumption{\mu : m \to n}}{
      \prfassumption{\mttsub{\sigma}{\hat{\Delta}}{\hat{\Xi}}{n}}}{
      \prfassumption{\mttsub{\tau}{\hat{\Gamma}}{\hat{\Delta}}{n}}}{
      \sigmeqsub{\lock{(\sigma \circ \tau)}{\mu}}{(\lock{\sigma}{\mu}) \circ (\lock{\tau}{\mu})}{\lock{\hat{\Gamma}}{\mu}}{\lock{\hat{\Xi}}{\mu}}{m}}
  }
  \hspace*{\fill}

  \caption{Definition of $\upsigma$-equivalence for WSMTT expressions and substitutions (see the overview for which rules are omitted, figure continues on the next page).}
  \label{fig:sigma-equiv}
\end{figure}

\begin{figure}
  \ContinuedFloat

  \hspace*{\fill}
  \inferrule{wsmtt-eq-sub-key-natural}{%
    \prftree{
      \prfassumption{\Lambda, \Theta : \Locktele{n}{m}}}{
      \prfassumption{\twocell{\alpha}{\locks{\Lambda}}{\locks{\Theta}}}}{
      \prfassumption{\mttsub{\sigma}{\hat{\Gamma}}{\hat{\Delta}}{m}}}{
      \sigmeqsub{\key{\alpha}{\Lambda}{\Theta}{\hat{\Delta}} \circ (\extendsctx{\sigma}{\Theta})}{(\extendsctx{\sigma}{\Lambda}) \circ \key{\alpha}{\Lambda}{\Theta}{\hat{\Gamma}}}{\extendsctx{\hat{\Gamma}}{\Theta}}{\extendsctx{\hat{\Delta}}{\Lambda}}{n}}
  }
  \hspace*{\fill}
  \\[5pt]
  \hspace*{\fill}
  \inferrule{wsmtt-eq-sub-key-unit}{%
    \prftree{
      \prfassumption{\sctx{\hat{\Gamma}}{m}}}{
      \prfassumption{\Lambda : \Locktele{n}{m}}}{
      \sigmeqsub{\key{1_{\locks{\Lambda}}}{\Lambda}{\Lambda}{\hat{\Gamma}}}{\id}{\extendsctx{\hat{\Gamma}}{\Lambda}}{\extendsctx{\hat{\Gamma}}{\Lambda}}{n}}
  }
  \hspace*{\fill}
  \\[5pt]
  \hspace*{\fill}
  \inferrule{wsmtt-eq-sub-key-compose-vertical}{%
    \prftree{
      \prfassumption{
        \begin{array}{l}
          \sctx{\hat{\Gamma}}{m} \\
          \Lambda, \Theta, \Psi : \Locktele{n}{m}
        \end{array}
      }}{
      \prfassumption{
        \begin{array}{l}
          \twocell{\alpha}{\locks{\Lambda}}{\locks{\Theta}} \\
          \twocell{\beta}{\locks{\Theta}}{\locks{\Psi}}
        \end{array}
      }}{
      \sigmeqsub{\key{\beta \circ \alpha}{\Lambda}{\Psi}{\hat{\Gamma}}}{\key{\alpha}{\Lambda}{\Theta}{\hat{\Gamma}} \circ \key{\beta}{\Theta}{\Psi}{\hat{\Gamma}}}{\extendsctx{\hat{\Gamma}}{\Psi}}{\extendsctx{\hat{\Gamma}}{\Lambda}}{n}}
  }
  \hspace*{\fill}
  \\[5pt]
  \hspace*{\fill}
  \inferrule{wsmtt-eq-sub-key-compose-horizontal}{%
    \prftree{
      \prfassumption{\sctx{\hat{\Gamma}}{m}}}{
      \prfassumption{
        \begin{array}{l}
          \Theta_1, \Theta_2 : \Locktele{o}{n} \\
          \Lambda_1, \Lambda_2 : \Locktele{n}{m}
        \end{array}
      }}{
      \prfassumption{
        \begin{array}{l}
          \twocell{\alpha}{\locks{\Theta_1}}{\locks{\Theta_2}} \\
          \twocell{\beta}{\locks{\Lambda_1}}{\locks{\Lambda_2}}
        \end{array}
      }}{
      \sigmeqsub{\key{\beta \star \alpha}{\extendsctx{\Lambda_1}{\Theta_1}}{\extendsctx{\Lambda_2}{\Theta_2}}{\hat{\Gamma}}}{(\extendsctx{\key{\beta}{\Lambda_1}{\Lambda_2}{\hat{\Gamma}}}{\Theta_1}) \circ \key{\alpha}{\Theta_1}{\Theta_2}{\extendsctx{\hat{\Gamma}}{\Lambda_2}}}{\extendsctx{\extendsctx{\hat{\Gamma}}{\Lambda_2}}{\Theta_2}}{\extendsctx{\extendsctx{\hat{\Gamma}}{\Lambda_1}}{\Theta_1}}{o}}
  }
  \hspace*{\fill}
  
  \caption{Definition of $\upsigma$-equivalence for WSMTT expressions and substitutions (continued).}
\end{figure}

To recall the notation, we make use of a judgement $\sigmeqexpr{\hat{\Gamma}}{t}{s}{m}$ for $\upsigma$-equivalence of WSMTT expressions and $\sigmeqsub{\sigma}{\tau}{\hat{\Gamma}}{\hat{\Delta}}{m}$ for $\upsigma$-equivalence of WSMTT substitutions.
Figure 6 in the paper only provides some of the rules for $\upsigma$-equivalence.
In this section we spell out the full definition, or at least give a description of what the full definition should look like.
Most of the rules for $\upsigma$-equivalence can be found in \cref{fig:sigma-equiv}.
All rules fall into different classes and for each class we describe the rules that have been omitted:

\begin{itemize}
\item There are rules expressing that $\upsigma$-equivalence of expressions and substitutions are equivalence relations (reflexivity, symmetry, transitivity). We show just the rule for reflexivity in \cref{fig:sigma-equiv} (\inlinerulename{wsmtt-eq-expr-refl}).
\item Given a mode $m$, we have a category $\SCtx{m}$ of scoping contexts at $m$.
  Its objects are given by scoping contexts and morphisms by substitutions.
  In order to have a category, we add rules that establish the associativity of composition and the fact that $\id$ is a unit of $\circ$.
  We show just 1 rule in \cref{fig:sigma-equiv}, namely \inlinerulename{wsmtt-eq-sub-id-right}.
\item There are rules that express the functoriality of explicit substitution in expressions, i.e.\ expressions involving the identity (\inlinerulename{wsmtt-eq-expr-sub-id}) and composite substitutions (\inlinerulename{wsmtt-eq-expr-sub-compose}).
\item For every expression and substitution constructor that takes some arguments, there are rules expressing that it preserves $\upsigma$-equivalence.
  We show the rules for $\subexprmtt{\_}{\_}$ (\inlinerulename{wsmtt-eq-expr-cong-sub}), $\lam{\mu}{\_}$ (\inlinerulename{wsmtt-eq-expr-cong-lam}), $\app{\mu}{\_}{\_}$ (\inlinerulename{wsmtt-eq-expr-cong-app}), $\_\circ\_$ (\inlinerulename{wsmtt-eq-sub-cong-compose}), $\_.\_$ (\inlinerulename{wsmtt-eq-sub-cong-extend}) and $\lock{\_}{\mu}$ (\inlinerulename{wsmtt-eq-sub-cong-lock}).
\item Furthermore, we have for every expression constructor a rule expressing how substitutions can be pushed through them.
  We explicitly show the rules for $\lam{\mu}{\_}$ (\inlinerulename{wsmtt-eq-expr-lam-sub}) and $\app{\mu}{\_}{\_}$ (\inlinerulename{wsmtt-eq-expr-app-sub}).
  Note that we make use of a lifting operation on WSMTT substitutions which is defined as follows.
  \begin{equation} \label{eq:wsmtt-sub-lift}
    \lift{\sigma} := (\sigma \circ \pi) . \vzero{}
  \end{equation}
\item The CwF rules governing the empty context (\inlinerulename{wsmtt-eq-sub-empty-unique}) and context extension (\inlinerulename{wsmtt-eq-expr-extend-var}, \inlinerulename{wsmtt-eq-sub-extend-weaken} and \inlinerulename{wsmtt-eq-sub-extend-eta}) are also present, but the ones for context extension are adapted to our modal situation, taking into account that variables are annotated with a modality in the context and that the extension constructor for substitutions takes a term that lives in a locked context.
\item
  We have two strict 2-categories in play: the mode theory $\mathcal{M}$ and $\mathbf{Cat}$, the 2-category of categories.
  We add rules to ensure that the intrinsically scoped WSMTT syntax provides us with a pseudofunctor $\SSyn$
  from $\mathcal{M}^{\mathsf{coop}}$ to $\mathbf{Cat}$ that maps every mode $m$ to the corresponding category $\SCtx m$ of scoping contexts and substitutions:
  \begin{itemize}
  \item A modality $\mu : m \to n$ must then be sent to a functor $\locksym_\mu : \SCtx n \to \SCtx m$, whose object part (action on scoping contexts) is defined in \cref{fig:scope-ctx} (\inlinerulename{sctx-lock}), and whose morphism part (action on substitutions) is defined in \cref{fig:raw-mtt-expr-sub} (\inlinerulename{wsmtt-sub-lock}).
    We add rules expressing the functor laws for this functor: \inlinerulename{wsmtt-eq-sub-lock-id} expresses that $\locksym_\mu$ preserves the identity substitution and \inlinerulename{wsmtt-eq-sub-lock-compose} expresses that it preserves composition of substitutions.
  \item A 2-cell $\twocell \alpha \mu \nu$ must be sent to a natural transformation $\keysym^\alpha : \locksym_\nu \to \locksym_\mu$ whose object part (action on scoping contexts) is defined in \cref{fig:raw-mtt-expr-sub} (\inlinerulename{wsmtt-sub-key}).
    We add a rule \inlinerulename{wsmtt-eq-sub-key-natural} expressing the naturality condition.
    However, we immediately express naturality not only for key substitutions between locks, but more generally for key substitutions between lock \emph{telescopes}.
  \item We add rules expressing that $\SSyn$'s action on Hom-categories is strictly functorial, i.e.\ that identity (\inlinerulename{wsmtt-eq-sub-key-unit}) and composition (\inlinerulename{wsmtt-eq-sub-key-compose-vertical}) of 2-cells are preserved.
  \item $\SSyn$ needs to respect identity up to isomorphism, i.e.\ $\locksym_\unitMod$ needs to be naturally isomorphic to the identity functor.
    An invertible substitution $\lock{\hat \Gamma}{\unitMod} \cong \hat \Gamma$ is given by $\key{\unitTwocell_{\unitMod}}{\, \emptytele \,}{\, \locksym_{\unitMod}}{\hat \Gamma}$,
    and naturality follows from the existing naturality rule.
  \item $\SSyn$ needs to respect composition up to isomorphism, i.e. the diagram
    \[
      \xymatrix{
        \Hom_\catM(n, o) \times \Hom_\catM(m, n)
        \ar[rr]^(.6){{-} \circ {-}}
        \ar[d]^{(\locksym_{\pi_2(-)}, \locksym_{\pi_1(-)})}
        &&
        \Hom_\catM(m, o)
        \ar[d]^{\locksym_{-}}
        \\
        [\SCtx n, \SCtx m] \times [\SCtx o, \SCtx n]
        \ar[rr]^(.6){{-} \circ {-}}
        &&
        [\SCtx o, \SCtx m]
      }
    \]
    must commute up to natural isomorphism.
    For any composable pair of modalities $\mu : m \to n$ and $\nu : n \to o$, an invertible substitution $\lock{\hat \Gamma}{\nu \circ \mu} \cong \lock{\lock{\hat \Gamma}{\nu}}{\mu}$ is given by $\key{\unitTwocell_{\nu \circ \mu}}{\, \lock{\locksym_\nu}{\mu}}{\, \locksym_{\nu \circ \mu}}{\hat \Gamma}$ and naturality with respect to $\hat \Gamma$ follows from the existing naturality rule.
    However, we also need naturality with respect to $\mu$ and $\nu$, so let $\twocell{\alpha}{\mu}{\mu'}$ and $\twocell{\beta}{\nu}{\nu'}$ and thus $\twocell{\beta \star \alpha}{\nu \circ \mu}{\nu' \circ \mu'}$.
    Then we add a rule relating the key substitution for $\beta \star \alpha$ to those for $\beta$ and $\alpha$ (\inlinerulename{wsmtt-eq-sub-key-compose-horizontal}).
  \item The category laws (left and right unit, and associativity) turn into coherence requirements for the isomorphisms established in the previous two points.
    However, these are all proven by reflexivity for the identity 2-cell.
  \end{itemize}
\end{itemize}

\section{\subfree{}: Full Description}
\label{sec:sfmtt}

\subsection{Intrinsically Scoped Syntax for \subfree}
\label{sec:raw-syntax}

\begin{figure}
  \centering

  \input{common-inference-rules/sfmtt-vars}
  
  \caption{Definition of well-scoped \subfree{} variables (identical to Figure 7 in the paper)}
  \label{fig:raw-subfree-var}
\end{figure}

\begin{figure}
  \centering
  \hspace*{\fill}
  \inferrule{sf-expr-var}{%
    \prftree{
      \sfvar{\hat{\Gamma}}{v}{m}}{
      \sfexpr{\hat{\Gamma}}{v}{m}}
  }
  \hfill
  \inferrule{sf-expr-bool}{%
    \prftree{
      \prfassumption{\sctx{\hat{\Gamma}}{m}}}{
      \sfexpr{\hat{\Gamma}}{\Bool}{m}}
  }
  \hspace*{\fill}
  \\[5pt]
  \hspace*{\fill}
  \inferrule{sf-expr-true}{%
    \prftree{
      \prfassumption{
        \begin{array}{l}
          \phantom{\hat{\Gamma}} \\
          \sctx{\hat{\Gamma}}{m}
        \end{array}
      }}{
      \sfexpr{\hat{\Gamma}}{\true}{m}}
  }
  \hfill
  \inferrule{sf-expr-false}{%
    \prftree{
      \prfassumption{
        \begin{array}{l}
          \phantom{\hat{\Gamma}} \\
          \sctx{\hat{\Gamma}}{m}
        \end{array}
      }}{
      \sfexpr{\hat{\Gamma}}{\false}{m}}
  }
  \hfill
  \inferrule{sf-expr-if}{%
    \prftree{
      \prfassumption{
        \begin{array}{l}
          \sfexpr{\extendsctx{\hat{\Gamma}}{\unitMod{}}}{A}{m} \\
          \sfexpr{\hat{\Gamma}}{s, t, t'}{m}
        \end{array}
      }}{
      \sfexpr{\hat{\Gamma}}{\ifexpr{A}{s}{t}{t'}}{m}}
    }
  \hspace*{\fill}
  \\[5pt]
  \hspace*{\fill}
  \inferrule{sf-expr-arrow}{%
    \prftree{
      \prfassumption{
        \begin{array}{l}
          \mu : m \to n \\
          \sfexpr{\lock{\hat{\Gamma}}{\mu}}{A}{m} \\
          \sfexpr{\extendsctx{\hat{\Gamma}}{\mu}}{B}{n}
        \end{array}
      }}{
      \sfexpr{\hat{\Gamma}}{\modfunc{\mu}{A}{B}}{n}}
  }
  \hfill
  \inferrule{sf-expr-lam}{%
    \prftree{
      \prfassumption{
        \begin{array}{l}
          \phantom{\locksym_\mu} \\
          \mu : m \to n \\
          \sfexpr{\extendsctx{\hat{\Gamma}}{\mu}}{t}{n}
        \end{array}
      }}{
      \sfexpr{\hat{\Gamma}}{\lam{\mu}{t}}{n}}
  }
  \hfill
  \inferrule{sf-expr-app}{%
    \prftree{
      \prfassumption{
        \begin{array}{l}
          \mu : m \to n \\
          \sfexpr{\hat{\Gamma}}{f}{n} \\
          \sfexpr{\lock{\hat{\Gamma}}{\mu}}{t}{m}
        \end{array}
      }}{
      \sfexpr{\hat{\Gamma}}{\app{\mu}{f}{t}}{n}}
  }
  \hspace*{\fill}
  \\[5pt]
  \hspace*{\fill}
  \inferrule{sf-expr-mod-ty}{%
    \prftree{
      \prfassumption{\mu : m \to n}}{
      \prfassumption{\sfexpr{\lock{\hat{\Gamma}}{\mu}}{A}{m}}}{
      \sfexpr{\hat{\Gamma}}{\modty{\mu}{A}}{n}}
  }
  \hfill
  \inferrule{sf-expr-mod-tm}{%
    \prftree{
      \prfassumption{\mu : m \to n}}{
      \prfassumption{\sfexpr{\lock{\hat{\Gamma}}{\mu}}{t}{m}}}{
      \sfexpr{\hat{\Gamma}}{\modtm{\mu}{t}}{n}}
  }
  \hspace*{\fill}
  \\[5pt]
  \hspace*{\fill}
  \inferrule{sf-expr-mod-elim}{%
    \prftree{
      \prfassumption{
        \begin{array}{l}
          \mu : m \to n \\
          \nu : n \to o
        \end{array}
      }}{
      \prfassumption{
        \begin{array}{l}
          \sfexpr{\lock{\lock{\hat{\Gamma}}{\nu}}{\mu}}{A}{m} \\
          \sfexpr{\lock{\hat{\Gamma}}{\nu}}{t}{n}
        \end{array}
      }}{
      \prfassumption{
        \begin{array}{l}
          \sfexpr{\extendsctx{\hat{\Gamma}}{\nu}}{B}{o} \\
          \sfexpr{\extendsctx{\hat{\Gamma}}{\nu \circ \mu}}{s}{o}
        \end{array}
      }}{
      \sfexpr{\hat{\Gamma}}{\letmod{\nu}{\mu}{A}{B}{t}{s}}{o}}
  }
  \hspace*{\fill}
  
  \caption{Definition of \subfree{} expressions using the judgement $\sfexpr{\hat{\Gamma}}{t}{m}$.}
  \label{fig:raw-subfree-expr}
\end{figure}

\begin{figure}
  \centering

  \input{common-inference-rules/sfmtt-arensub}
  
  \caption{Definition of atomic \subfree{} renamings and substitutions (identical to Figure 8 in the paper)}
  \label{fig:raw-subfree-arensub}
\end{figure}

\begin{figure}
  \centering

  \input{common-inference-rules/sfmtt-rensub}
  
  \caption{Definition of regular \subfree{} renamings and substitutions (identical to Figure 9 in the paper)}
  \label{fig:raw-subfree-rensub}
\end{figure}

There are not many details regarding \subfree{} that have not already been mentioned in the paper.
We just include some definitions here for this report to be more or less self-contained and to be able to refer to them later.

As mentioned in the paper, \subfree{} syntax is extrinsically typed but intrinsically scoped.
We therefore use a notion of scoping context, whose definition is included in \cref{fig:scope-ctx}.
Accessible SFMTT variables are defined in \cref{fig:raw-subfree-var} and the full definition of \subfree{} expressions can be found in \cref{fig:raw-subfree-expr}.
Note that all SFMTT constructors except \inlinerulename{sf-expr-var} have a counterpart in WSMTT.
Conversely, all WSMTT constructors except \inlinerulename{wsmtt-expr-var} and \inlinerulename{wsmtt-expr-sub} have a counterpart in SFMTT.
Atomic and regular \subfree{} rensubs are defined in \cref{fig:raw-subfree-arensub,fig:raw-subfree-rensub}.

We also recall some of the defined operations for atomic and regular SFMTT rensubs.
First of all, there is a weakening atomic rensub
\begin{equation}
  \label{eq:weakening-arensub}
  \pi := \weaken{\ida}
\end{equation}
from $\extendsctx{\hat{\Gamma}}{\mu}$ to $\hat{\Gamma}$ for any scoping context $\hat{\Gamma}$ and modality $\mu$.
Furthermore, given an atomic rensub $\sigma$ from $\hat{\Gamma}$ to $\hat{\Delta}$, we can construct a new, lifted atomic rensub
\begin{equation}
  \label{eq:sfmtt-arensub-lift}
  \lift{\sigma} := \weaken{\sigma} . \vzero{1_{\mu}}
\end{equation}
from $\extendsctx{\hat{\Gamma}}{\mu}$ to $\extendsctx{\hat{\Delta}}{\mu}$ (here $\vzero{1_\mu}$ is interpreted as a variable in the case of renamings and as an expression in the case of substitutions).
Finally, the lift and lock operations can be extended to regular rensubs by applying those operations to all constituent atomic rensubs.
In other words, we have
\begin{align*}
  \lift{\id} &:= \id & \lock{\id}{\mu} &:= \id \\
  \lift{(\rensubcons{\sigma}{\tau})} &:= \rensubcons{\lift{\sigma}}{\lift{\tau}} & \lock{(\rensubcons{\sigma}{\tau})}{\mu} &:= \rensubcons{(\lock{\sigma}{\mu})}{(\lock{\tau}{\mu})}.
\end{align*}

\subsection{Applying \subfree{} Substitutions}

\paragraph*{Atomic rensubs acting on non-variable expressions}
All cases for applying an atomic rensub to an \subfree{} expression that is not a variable are shown below.
These also include the cases that were omitted in Section 3.2.1 in the paper.

\begingroup\allowdisplaybreaks
\begin{align}
  \arensubexpr{\Bool}{\sigma} &= \Bool \\
  \arensubexpr{\true}{\sigma} &= \true \\
  \arensubexpr{\false}{\sigma} &= \false \\
  \arensubexpr{\ifexpr{A}{s}{t}{t'}}{\sigma} &= \nonumber \\*
                              && \mathllap{\ifexpr{\arensubexpr{A}{\lift{\sigma}}}{\arensubexpr{s}{\sigma}}{\arensubexpr{t}{\sigma}}{\arensubexpr{t'}{\sigma}}} \\
  \arensubexpr{\left(\modfunc{\mu}{A}{B}\right)}{\sigma} &= \modfunc{\mu}{\arensubexpr{A}{\lock{\sigma}{\mu}}}{\arensubexpr{B}{\lift{\sigma}}} \\
  \arensubexpr{\left(\lam{\mu}{t}\right)}{\sigma} &= \lam{\mu}{\arensubexpr{t}{\lift{\sigma}}} \label{eq:push-atomic-lam} \\
  \arensubexpr{\app{\mu}{f}{t}}{\sigma} &= \app{\mu}{\arensubexpr{f}{\sigma}}{\arensubexpr{t}{\lock{\sigma}{\mu}}} \label{eq:push-atomic-app} \\
  \arensubexpr{\modty{\mu}{A}}{\sigma} &= \modty{\mu}{\arensubexpr{A}{\lock{\sigma}{\mu}}} \\
  \arensubexpr{\modtm{\mu}{t}}{\sigma} &= \modtm{\mu}{\arensubexpr{t}{\lock{\sigma}{\mu}}} \label{eq:push-atomic-modtm} \\
  \arensubexpr{\letmod{\nu}{\mu}{A}{B}{t}{s}}{\sigma} &= \nonumber \\*
                              && \mathllap{\mathsf{letmod}_{\nu, \mu}\left(\arensubexpr{A}{\lock{\lock{\sigma}{\nu}}{\mu}}; \arensubexpr{B}{\lift{\sigma}}; \arensubexpr{t}{\lock{\sigma}{\nu}};\right.} \nonumber \\*
                              && \mathllap{\left. \arensubexpr{s}{\lift{\sigma}}\right)}
\end{align}
\endgroup

\paragraph*{Atomic rensubs acting on variables}

For easy reference in the proofs in the next sections, we recall the algorithm for applying an atomic rensub to a variable.
First of all, for applying a 2-cell to a variable, we have the following:
\begin{align}
  \transf{\Theta}{\Psi}{\alpha}{\vzero{\beta}} &= \vzero{(1_{\locks{\Lambda}} \star \alpha) \circ \beta} \label{eq:twocell-vzero} \\
  \transf{\Theta}{\Psi}{\alpha}{\suc{v}} &= \suc{\transf{\Theta}{\Psi}{\alpha}{v}}. \label{eq:twocell-vsuc}
\end{align}
The algorithm for applying a renaming to a variable is given by
\begin{align}
  \arenvar{v}{\ida}{\Lambda} &= v \label{eq:arenvar-id} \\
  \arenvar{v}{\weaken{\sigma}}{\Lambda} &= \suc{\arenvar{v}{\sigma}{\Lambda}} \label{eq:arenvar-weaken} \\
  \arenvar{v}{\lock{\sigma}{\mu}}{\Lambda} &= \arenvar{v}{\sigma}{\extendlocktele{\mu}{\Lambda}} \label{eq:arenvar-lock} \\
  \arenvar{v}{\key{\beta}{\Theta}{\Psi}{\hat{\Gamma}}}{\Lambda}
                             &= \transf{\extendsctx{\Theta}{\Lambda}}{\extendsctx{\Psi}{\Lambda}}{\beta \star 1_{\locks{\Lambda}}}{v} \label{eq:arenvar-key} \\
  \arenvar{\vzero{\alpha}}{\sigma . w}{\Lambda} &= \transf{\locksym_\mu}{\Lambda}{\alpha}{w} \label{eq:arenvar-extend-vzero} \\
  \arenvar{\suc{v}}{\sigma . w}{\Lambda} &= \arenvar{v}{\sigma}{\Lambda}. \label{eq:arenvar-extend-vsuc}
\end{align}
For atomic substitutions we have
\begingroup\allowdisplaybreaks%
\begin{align}
  \asubvar{v}{\ida}{\Lambda} &= v \label{eq:asubvar-id} \\
  \asubvar{v}{\weaken{\sigma}}{\Lambda} &= \arenexpr{\left(\asubvar{v}{\sigma}{\Lambda}\right)}{\extendsctx{\pi}{\Lambda}} \label{eq:asubvar-weaken} \\
  \asubvar{v}{\lock{\sigma}{\mu}}{\Lambda} &= \asubvar{v}{\sigma}{\extendlocktele{\mu}{\Lambda}} \label{eq:asubvar-lock} \\
  \asubvar{v}{\key{\beta}{\Theta}{\Psi}{\hat{\Gamma}}}{\Lambda}
                             &= \transf{\extendsctx{\Theta}{\Lambda}}{\extendsctx{\Psi}{\Lambda}}{\beta \star 1_{\locks{\Lambda}}}{v} \label{eq:asubvar-key} \\
  \asubvar{\vzero{\alpha}}{\sigma . t}{\Lambda} &= \arenexpr{t}{\key{\alpha}{\locksym_\mu}{\Lambda}{\hat{\Gamma}}} \label{eq:asubvar-extend-vzero} \\
  \asubvar{\suc{v}}{\sigma . t}{\Lambda} &= \asubvar{v}{\sigma}{\Lambda}. \label{eq:asubvar-extend-vsuc}
\end{align}%
\endgroup

\subsection{Relating WSMTT and SFMTT}
\label{sec:translation-embedding}

We present the full definitions of the translation function $\translate{\_}$ from WSMTT to SFMTT and the embedding function $\embed{\_}$ in the converse direction.
All interesting cases have been covered in the paper, but we include the definition here for easy reference.

\paragraph*{Translation from WSMTT to SFMTT}
\begin{align*}
  \translate{\modfunc{\mu}{A}{B}} &= \modfunc{\mu}{\translate{A}}{\translate{B}} & \translate{!} &= \rensubcons{\id}{!} \\
  \translate{\lam{\mu}{t}} &= \lam{\mu}{\translate{t}} & \translate{\id} &= \id \\
  \translate{\vzero{}} &= \vzero{1_\mu} & \translate{\pi} &= \rensubcons{\id}{\pi} \\
  \translate{\subexprmtt{t}{\sigma}} &= \subexpr{\translate{t}}{\translate{\sigma}} & \translate{\sigma \circ \tau} &= \concat{\translate{\sigma}}{\translate{\tau}} \\
  \translate{\Bool} &= \Bool & \translate{\lock{\sigma}{\mu}} &= \lock{\translate{\sigma}}{\mu} \\
  \translate{\true} &= \true & \translate{\key{\alpha}{\Theta}{\Psi}{\hat{\Gamma}}} &= \rensubcons{\id}{\key{\alpha}{\Theta}{\Psi}{\hat{\Gamma}}} \\
  \translate{\false} &= \false & \translate{\sigma . t} &= \rensubcons{\lift{\translate{\sigma}}}{(\ida . \translate{t})} \\
  \translate{\ifexpr{A}{s}{t}{t'}} &= \ifexpr{\translate{A}}{\translate{s}}{\translate{t}}{\translate{t'}} \\
  \translate{\app{\mu}{f}{t}} &= \app{\mu}{\translate{f}}{\translate{t}} \\
  \translate{\modty{\mu}{A}} &= \modty{\mu}{\translate{A}} \\
  \translate{\modtm{\mu}{t}} &= \modtm{\mu}{\translate{t}} \\
  \translate{\letmod{\nu}{\mu}{A}{B}{t}{s}} &= \letmod{\nu}{\mu}{\translate{A}}{\translate{B}}{\translate{t}}{\translate{s}}
\end{align*}

\paragraph*{Embedding of SFMTT into WSMTT}

For expressions we have the following.
\begingroup\allowdisplaybreaks%
\begin{align*}
  \embed{\vzero{\alpha}} &= \subexprmtt{\vzero{}}{\key{\alpha}{\locksym_\mu}{\Theta}{\hat{\Gamma}}} \\
  \embed{\suc{v}} &= \subexprmtt{\embed{v}}{\extendsctx{\pi}{\Theta}} \\
  \embed{\Bool} &= \Bool \\
  \embed{\true} &= \true \\
  \embed{\false} &= \false \\
  \embed{\ifexpr{A}{s}{t}{t'}} &= \ifexpr{\embed{A}}{\embed{s}}{\embed{t}}{\embed{t'}} \\
  \embed{\modfunc{\mu}{A}{B}} &= \modfunc{\mu}{\embed{A}}{\embed{B}} \\
  \embed{\lam{\mu}{t}} &= \lam{\mu}{\embed{t}} \\
  \embed{\app{\mu}{f}{t}} &= \app{\mu}{\embed{f}}{\embed{t}} \\
  \embed{\modty{\mu}{A}} &= \modty{\mu}{\embed{A}} \\
  \embed{\modtm{\mu}{t}} &= \modtm{\mu}{\embed{t}} \\
  \embed{\letmod{\nu}{\mu}{A}{B}{t}{s}} &= \letmod{\nu}{\mu}{\embed{A}}{\embed{B}}{\embed{t}}{\embed{s}}
\end{align*}
\endgroup
Embedding SFMTT rensubs (atomic and regular) to WSMTT substitutions is defined as follows.
\begin{align*}
  \embed{!} &= \emptysub{} & \embed{\key{\alpha}{\Lambda}{\Theta}{\hat{\Gamma}}} &= \key{\alpha}{\Lambda}{\Theta}{\hat{\Gamma}} \\
  \embed{\ida{}} &= \id & \embed{\sigma . t} &= \embed{\sigma} . \embed{t} \\
  \embed{\weaken{\sigma}} &= \embed{\sigma} \circ \pi & \embed{\id} &= \id \\
  \embed{\lock{\sigma}{\mu}} &= \lock{\embed{\sigma}}{\mu} & \embed{\rensubcons{\sigma}{\tau}} &= \embed{\sigma} \circ \embed{\tau}
\end{align*}

\section{Completeness}
\label{sec:completeness}

We want to prove the statement that our substitution algorithm is complete with respect to the notion of $\upsigma$-equivalence introduced in \cref{fig:sigma-equiv}.
In other words, whenever two WSMTT expressions are $\upsigma$-equivalent our algorithm should produce the same result.
\begin{theorem}
  \label{thm:completeness}
  If we can deduce $\sigmeqexpr{\hat{\Gamma}}{t}{s}{m}$, then we have that $\translate{t} = \translate{s}$.
\end{theorem}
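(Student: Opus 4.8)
The plan is to proceed by a single mutual induction on the derivation of $\upsigma$-equivalence, proving simultaneously the stated claim for expressions and the companion claim that $\sigmeqsub{\sigma}{\tau}{\hat{\Gamma}}{\hat{\Delta}}{m}$ implies $\translate{\sigma} = \translate{\tau}$; this is forced because the two judgements are defined by mutual recursion (e.g.\ \inlinerulename{wsmtt-eq-expr-cong-sub} mentions both). Three families of rules are immediate. The equivalence-relation rules hold because syntactic equality of SFMTT terms is itself reflexive, symmetric and transitive: \inlinerulename{wsmtt-eq-expr-refl} gives $\translate{t} = \translate{t}$, and symmetry/transitivity follow from the induction hypotheses. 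The congruence rules --- \inlinerulename{wsmtt-eq-expr-cong-sub}, \inlinerulename{wsmtt-eq-expr-cong-lam}, \inlinerulename{wsmtt-eq-expr-cong-app}, \inlinerulename{wsmtt-eq-sub-cong-compose}, \inlinerulename{wsmtt-eq-sub-cong-extend}, \inlinerulename{wsmtt-eq-sub-cong-lock}, together with the analogous omitted congruence rules for the remaining constructors --- follow by rewriting each subterm with its induction hypothesis, using that $\translate{\_}$ is defined compositionally on every constructor. What remains are the structural equations, and these are where the real content sits.

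Before attacking them I would establish two auxiliary lemmas about how $\translate{\_}$ interacts with the structural operations on substitutions: $\translate{\lock{\sigma}{\mu}} = \lock{\translate{\sigma}}{\mu}$, which is immediate from the corresponding clause in the definition of the translation, and $\translate{\lift{\sigma}} = \lift{\translate{\sigma}}$, which unfolds the WSMTT definition of $\lift{\_}$ from \eqref{eq:wsmtt-sub-lift} and compares it against the SFMTT definition \eqref{eq:sfmtt-arensub-lift} extended to regular rensubs. I would also record the monoid-style laws for $\concat{\_}{\_}$ (associativity, with unit $\id$), which discharge the category rules such as \inlinerulename{wsmtt-eq-sub-id-right} after unfolding $\translate{\sigma \circ \tau} = \concat{\translate{\sigma}}{\translate{\tau}}$ and $\translate{\id} = \id$. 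Finally, the functoriality rules \inlinerulename{wsmtt-eq-expr-sub-id} and \inlinerulename{wsmtt-eq-expr-sub-compose} reduce, via $\translate{\subexprmtt{t}{\sigma}} = \subexpr{\translate{t}}{\translate{\sigma}}$, to the SFMTT identity law $\subexpr{u}{\id} = u$ and to a substitution-composition lemma $\subexpr{\subexpr{u}{\rho}}{\rho'} = \subexpr{u}{\concat{\rho}{\rho'}}$; if the latter is not already available it is itself proved by induction on $u$, mutually with the corresponding statement for how composite regular rensubs act on variables.

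With the lifting and locking lemmas in place, the substitution-pushing rules become direct calculations matching the recursive clauses of the substitution-application algorithm. For \inlinerulename{wsmtt-eq-expr-lam-sub} the goal $\subexpr{(\lam{\mu}{\translate{t}})}{\translate{\sigma}} = \lam{\mu}{\subexpr{\translate{t}}{\lift{\translate{\sigma}}}}$ is exactly the lambda clause (cf.\ \eqref{eq:push-atomic-lam}) combined with the lifting lemma, and \inlinerulename{wsmtt-eq-expr-app-sub} similarly matches the application clause \eqref{eq:push-atomic-app} together with the locking lemma; the omitted push rules for the other constructors are handled uniformly by the same scheme. The CwF rules for context extension --- \inlinerulename{wsmtt-eq-expr-extend-var}, \inlinerulename{wsmtt-eq-sub-extend-weaken}, \inlinerulename{wsmtt-eq-sub-extend-eta} --- and the empty-context rule \inlinerulename{wsmtt-eq-sub-empty-unique} reduce to explicit runs of the variable-lookup algorithm on the translated extended substitutions; here the clause $\translate{\sigma . t} = \rensubcons{\lift{\translate{\sigma}}}{(\ida . \translate{t})}$ and the lookup equations \eqref{eq:asubvar-extend-vzero}--\eqref{eq:asubvar-extend-vsuc} do the work.

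The hard part will be the $2$-cell and pseudofunctor rules: \inlinerulename{wsmtt-eq-sub-lock-id}, \inlinerulename{wsmtt-eq-sub-lock-compose}, \inlinerulename{wsmtt-eq-sub-key-natural}, \inlinerulename{wsmtt-eq-sub-key-unit}, \inlinerulename{wsmtt-eq-sub-key-compose-vertical} and especially \inlinerulename{wsmtt-eq-sub-key-compose-horizontal}. Each requires turning an algebraic identity about $2$-cells and lock telescopes in the mode theory into an equality of SFMTT rensubs whose action is computed through the $2$-cell transport operation $\transf{\_}{\_}{\_}{\_}$ of \eqref{eq:twocell-vzero}--\eqref{eq:twocell-vsuc}. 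I expect to isolate a single functoriality lemma for $\transf{\_}{\_}{\_}{\_}$ stating that it respects identity $2$-cells, vertical composition $\beta \circ \alpha$ and horizontal composition $\beta \star \alpha$, proved by induction on the variable; given this, each key rule becomes a bookkeeping computation that unfolds $\translate{\key{\alpha}{\Theta}{\Psi}{\hat{\Gamma}}} = \rensubcons{\id}{\key{\alpha}{\Theta}{\Psi}{\hat{\Gamma}}}$ and checks its interaction with $\concat{\_}{\_}$ and with extension and locking. Getting this transport-functoriality lemma stated and proved at the right level of generality --- covering lock telescopes rather than single locks, exactly as \inlinerulename{wsmtt-eq-sub-key-natural} already does --- is the main obstacle I anticipate.
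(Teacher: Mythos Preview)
Your mutual-induction outline has a genuine gap: the companion claim for substitutions, that $\sigmeqsub{\sigma}{\tau}{\hat{\Gamma}}{\hat{\Delta}}{m}$ implies $\translate{\sigma} = \translate{\tau}$ as \emph{syntactic} equality of SFMTT substitutions, is simply false. Take \inlinerulename{wsmtt-eq-sub-key-unit}: the left side translates to the one-element list $\rensubcons{\id}{\key{1_{\locks{\Lambda}}}{\Lambda}{\Lambda}{\hat{\Gamma}}}$ while the right side translates to the empty list $\id$. Likewise for \inlinerulename{wsmtt-eq-sub-extend-eta}: $\translate{\sigma}$ and $\translate{(\pi \circ \sigma).(\subexprmtt{\vzero{}}{\lock{\sigma}{\mu}})}$ are lists of different shapes. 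Your auxiliary lemma $\translate{\lift{\sigma}} = \lift{\translate{\sigma}}$ fails for the same reason: unfolding \eqref{eq:wsmtt-sub-lift} gives $\translate{\lift{\sigma}} = \rensubcons{\rensubcons{\lift{\translate{\sigma}}}{\lift{\pi}}}{(\ida . \vzero{1_\mu})}$, which is $\lift{\translate{\sigma}}$ with two extra atomic substitutions snoc'd on, not equal to it. So the induction cannot even get off the ground with this hypothesis.

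The fix the paper adopts is to weaken the substitution claim to an \emph{observational} equivalence $\sfsigmeqsub{\translate{\sigma}}{\translate{\tau}}$, meaning the two SFMTT substitutions act identically on every expression. This is exactly strong enough for \inlinerulename{wsmtt-eq-expr-cong-sub} (you only need $\subexpr{\translate{t}}{\translate{\sigma_1}} = \subexpr{\translate{t}}{\translate{\sigma_2}}$), but it is no longer immediate that the substitution constructors respect it, so one must separately prove that $\locksym_\mu$, $\lift{(-)}$, and snoc with $\concat{}{}$ preserve observational equivalence. More substantially, to \emph{establish} observational equivalence in each inductive case one needs a tractable criterion: the paper reduces it first to agreement on variables after extending by an arbitrary scoping telescope, and then---via commutation lemmas between $\pi$ and lifted atomic rensubs---to agreement on variables after extending by a mere \emph{lock} telescope. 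Only with that machinery in hand do the CwF and key rules become the kind of variable-level calculations you sketch in your final two paragraphs; your transport-functoriality lemma for $\transf{\_}{\_}{\_}{\_}$ is morally right, but it lives inside this observational framework rather than at the level of syntactic equality of rensub lists.
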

Before we can prove this theorem, we need some technical machinery that will be developed in the next sections.

\subsection{Observational Equivalence of \subfree{} Substitutions}
\label{sec:observ-equiv}

\subsubsection{Definition \& Proof Technique (Part 1)}

Recall that $\upsigma$-equivalence for WSMTT expressions is defined mutually recursively with $\upsigma$-equivalence for WSMTT substitutions (see \cref{fig:sigma-equiv}).
Therefore, in order to prove \cref{thm:completeness}, we need to first extend it so as to also make a claim about $\upsigma$-equivalent WSMTT substitutions.
However, in \subfree{}, syntactic equality of substitutions is not a good notion of equivalence. Instead, we will use the following:
\begin{definition}[Observational equivalence]
  \label{def:sfmtt-sub-sigma-equiv}
  We say that two \subfree{} substitutions $\sfsub{\sigma, \tau}{\hat{\Gamma}}{\hat{\Delta}}{m}$ are observationally equivalent when $\subexpr{t}{\sigma} = \subexpr{t}{\tau}$ for every expression $\sfexpr{\hat{\Delta}}{t}{m}$.
  We will write this as $\sfsigmeqsub{\sigma}{\tau}$.
\end{definition}
Note that $\sfsigmeqsubsym$ is clearly an equivalence relation.
The requirement for two \subfree{} substitutions to be observationally equivalent is quite strong.
In order to prove this, we will make use of the technique outlined in \cref{prop:sfmtt-sigma-equiv-var,prop:sfmtt-sigma-equiv-var-locktele}.
Both propositions refer to general scoping telescopes which may contain both variables and locks, see \cref{fig:scoping-tele} for their definition.
We will refer to such scoping telescopes with the Greek letter $\Phi$.
They also act on \subfree{} substitutions in the following way.
\begin{align*}
  \addtele{\sigma}{\emptyctx{}} &= \sigma \\
  \addtele{\sigma}{(\extendsctx{\Phi}{\mu})} &= \lift{(\addtele{\sigma}{\Phi})} \\
  \addtele{\sigma}{(\lock{\Phi}{\mu})} &= \lock{(\addtele{\sigma}{\Phi})}{\mu}
\end{align*}
(Recall that the $\text{\locksym{}}_\mu$ and $^+$ operations on \subfree{} substitutions apply the corresponding operations to all atomic substitutions.)
In other words, whenever $\sfsub{\sigma}{\hat{\Gamma}}{\hat{\Delta}}{m}$ is an SFMTT substitution and $\Phi : \Tele{n}{m}$ a scoping telescope, we get an SFMTT substitution $\sfsub{\addtele{\sigma}{\Phi}}{\addtele{\hat{\Gamma}}{\Phi}}{\addtele{\hat{\Delta}}{\Phi}}{n}$.

\begin{figure}
  \centering

  \hspace*{\fill}
  \inferrule{stele-empty}{%
    \prftree{
      \prfassumption{\phantom{\Phi}}
    }{
      \emptyctx : \Tele{m}{m}}}
  \hfill
  \inferrule{stele-extend}{%
    \prftree{
      \prfassumption{\Phi : \Tele{m}{n}}}{
      \prfassumption{\mu : o \to m}}{
      \extendsctx{\Phi}{\mu} : \Tele{m}{n}}}
  \hfill
  \inferrule{stele-lock}{%
    \prftree{
      \prfassumption{\Phi : \Tele{m}{n}}}{
      \prfassumption{\mu : o \to m}}{
      \lock{\Phi}{\mu} : \Tele{o}{n}}}
  \hspace*{\fill}
  \\[5pt]
  \hspace*{\fill}
  $\addtele{\hat{\Gamma}}{\emptyctx{}} = \hat{\Gamma}$
  \hfill
  $\addtele{\hat{\Gamma}}{(\extendsctx{\Phi}{\mu})} = \extendsctx{(\addtele{\hat{\Gamma}}{\Phi})}{\mu}$
  \hfill
  $\addtele{\hat{\Gamma}}{(\lock{\Phi}{\mu})} = \lock{(\addtele{\hat{\Gamma}}{\Phi})}{\mu}$
  \hspace*{\fill}
  
  \caption{Definition of scoping telescopes and how to append them to a scoping context (note that a scoping telescope $\Phi : \Tele{m}{n}$ can be appended to a scoping context at mode $n$ to obtain a scoping context at mode $m$)}
  \label{fig:scoping-tele}
\end{figure}

\begin{proposition}
  \label{prop:sfmtt-sigma-equiv-var}
  Let $\sfsub{\sigma, \tau}{\hat{\Gamma}}{\hat{\Delta}}{m}$ be two \subfree{} substitutions and suppose that $\subexpr{v}{\addtele{\sigma}{\Phi}} = \subexpr{v}{\addtele{\tau}{\Phi}}$ for every scoping telescope $\Phi : \Tele{n}{m}$ and every variable $\sfvar{\extendsctx{\hat{\Delta}}{\Phi}}{v}{n}$.
  Then $\sfsigmeqsub{\sigma}{\tau}$.
\end{proposition}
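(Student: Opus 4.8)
The plan is to strengthen the statement so that it survives passing under binders and locks, and then induct on the structure of the expression. Concretely, I would prove: for every scoping telescope $\Phi : \Tele{n}{m}$ and every expression $\sfexpr{\addtele{\hat{\Delta}}{\Phi}}{t}{n}$, we have $\subexpr{t}{\addtele{\sigma}{\Phi}} = \subexpr{t}{\addtele{\tau}{\Phi}}$. The proposition is then the special case $\Phi = \emptyctx$, since $\addtele{\sigma}{\emptyctx} = \sigma$ and $\addtele{\tau}{\emptyctx} = \tau$, which is exactly what \Cref{def:sfmtt-sub-sigma-equiv} demands. Generalising over $\Phi$ up front is essential: whenever the induction descends into a subexpression sitting under a new variable or lock, the telescope grows, and the (universally quantified) induction hypothesis must already accommodate the longer telescope.

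The induction is on the raw syntax of $t$. The base case $t = v$ a variable is handled directly: $\subexpr{v}{\addtele{\sigma}{\Phi}} = \subexpr{v}{\addtele{\tau}{\Phi}}$ is precisely the hypothesis of the proposition, instantiated at $\Phi$ and $v$. The leaves $\Bool$, $\true$ and $\false$ are immediate, since substitution leaves them unchanged. For every compound constructor, the substitution algorithm commutes with the constructor, recursing on each subexpression with the ambient substitution either lifted (when the subexpression sits under a context extension) or locked (when it sits under a $\locksym_\mu$); these commutation equations are recorded for atomic rensubs in \cref{eq:push-atomic-lam,eq:push-atomic-app,eq:push-atomic-modtm} and the neighbouring lines, and extend routinely to regular substitutions.

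The point that makes the recursion close up is that lifting and locking of a telescope-extended substitution are themselves telescope extensions. Indeed, the defining clauses of $\addtele{\_}{\_}$ on substitutions (given just before \cref{fig:scoping-tele}) say that $\lift{(\addtele{\sigma}{\Phi})} = \addtele{\sigma}{(\extendsctx{\Phi}{\mu})}$ and $\lock{(\addtele{\sigma}{\Phi})}{\mu} = \addtele{\sigma}{(\lock{\Phi}{\mu})}$ for the relevant $\mu$. Hence, for instance when $t = \lam{\mu}{t'}$, the algorithm gives $\subexpr{(\lam{\mu}{t'})}{\addtele{\sigma}{\Phi}} = \lam{\mu}{\subexpr{t'}{\lift{(\addtele{\sigma}{\Phi})}}} = \lam{\mu}{\subexpr{t'}{\addtele{\sigma}{(\extendsctx{\Phi}{\mu})}}}$, and the induction hypothesis applied to the strictly smaller $t'$ at the enlarged telescope $\extendsctx{\Phi}{\mu}$ equates this with the analogous expression for $\tau$. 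The remaining compound cases $\modfunc{\mu}{A}{B}$, $\app{\mu}{f}{t}$, $\modty{\mu}{A}$, $\modtm{\mu}{t}$, $\ifexpr{A}{s}{t}{t'}$ and $\letmod{\nu}{\mu}{A}{B}{t}{s}$ follow the same pattern, sometimes chaining the two identities above (e.g.\ the $A$-argument of \textsc{letmod} lives under $\lock{(\lock{\Phi}{\nu})}{\mu}$, and its $B$- and $s$-arguments under $\extendsctx{\Phi}{\nu}$ and $\extendsctx{\Phi}{(\nu \circ \mu)}$).

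I expect the main obstacle to be bureaucratic rather than conceptual: checking that the commutation-with-constructors equations, stated in the excerpt only for atomic rensubs, genuinely hold for regular substitutions, and keeping the telescope bookkeeping consistent in the cases with several binders and locks (above all \textsc{letmod}). The former is a routine secondary induction on the snoc-structure of the regular substitution $\addtele{\sigma}{\Phi}$, reducing to the atomic equations; the latter is mechanical once the two telescope identities are available. No genuinely new idea is needed beyond the up-front generalisation over $\Phi$, which is what lets a statement about variables alone control the action on arbitrary expressions.
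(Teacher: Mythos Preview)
Your proposal is correct and follows essentially the same approach as the paper: strengthen the claim to all scoping telescopes $\Phi$, then induct on the structure of $t$, using the identities $\lift{(\addtele{\sigma}{\Phi})} = \addtele{\sigma}{(\extendsctx{\Phi}{\mu})}$ and $\lock{(\addtele{\sigma}{\Phi})}{\mu} = \addtele{\sigma}{(\lock{\Phi}{\mu})}$ to close up the inductive step. Your discussion of the bureaucratic points (extending the atomic commutation equations to regular substitutions, and the multi-binder bookkeeping in \textsc{letmod}) is more explicit than the paper's, but the underlying argument is the same.
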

\begin{proof}
  We will prove that $\subexpr{t}{\addtele{\sigma}{\Phi}} = \subexpr{t}{\addtele{\tau}{\Phi}}$ for all $\Phi : \Tele{n}{m}$ and all expressions $\sfexpr{\extendsctx{\hat{\Delta}}{\Phi}}{t}{n}$.
  The result then follows by taking $\Phi$ to be the empty scoping telescope.

  The proof proceeds by induction and case analysis on the expression $t$.
  We will describe only a few cases since there is a lot of similarity.
  \begin{itemize}
  \item \case{} $\sfexpr{\extendsctx{\hat{\Delta}}{\Phi}}{v}{n}$ for some $\sfvar{\extendsctx{\hat{\Delta}}{\Phi}}{v}{n}$ (\inlinerulename{sf-expr-var}) \\
    In this case the assumptions of the proposition we are proving tell us exactly that $\subexpr{v}{\addtele{\sigma}{\Phi}} = \subexpr{v}{\addtele{\tau}{\Phi}}$.
  \item \case{} $\sfexpr{\extendsctx{\hat{\Delta}}{\Phi}}{\lam{\mu}{t}}{n}$ for some $\sfexpr{\extendsctx{\extendsctx{\hat{\Delta}}{\Phi}}{\mu}}{t}{n}$ (\inlinerulename{sf-expr-lam}) \\
    Recall that an \subfree{} substitution is just a sequence of atomic \subfree{} substitutions which are applied sequentially to an expression.
    Following \cref{eq:push-atomic-lam} each of these atomic substitutions will be pushed through the $\lambda^\mu$ constructor, applying a lifting ($^+$) to that atomic substitution.
    Since the lifting of regular \subfree{} substitutions applies the lifting to all its constituent atomic substitutions, we have that
    \[
      \subexpr{\left(\lam{\mu}{t}\right)}{\addtele{\sigma}{\Phi}} = \lam{\mu}{\subexpr{t}{\lift{(\addtele{\sigma}{\Phi})}}} = \lam{\mu}{\subexpr{t}{\addtele{\sigma}{(\extendsctx{\Phi}{\mu})}}},
    \]
    and similar for $\tau$.
    We can now apply the induction hypothesis for the structurally smaller term $t$ to obtain that $\subexpr{t}{\addtele{\sigma}{(\extendsctx{\Phi}{\mu})}} = \subexpr{t}{\addtele{\tau}{(\extendsctx{\Phi}{\mu})}}$.
  \item \case{} $\sfexpr{\extendsctx{\hat{\Delta}}{\Phi}}{\modtm{\mu}{t}}{n}$ for some $\sfexpr{\lock{\extendsctx{\hat{\Delta}}{\Phi}}{\mu}}{t}{o}$ (\inlinerulename{sf-expr-mod-tm}) \\
    We can follow a similar style of reasoning as in the previous case, taking into account that applying a lock to a regular \subfree{} substitution applies that lock to all constituent atomic substitutions.
    Using \cref{eq:push-atomic-modtm} for every atomic substitution, we then get that
    \[
      \subexpr{\left(\modtm{\mu}{t}\right)}{\addtele{\sigma}{\Phi}} = \modtm{\mu}{\subexpr{t}{\lock{(\addtele{\sigma}{\Phi})}{\mu}}} = \modtm{\mu}{\subexpr{t}{\addtele{\sigma}{(\lock{\Phi}{\mu})}}},
    \]
    and similar for $\tau$.
    The induction hypothesis for $t$ gives us that $\subexpr{t}{\addtele{\sigma}{(\lock{\Phi}{\mu})}} = \subexpr{t}{\addtele{\tau}{(\lock{\Phi}{\mu})}}$.
    \qedhere
  \end{itemize}
\end{proof}

\subsubsection{Mixed Sequences of Atomic Rensubs}

Using \cref{prop:sfmtt-sigma-equiv-var} to prove observational equivalence is still far from trivial.
Therefore, \cref{prop:sfmtt-sigma-equiv-var-locktele} will relax the requirement so that we only have to check the equality of substituted variables after extending the context with an arbitrary \emph{lock} telescopes instead of a \emph{scoping} telescope.
However, in order to prove this proposition we will need some auxiliary results.

First of all, we will formulate a generalisation of \cref{prop:sfmtt-sigma-equiv-var} that applies to sequences consisting of both atomic renamings and atomic substitutions.
This generalisation is needed in the proof of \cref{prop:sfmtt-sigma-equiv-var-locktele}, but also in the completeness proof itself.
We define such mixed sequences in \cref{fig:mix-arensub-seq}.
That figure also contains definitions for the operations of lifting a sequence, applying a lock to a sequence, applying a sequence to an SFMTT expression, and applying a scoping telescope to a sequence.
These operations just apply the corresponding operations to the constituent atomic renamings and substitutions.
To distinguish a mixed sequence from atomic or regular rensubs, we will refer to such a sequence with an overlined Greek letter (so e.g.\ $\bar{\sigma}$).

\begin{figure}
  \centering

  \hspace*{\fill}
  \inferrule{sf-mix-id}{%
    \prftree{
      \prfassumption{
        \begin{array}{l}
          \phantom{\hat{\Gamma}} \\
          \sctx{\hat{\Gamma}}{m}
        \end{array}
      }}{
      \sfmixseq{\idm{}}{\hat{\Gamma}}{\hat{\Gamma}}{m}}
  }
  \hfill
  \inferrule{sf-mix-aren}{%
    \prftree{
      \prfassumption{
        \begin{array}{l}
          \sfmixseq{\bar{\sigma}}{\hat{\Delta}}{\hat{\Xi}}{m} \\
          \sfaren{\tau}{\hat{\Gamma}}{\hat{\Delta}}{m}
        \end{array}
      }}{
      \sfmixseq{\mixsnocren{\bar{\sigma}}{\tau}}{\hat{\Gamma}}{\hat{\Xi}}{m}}
  }
  \hfill
  \inferrule{sf-mix-asub}{%
    \prftree{
      \prfassumption{
        \begin{array}{l}
          \sfmixseq{\bar{\sigma}}{\hat{\Delta}}{\hat{\Xi}}{m} \\
          \sfasub{\tau}{\hat{\Gamma}}{\hat{\Delta}}{m}
        \end{array}
      }}{
      \sfmixseq{\mixsnocsub{\bar{\sigma}}{\tau}}{\hat{\Gamma}}{\hat{\Xi}}{m}}
  }
  \hspace*{\fill}
  \\[2pt]
  \begingroup\setlength{\mathindent}{0cm}
  \begin{align*}
    \lift{(\idm{})} &:= \idm
    & \lift{(\mixsnocren{\bar{\sigma}}{\tau})} &:= \mixsnocren{\lift{\bar{\sigma}}}{\lift{\tau}}
    & \lift{(\mixsnocsub{\bar{\sigma}}{\tau})} &:= \mixsnocsub{\lift{\bar{\sigma}}}{\lift{\tau}}
    \\[2pt]
    \lock{\idm{}}{\mu} &:= \idm
    & \lock{(\mixsnocren{\bar{\sigma}}{\tau})}{\mu} &:= \mixsnocren{\lock{\bar{\sigma}}{\mu}}{\lock{\tau}{\mu}}
    & \lock{(\mixsnocsub{\bar{\sigma}}{\tau})}{\mu} &:= \mixsnocsub{\lock{\bar{\sigma}}{\mu}}{\lock{\tau}{\mu}}
    \\[2pt]
    \seqexpr{t}{\idm} &:= t
    & \seqexpr{t}{\mixsnocren{\bar{\sigma}}{\tau}} &:= \arenexpr{\seqexpr{t}{\bar{\sigma}}}{\tau}
    & \seqexpr{t}{\mixsnocsub{\bar{\sigma}}{\tau}} &:= \asubexpr{\seqexpr{t}{\bar{\sigma}}}{\tau}
    \\[10pt]
    \addtele{\bar{\sigma}}{\emptytele} &:= \bar{\sigma}
    & \addtele{\bar{\sigma}}{(\extendsctx{\Phi}{\mu})} &:= \lift{(\addtele{\bar{\sigma}}{\Phi})}
    & \addtele{\bar{\sigma}}{(\lock{\Phi}{\mu})} &:= \lock{(\addtele{\bar{\sigma}}{\Phi})}{\mu}
  \end{align*}
  \endgroup
  
  \caption{Definition of mixed sequences of atomic rensubs and associated operations of lifting, locking and application to an SFMTT expression. We also show how to apply a scoping telescope to a mixed sequence.}
  \label{fig:mix-arensub-seq}
\end{figure}

\begin{proposition}
  \label{prop:mixseq-equiv-tele}
  Let $\sfmixseq{\bar{\sigma}, \bar{\tau}}{\hat{\Gamma}}{\hat{\Delta}}{m}$ be two mixed sequences of atomic renamings and substitutions and suppose that $\seqexpr{v}{\addtele{\bar{\sigma}}{\Phi}} = \seqexpr{v}{\addtele{\bar{\tau}}{\Phi}}$ for every scoping telescope $\Phi : \Tele{n}{m}$ and every variable $\sfvar{\extendsctx{\hat{\Delta}}{\Phi}}{v}{n}$.
  Then $\seqexpr{t}{\bar{\sigma}} = \seqexpr{t}{\bar{\tau}}$ for every SFMTT expression $\sfexpr{\hat{\Delta}}{t}{m}$.
\end{proposition}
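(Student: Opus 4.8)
The plan is to follow the same strategy as in the proof of \cref{prop:sfmtt-sigma-equiv-var}, generalising from regular substitutions to mixed sequences of atomic renamings and substitutions. As there, I would not attack the stated claim directly but instead prove the stronger statement that $\seqexpr{t}{\addtele{\bar{\sigma}}{\Phi}} = \seqexpr{t}{\addtele{\bar{\tau}}{\Phi}}$ for every scoping telescope $\Phi : \Tele{n}{m}$ and every expression $\sfexpr{\extendsctx{\hat{\Delta}}{\Phi}}{t}{n}$. The proposition then follows by instantiating $\Phi$ with the empty telescope. This strengthening is exactly what lets the induction go through, since the cases for binders enlarge the telescope.

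The argument proceeds by induction and case analysis on the expression $t$. In the variable case (\inlinerulename{sf-expr-var}) the desired equality is precisely the hypothesis of the proposition. For each non-variable constructor I would push the mixed sequence $\addtele{\bar{\sigma}}{\Phi}$ through the head symbol and reduce to the induction hypotheses for the immediate subexpressions. The crucial observation is that the push-through equations \cref{eq:push-atomic-lam}, \cref{eq:push-atomic-app} and \cref{eq:push-atomic-modtm} (together with the analogous, unlabelled ones for the remaining constructors) are phrased uniformly for atomic rensubs, so they apply verbatim to both the atomic renamings and the atomic substitutions occurring in $\bar{\sigma}$. Concretely, for a $\lambda$-abstraction one obtains $\seqexpr{(\lam{\mu}{t})}{\addtele{\bar{\sigma}}{\Phi}} = \lam{\mu}{\seqexpr{t}{\lift{(\addtele{\bar{\sigma}}{\Phi})}}} = \lam{\mu}{\seqexpr{t}{\addtele{\bar{\sigma}}{(\extendsctx{\Phi}{\mu})}}}$, and for $\modtm{\mu}{t}$ the analogous identity with a lock in place of a lift; applying the induction hypothesis with telescope $\extendsctx{\Phi}{\mu}$ (respectively $\lock{\Phi}{\mu}$) then closes the case.

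The only genuinely new bookkeeping compared to \cref{prop:sfmtt-sigma-equiv-var} is that ``pushing the sequence through a constructor'' now hides a small auxiliary induction on the structure of the mixed sequence itself. Unfolding $\seqexpr{-}{-}$ via \inlinerulename{sf-mix-aren} and \inlinerulename{sf-mix-asub} applies one atomic rensub at a time, and at each step I would invoke the relevant push-through equation together with the defining clauses for $\lift{(-)}$ and $\lock{(-)}{\mu}$ on mixed sequences from \cref{fig:mix-arensub-seq}. Since those clauses apply the corresponding lift or lock to every constituent atomic rensub, this sub-induction is entirely mechanical and behaves identically in the \inlinerulename{sf-mix-aren} and \inlinerulename{sf-mix-asub} steps. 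I expect this uniform handling of renamings and substitutions to be the only point requiring care; everything else is a routine replay of the earlier proof, so I would write out just the two representative binder cases ($\lambda$ and $\modtm{\mu}{-}$) and remark that all other constructors are treated identically.
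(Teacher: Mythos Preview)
Your proposal is correct and follows exactly the approach the paper takes: the paper's proof consists of the single sentence ``The reasoning is exactly the same as in the proof of \cref{prop:sfmtt-sigma-equiv-var},'' and you have faithfully unfolded what that means, including the small auxiliary induction on the mixed sequence needed to push atomic rensubs through constructors one at a time.
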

\begin{proof}
  The reasoning is exactly the same as in the proof of \cref{prop:sfmtt-sigma-equiv-var}.
\end{proof}

\subsubsection{Action of Lifted Atomic Rensubs on Variables}

\begin{lemma}
  \label{lem:lift-aren-var}
  Given an atomic renaming $\sfaren{\sigma}{\hat{\Gamma}}{\hat{\Delta}}{m}$ and a lock telescope $\Lambda : \Locktele{n}{m}$, we have that
  $\arenexpr[\Lambda]{\vzero{\alpha}}{\lift{\sigma}} = \vzero{\alpha}$ and $\arenexpr[\Lambda]{\suc{v}}{\lift{\sigma}} = \suc{\arenexpr[\Lambda]{v}{\sigma}}$.
\end{lemma}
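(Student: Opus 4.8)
The plan is to prove both equations by a direct computation that unfolds the definition of $\lift{\sigma}$ and then runs the renaming-on-variable algorithm; neither equation needs induction, as each reduces in finitely many steps. Throughout I treat the variable to which the renaming is applied as an expression, using that the expression-level action $\arenexpr[\Lambda]{\cdot}{\cdot}$ of an atomic renaming on a variable returns that variable's image under the variable-level action $\arenvar{\cdot}{\cdot}{\Lambda}$. By \cref{eq:sfmtt-arensub-lift} we have $\lift{\sigma} = \weaken{\sigma} . \vzero{1_\mu}$, where $\mu$ is the modality over which we lift (so that $\alpha : \mu \Rightarrow \locks{\Lambda}$); hence in both cases the outermost shape of $\lift{\sigma}$ is an extension $\sigma' . w$ with $\sigma' = \weaken{\sigma}$ and $w = \vzero{1_\mu}$.

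For the successor equation I would first apply the extension rule \cref{eq:arenvar-extend-vsuc}, which discards the extended variable $w$ and the outer $\suc{\cdot}$, giving $\arenvar{\suc{v}}{\lift{\sigma}}{\Lambda} = \arenvar{v}{\weaken{\sigma}}{\Lambda}$. The weakening rule \cref{eq:arenvar-weaken} then rewrites the right-hand side to $\suc{\arenvar{v}{\sigma}{\Lambda}}$, which is exactly $\suc{\arenexpr[\Lambda]{v}{\sigma}}$. This case is thus a two-step unfolding with no residual obligations.

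For the zero equation I would apply the other extension rule \cref{eq:arenvar-extend-vzero}, obtaining $\arenvar{\vzero{\alpha}}{\lift{\sigma}}{\Lambda} = \transf{\locksym_\mu}{\Lambda}{\alpha}{\vzero{1_\mu}}$; that is, the 2-cell $\alpha$ is transported onto the freshly introduced variable $\vzero{1_\mu}$, transforming its surrounding lock telescope from $\locksym_\mu$ to $\Lambda$. Because $\vzero{1_\mu}$ sits under exactly the single lock $\locksym_\mu$, which coincides with the source telescope of the transformation, the residual inner telescope appearing on the right of \cref{eq:twocell-vzero} is empty; hence that rule produces $\vzero{(1_{\unitMod} \star \alpha) \circ 1_\mu}$, the residual composite being $\locks{\emptyctx} = \unitMod$ and the annotation of $\vzero{1_\mu}$ being $1_\mu$.

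The only genuine content is the final simplification of this annotation, which is pure bookkeeping in the strict 2-category $\catM$: whiskering by the identity 2-cell on the unit modality is trivial, $1_{\unitMod} \star \alpha = \alpha$, and vertical post-composition with an identity 2-cell is trivial, $\alpha \circ 1_\mu = \alpha$, so the annotation collapses to $\alpha$ and we conclude $\arenvar{\vzero{\alpha}}{\lift{\sigma}}{\Lambda} = \vzero{\alpha}$. I expect the main (and only mild) obstacle to be the 2-cell typing---checking that $\alpha : \mu \Rightarrow \locks{\Lambda}$ together with $\locks{\locksym_\mu} = \mu$ makes each composite well-formed, and confirming that it is the empty-residual instance of \cref{eq:twocell-vzero} that fires---rather than any deeper difficulty.
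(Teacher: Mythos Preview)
Your proposal is correct and follows essentially the same approach as the paper: unfold $\lift{\sigma} = \weaken{\sigma}.\vzero{1_\mu}$, then for $\suc{v}$ apply \cref{eq:arenvar-extend-vsuc} followed by \cref{eq:arenvar-weaken}, and for $\vzero{\alpha}$ apply \cref{eq:arenvar-extend-vzero} followed by \cref{eq:twocell-vzero} and the strict 2-category identities $(1_{\unitMod} \star \alpha) \circ 1_\mu = \alpha$. The only cosmetic difference is that the paper treats the zero case first; your extra remarks about the empty residual telescope and the 2-cell typing are accurate and make explicit what the paper leaves implicit.
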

Note that we will no longer include $\mathsf{var}$ in the subscript of $\arenvar{v}{\sigma}{\Lambda}$ but just write $\arenexpr[\Lambda]{v}{\sigma}$.
\begin{proof}
  Recall that $\lift{\sigma}$ is defined as $\weaken{\sigma}.\vzero{1_\mu}$.
  We can then compute that
  \[
    \arenexpr[\Lambda]{\vzero{\alpha}}{\lift{\sigma}}
    = \arenexpr[\Lambda]{\vzero{\alpha}}{\weaken{\sigma}.\vzero{1_\mu}}
    = \transf{\locksym_\mu}{\Lambda}{\alpha}{\vzero{1_\mu}},
  \]
  where the last step makes use of \cref{eq:arenvar-extend-vzero}.
  By the definition of $\transf{\_}{\_}{\_}{\_}$ (see \cref{eq:twocell-vzero}), this last expression is equal to
  $\vzero{(1_\unitMod \star \alpha) \circ 1_\mu}$.
  From the laws of a strict 2-category, it follows that $(1_\unitMod \star \alpha) \circ 1_\mu = \alpha$ so the variable we obtain is really $\vzero{\alpha}$.

  In the case for $\suc{v}$, we can compute that
  \begin{align*}
    \arenexpr[\Lambda]{\suc{v}}{\lift{\sigma}}
    &= \arenexpr[\Lambda]{\suc{v}}{\weaken{\sigma}.\vzero{1_\mu}} \\
    &= \arenexpr[\Lambda]{v}{\weaken{\sigma}} && \text{(\cref{eq:arenvar-extend-vsuc})} \\
    &= \suc{\arenexpr[\Lambda]{v}{\sigma}}. && \text{(\cref{eq:arenvar-weaken})} && \qedhere
  \end{align*}
\end{proof}
Repeatedly applying \cref{lem:lift-aren-var} and realising that the lifting of a regular renaming consists of the liftings of its individual atomic renamings, one can see that the statement of \cref{lem:lift-aren-var} also holds for regular renamings.

For atomic substitutions we have the following result.
\begin{lemma}
  \label{lem:lift-asub-var}
  Given an atomic substitution $\sfasub{\sigma}{\hat{\Gamma}}{\hat{\Delta}}{m}$ and a lock telescope $\Lambda : \Locktele{n}{m}$, we have that $\asubexpr[\Lambda]{\vzero{\alpha}}{\lift{\sigma}} = \vzero{\alpha}$ and $\asubexpr[\Lambda]{\suc{v}}{\lift{\sigma}} = \arenexpr[\Lambda]{\asubexpr[\Lambda]{v}{\sigma}}{\pi}$ for every $\sfvar{\addtele{\hat{\Delta}}{\Lambda}}{v}{n}$.
\end{lemma}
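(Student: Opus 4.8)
The plan is to mirror the proof of \cref{lem:lift-aren-var}: I would split on the two shapes of variable and in each case simply unfold the definition $\lift{\sigma} = \weaken{\sigma} . \vzero{1_\mu}$ together with the recursive clauses for applying an atomic substitution to a variable, so that no induction on $\sigma$ or $v$ is needed. For the first claim I would start from $\asubexpr[\Lambda]{\vzero{\alpha}}{\lift{\sigma}} = \asubexpr[\Lambda]{\vzero{\alpha}}{\weaken{\sigma} . \vzero{1_\mu}}$ and rewrite it with \cref{eq:asubvar-extend-vzero} to $\arenexpr{\vzero{1_\mu}}{\key{\alpha}{\locksym_\mu}{\Lambda}{\extendsctx{\hat{\Gamma}}{\mu}}}$. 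The one genuine difference with the renaming case is that we now obtain a \emph{key renaming} acting on $\vzero{1_\mu}$ rather than an immediate $2$-cell application, so one extra unfolding is required: \cref{eq:arenvar-key} with empty accumulator turns this into $\transf{\extendsctx{\locksym_\mu}{\emptyctx}}{\extendsctx{\Lambda}{\emptyctx}}{\alpha \star 1_{\locks{\emptyctx}}}{\vzero{1_\mu}}$, and using $\locks{\emptyctx} = \unitMod$ together with the strict $2$-category law $\alpha \star 1_{\unitMod} = \alpha$ this collapses to $\transf{\locksym_\mu}{\Lambda}{\alpha}{\vzero{1_\mu}}$. This is exactly the expression reached in the proof of \cref{lem:lift-aren-var}, so the same final step (\cref{eq:twocell-vzero} followed by $(1_{\unitMod} \star \alpha) \circ 1_\mu = \alpha$) yields $\vzero{\alpha}$.

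For the second claim I would again expand $\lift{\sigma}$ and peel off the extension using \cref{eq:asubvar-extend-vsuc}, obtaining $\asubexpr[\Lambda]{v}{\weaken{\sigma}}$, and then apply the weakening clause \cref{eq:asubvar-weaken} to reach $\arenexpr{\left(\asubexpr[\Lambda]{v}{\sigma}\right)}{\extendsctx{\pi}{\Lambda}}$, that is, the weakening renaming locked by $\Lambda$ applied to the already-substituted variable. It then remains to identify this with the right-hand side $\arenexpr[\Lambda]{\asubexpr[\Lambda]{v}{\sigma}}{\pi}$ of the lemma.

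I expect this last identification to be the main point needing care: I must confirm that applying the $\Lambda$-locked renaming $\extendsctx{\pi}{\Lambda}$ with empty accumulator coincides with applying $\pi$ with accumulator $\Lambda$. This is precisely what iterating \cref{eq:arenvar-lock} delivers — each lock of $\Lambda$ migrates from the renaming into the accumulator — and it lifts from variables to arbitrary expressions by the compositional definition of the renaming action. Beyond this lock-versus-accumulator compatibility, the only other delicate step is the $2$-categorical bookkeeping in the first case, namely keeping the whiskerings and units straight so that $\alpha \star 1_{\unitMod} = \alpha$ applies cleanly; both are routine once the definitions are unfolded, and the argument stays a direct calculation throughout.
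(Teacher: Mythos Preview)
Your proposal is correct and follows essentially the same route as the paper: unfold $\lift{\sigma} = \weaken{\sigma}.\vzero{1_\mu}$, then use \cref{eq:asubvar-extend-vzero}, \cref{eq:arenvar-key}, \cref{eq:twocell-vzero} for the first claim and \cref{eq:asubvar-extend-vsuc}, \cref{eq:asubvar-weaken} for the second. The only difference is that the ``identification'' you flag at the end is not an extra step in the paper: there $\arenexpr[\Lambda]{t}{\pi}$ is simply notation for $\arenexpr{t}{\addtele{\pi}{\Lambda}}$ (as made explicit in the proof of \cref{lem:pi-lift-commute-ren}), so \cref{eq:asubvar-weaken} already lands directly on the right-hand side.
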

\begin{proof}
  For $\vzero{\alpha}$ the computation proceeds as follows.
  \begin{align*}
    \asubexpr[\Lambda]{\vzero{\alpha}}{\lift{\sigma}}
    &= \asubexpr[\Lambda]{\vzero{\alpha}}{\weaken{\sigma}.\vzero{1_\mu}} \\
    &= \arenexpr{\vzero{1_\mu}}{\key{\alpha}{\locksym_\mu}{\Lambda}{\extendsctx{\hat{\Gamma}}{\mu}}} && \text{(\cref{eq:asubvar-extend-vzero})} \\
    &= \transf{\locksym_\mu}{\Lambda}{\alpha}{\vzero{1_\mu}} && \text{(\cref{eq:arenvar-key})} \\
    &= \vzero{(1_\unitMod \star \alpha) \circ 1_\mu} && \text{(\cref{eq:twocell-vzero})} \\
    &= \vzero{\alpha}
  \end{align*}

  For $\suc{v}$ we have
  \begin{align*}
    \asubexpr[\Lambda]{\suc{v}}{\lift{\sigma}}
    &= \asubexpr[\Lambda]{\suc{v}}{\weaken{\sigma}.\vzero{1_\mu}} \\
    &= \asubexpr[\Lambda]{v}{\weaken{\sigma}} && \text{(\cref{eq:asubvar-extend-vsuc})} \\
    &= \arenexpr[\Lambda]{\asubexpr[\Lambda]{v}{\sigma}}{\pi}. && \text{(\cref{eq:asubvar-weaken})} && \qedhere
  \end{align*}
\end{proof}

\subsubsection{Lifted Atomic Rensubs and \texorpdfstring{$\pi$}{π}}

\begin{lemma}
  \label{lem:pi-lift-commute-ren}
  Let $\Phi : \Tele{n}{m}$ be a scoping telescope, $\sfaren{\sigma}{\hat{\Gamma}}{\hat{\Delta}}{m}$ an atomic \subfree{} renaming and $\sfexpr{\extendsctx{\hat{\Delta}}{\Phi}}{t}{n}$ an expression.
  Then $\arenexpr{\arenexpr{t}{\addtele{\pi}{\Phi}}}{\addtele{\lift{\sigma}}{\Phi}} = \arenexpr{\arenexpr{t}{\addtele{\sigma}{\Phi}}}{\addtele{\pi}{\Phi}}$.
\end{lemma}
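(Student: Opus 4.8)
The plan is to reduce the statement about an arbitrary expression $t$ to the corresponding statement about variables via \cref{prop:mixseq-equiv-tele}, and then to settle the variable case by a dedicated induction. First I would observe that both sides are applications of a length-two mixed sequence of atomic renamings to $t$, with the telescope $\Phi$ factored out. Writing $\bar{\rho}_L := \mixsnocren{(\mixsnocren{\idm}{\pi})}{\lift{\sigma}}$ and $\bar{\rho}_R := \mixsnocren{(\mixsnocren{\idm}{\sigma})}{\pi}$ (both from $\extendsctx{\hat{\Gamma}}{\mu}$ to $\hat{\Delta}$, where $\mu$ is the modality along which $\pi$ and $\lift{\sigma}$ are formed), a routine induction on $\Phi$ using that lifting and locking of mixed sequences distribute over snoc and fix $\idm$ (see \cref{fig:mix-arensub-seq}) gives $\addtele{\bar{\rho}_L}{\Phi} = \mixsnocren{(\mixsnocren{\idm}{\addtele{\pi}{\Phi}})}{\addtele{\lift{\sigma}}{\Phi}}$ and similarly for $\bar{\rho}_R$. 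Unfolding $\seqexpr{\_}{\_}$ then identifies the left-hand side of the lemma with $\seqexpr{t}{\addtele{\bar{\rho}_L}{\Phi}}$ and the right-hand side with $\seqexpr{t}{\addtele{\bar{\rho}_R}{\Phi}}$.

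Next I would apply \cref{prop:mixseq-equiv-tele} to $\addtele{\bar{\rho}_L}{\Phi}$ and $\addtele{\bar{\rho}_R}{\Phi}$. Since $\addtele{\_}{\_}$ is associative with respect to appending telescopes (again a routine induction), the variable hypothesis of the proposition instantiated at a further telescope $\Phi'$ is exactly an instance, at the concatenated telescope, of the single claim: for every scoping telescope $\Psi : \Tele{n}{m}$ and every variable $\sfvar{\addtele{\hat{\Delta}}{\Psi}}{v}{n}$ one has $\arenexpr{\arenexpr{v}{\addtele{\pi}{\Psi}}}{\addtele{\lift{\sigma}}{\Psi}} = \arenexpr{\arenexpr{v}{\addtele{\sigma}{\Psi}}}{\addtele{\pi}{\Psi}}$. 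Hence it suffices to prove this variable case for all scoping telescopes $\Psi$, as that makes the hypothesis of \cref{prop:mixseq-equiv-tele} hold for every $\Phi$.

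For the variable case I would strengthen the statement to carry a lock-telescope accumulator $\Lambda$ matching the $\arenvar{\_}{\_}{\_}$ algorithm: for all $\Lambda$, all $\Psi$ and all variables $v$ over $\addtele{(\addtele{\hat{\Delta}}{\Psi})}{\Lambda}$, $\arenexpr[\Lambda]{\arenexpr[\Lambda]{v}{\addtele{\pi}{\Psi}}}{\addtele{\lift{\sigma}}{\Psi}} = \arenexpr[\Lambda]{\arenexpr[\Lambda]{v}{\addtele{\sigma}{\Psi}}}{\addtele{\pi}{\Psi}}$, and induct on $\Psi$, recovering the needed claim by setting $\Lambda = \emptytele$. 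In the empty case I compute $\arenexpr[\Lambda]{v}{\pi} = \suc{v}$ from \cref{eq:arenvar-weaken,eq:arenvar-id}, then use \cref{lem:lift-aren-var} to get $\arenexpr[\Lambda]{\suc{v}}{\lift{\sigma}} = \suc{\arenexpr[\Lambda]{v}{\sigma}}$; the right-hand side equals $\suc{\arenexpr[\Lambda]{v}{\sigma}}$ by the same $\pi$-computation applied to the variable $\arenexpr[\Lambda]{v}{\sigma}$. In the extension case $\Psi = \extendsctx{\Psi'}{\rho}$ each of the four renaming applications is of a lifted rensub, so I split $v$ into $\vzero{\alpha}$ and $\suc{w}$ and apply \cref{lem:lift-aren-var}: the former reduces both sides to $\vzero{\alpha}$, the latter peels off a $\suc{}$ on each side and closes by the induction hypothesis for $\Psi'$ with the same $\Lambda$. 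In the lock case $\Psi = \lock{\Psi'}{\rho}$, \cref{eq:arenvar-lock} absorbs the lock into the accumulator, rewriting both sides as the instance of the strengthened statement for $\Psi'$ with accumulator $\extendlocktele{\rho}{\Lambda}$, which is the induction hypothesis.

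The main obstacle is precisely this lock case: a naive induction on $\Psi$ (or directly on $\Phi$) without the accumulator does not close, because applying a locked renaming to a variable is defined by $\arenexpr[\Lambda]{v}{\lock{\kappa}{\rho}} = \arenexpr[\extendlocktele{\rho}{\Lambda}]{v}{\kappa}$ rather than by commuting the lock past the renaming, so the lock has to be threaded through the accumulator; this is what forces the generalization over $\Lambda$. Everything else is bookkeeping, the one point requiring care being the contravariant direction and the fact that the two occurrences of $\pi$ are the weakenings of $\hat{\Delta}$ and of $\hat{\Gamma}$ respectively, both of which satisfy $\arenexpr[\Lambda]{\_}{\pi} = \suc{\_}$ so that the empty case matches up.
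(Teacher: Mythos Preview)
Your proposal is correct and follows essentially the same approach as the paper: reduce to variables via \cref{prop:mixseq-equiv-tele}, then handle the variable case by an induction that repeatedly appeals to \cref{lem:lift-aren-var}. The only difference is organizational: the paper inducts on the number of variables in the telescope (decomposing $\Phi$ as either a pure lock telescope $\Lambda$ or as $\extendsctx{\extendsctx{\Phi'}{\rho}}{\Lambda}$, with the trailing $\Lambda$ absorbed into the superscript at each step), whereas you perform a structural induction on $\Psi$ carrying an explicit lock accumulator; these are equivalent rearrangements of the same computation.
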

\begin{proof}
  We use \cref{prop:mixseq-equiv-tele} with the two sequences $\bar{\sigma}$ and $\bar{\tau}$ each consisting of the two atomic renamings on both sides of the lemma.
  In other words, we need to prove that $\arenexpr{\arenexpr{v}{\addtele{\pi}{\Phi}}}{\addtele{\lift{\sigma}}{\Phi}} = \arenexpr{\arenexpr{v}{\addtele{\sigma}{\Phi}}}{\addtele{\pi}{\Phi}}$ for every variable $\sfvar{\addtele{\hat{\Delta}}{\Phi}}{v}{n}$.
  We will do this by induction on the number of variables in $\Phi$.
  \begin{itemize}
  \item \case{} $\Phi = \Lambda$, so $\Phi$ contains only locks. \\
    Now we can compute that
    \begin{align*}
      \arenexpr{\arenexpr{v}{\addtele{\pi}{\Lambda}}}{\addtele{\lift{\sigma}}{\Lambda}}
      &= \arenexpr[\Lambda]{\arenexpr[\Lambda]{v}{\pi}}{\lift{\sigma}} \\
      &= \arenexpr[\Lambda]{\suc{v}}{\lift{\sigma}} \\
      &= \suc{\arenexpr[\Lambda]{v}{\sigma}} && \text{(\cref{lem:lift-aren-var})} \\
      &= \arenexpr[\Lambda]{\arenexpr[\Lambda]{v}{\sigma}}{\pi} \\
      &= \arenexpr{\arenexpr{v}{\addtele{\sigma}{\Lambda}}}{\addtele{\pi}{\Lambda}}
    \end{align*}
  \item \case{} $\Phi = \extendsctx{\extendsctx{\Phi'}{\rho}}{\Lambda}$ \\
    We now have to distinguish two cases for the variable $v$.
    \begin{itemize}
    \item \case{} $v = \vzero{\alpha}$ \\
      The computations go as follows.
      \begin{align*}
        &\arenexpr{\arenexpr{\vzero{\alpha}}{\addtele{\pi}{\extendsctx{\extendsctx{\Phi'}{\rho}}{\Lambda}}}}{\addtele{\lift{\sigma}}{\extendsctx{\extendsctx{\Phi'}{\rho}}{\Lambda}}} \\
        &= \arenexpr[\Lambda]{\arenexpr[\Lambda]{\vzero{\alpha}}{\lift{(\addtele{\pi}{\Phi'})}}}{\lift{(\addtele{\lift{\sigma}}{\Phi'})}} \\
        &= \arenexpr[\Lambda]{\vzero{\alpha}}{\lift{(\addtele{\lift{\sigma}}{\Phi'})}} && \text{(\cref{lem:lift-aren-var})} \\
        &= \vzero{\alpha} && \text{(\cref{lem:lift-aren-var})}
      \end{align*}
      \begin{align*}
        &\arenexpr{\arenexpr{\vzero{\alpha}}{\addtele{\sigma}{\extendsctx{\extendsctx{\Phi'}{\rho}}{\Lambda}}}}{\addtele{\pi}{\extendsctx{\extendsctx{\Phi'}{\rho}}{\Lambda}}} \\
        &= \arenexpr[\Lambda]{\arenexpr[\Lambda]{\vzero{\alpha}}{\lift{(\addtele{\sigma}{\Phi'})}}}{\lift{(\addtele{\pi}{\Phi'})}} \\
        &= \arenexpr[\Lambda]{\vzero{\alpha}}{\lift{(\addtele{\pi}{\Phi'})}} && \text{(\cref{lem:lift-aren-var})} \\
        &= \vzero{\alpha} && \text{(\cref{lem:lift-aren-var})}
      \end{align*}
    \item \case{} $v = \suc{v'}$ \\
      Now we can compute
      \begin{align*}
        &\arenexpr{\arenexpr{\suc{v'}}{\addtele{\pi}{\extendsctx{\extendsctx{\Phi'}{\rho}}{\Lambda}}}}{\addtele{\lift{\sigma}}{\extendsctx{\extendsctx{\Phi'}{\rho}}{\Lambda}}} \\
        &= \arenexpr[\Lambda]{\arenexpr[\Lambda]{\suc{v'}}{\lift{(\addtele{\pi}{\Phi'})}}}{\lift{(\addtele{\lift{\sigma}}{\Phi'})}} \\
        &= \arenexpr[\Lambda]{\suc{\arenexpr[\Lambda]{v'}{\addtele{\pi}{\Phi'}}}}{\lift{(\addtele{\lift{\sigma}}{\Phi'})}} && \text{(\cref{lem:lift-aren-var})} \\
        &= \suc{\arenexpr{\arenexpr{v'}{\addtele{\addtele{\pi}{\Phi'}}{\Lambda}}}{\addtele{\addtele{\lift{\sigma}}{\Phi'}}{\Lambda}}} && \text{(\cref{lem:lift-aren-var})}
      \end{align*}
      \begin{align*}
        &\arenexpr{\arenexpr{\suc{v'}}{\addtele{\sigma}{\extendsctx{\extendsctx{\Phi'}{\rho}}{\Lambda}}}}{\addtele{\pi}{\extendsctx{\extendsctx{\Phi'}{\rho}}{\Lambda}}} \\
        &= \arenexpr[\Lambda]{\arenexpr[\Lambda]{\suc{v'}}{\left\lift{(\addtele{\sigma}{\Phi'}\right)}}}{\lift{(\addtele{\pi}{\Phi'})}} \\
        &= \arenexpr[\Lambda]{\suc{\arenexpr[\Lambda]{v'}{\addtele{\sigma}{\Phi'}}}}{\lift{(\addtele{\pi}{\Phi'})}} && \text{(\cref{lem:lift-aren-var})} \\
        &= \suc{\arenexpr{\arenexpr{v'}{\addtele{\addtele{\sigma}{\Phi'}}{\Lambda}}}{\addtele{\addtele{\pi}{\Phi'}}{\Lambda}}}. && \text{(\cref{lem:lift-aren-var})}
      \end{align*}
      Hence the result directly follows from the induction hypothesis with scoping telescope $\extendsctx{\Phi'}{\Lambda}$ (which has one variable less than $\Phi$). \qedhere
    \end{itemize}
  \end{itemize}
\end{proof}

\begin{corollary}
  \label{cor:pi-lift-commute-ren-general}
  Let $\Phi_1 : \Tele{n}{m}$ and $\Phi_2 : \Tele{o}{n}$
  be two scoping telescopes, $\sfasub{\sigma}{\hat{\Gamma}}{\hat{\Delta}}{m}$ an atomic substitution and $\sfexpr{\addtele{\hat{\Delta}}{\extendsctx{\Phi_1}{\Phi_2}}}{t}{o}$ an SFMTT expression.
  Then we have that $\arenexpr{\arenexpr{t}{\addtele{\pi}{\Phi_2}}}{\addtele{\sigma}{\addtele{\addtele{\Phi_1}{\mu}}{\Phi_2}}} = \arenexpr{\arenexpr{t}{\addtele{\addtele{\sigma}{\Phi_1}}{\Phi_2}}}{\addtele{\pi}{\Phi_2}}$.
\end{corollary}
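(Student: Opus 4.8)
The plan is to recognise this corollary as the \emph{substitution} analogue of \cref{lem:pi-lift-commute-ren}, equipped with an extra ``inner'' telescope $\Phi_1$, and to dispose of $\Phi_1$ before doing any real work. The key observation is that lifting and locking an \emph{atomic} substitution again produce an atomic substitution, so $\sigma_1 := \addtele{\sigma}{\Phi_1}$ is an atomic substitution from $\addtele{\hat{\Gamma}}{\Phi_1}$ to $\addtele{\hat{\Delta}}{\Phi_1}$. Using that appending telescopes is associative (a routine induction over the clauses in \cref{fig:mix-arensub-seq}) together with the defining clause that appending a single $\mu$-variable performs a lift, I would rewrite the two substitutions occurring in the statement as $\addtele{\sigma}{(\addtele{\addtele{\Phi_1}{\mu}}{\Phi_2})} = \addtele{\lift{\sigma_1}}{\Phi_2}$ and $\addtele{\addtele{\sigma}{\Phi_1}}{\Phi_2} = \addtele{\sigma_1}{\Phi_2}$. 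This rewriting reduces the corollary to its special case $\Phi_1 = \emptytele$, now phrased purely in terms of the atomic substitution $\sigma_1$ and the single telescope $\Phi_2$.

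It then remains to prove, for an atomic substitution $\sigma_1$ and telescope $\Phi_2$, the identity $\asubexpr{\arenexpr{t}{\addtele{\pi}{\Phi_2}}}{\addtele{\lift{\sigma_1}}{\Phi_2}} = \arenexpr{\asubexpr{t}{\addtele{\sigma_1}{\Phi_2}}}{\addtele{\pi}{\Phi_2}}$. This has exactly the shape of \cref{lem:pi-lift-commute-ren}, so I would follow that proof line for line. Each side is a mixed sequence consisting of one atomic renaming ($\addtele{\pi}{\Phi_2}$) and one atomic substitution ($\addtele{\lift{\sigma_1}}{\Phi_2}$, respectively $\addtele{\sigma_1}{\Phi_2}$), so \cref{prop:mixseq-equiv-tele} lets me reduce the claim from arbitrary expressions $t$ to arbitrary variables $v$, after extension by an arbitrary scoping telescope $\Psi$.

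The variable computation then proceeds by induction on the number of variables in $\Psi$, exactly as in \cref{lem:pi-lift-commute-ren}. In the base case $\Psi$ is a pure lock telescope $\Lambda$; here I would first simplify $\arenexpr[\Lambda]{v}{\pi}$ to $\suc{v}$, obtaining $\asubexpr[\Lambda]{\suc{v}}{\lift{\sigma_1}}$ on the left, and invoke \cref{lem:lift-asub-var} — rather than \cref{lem:lift-aren-var} — to rewrite this as $\arenexpr[\Lambda]{\asubexpr[\Lambda]{v}{\sigma_1}}{\pi}$, which is visibly the right-hand side. The step-up case ($\Psi = \extendsctx{\extendsctx{\Phi'}{\rho}}{\Lambda}$) splits on whether $v = \vzero{\alpha}$, in which case both sides collapse to $\vzero{\alpha}$ by \cref{lem:lift-asub-var,lem:lift-aren-var}, or $v = \suc{v'}$, in which case both sides peel off a single $\suc{}$ and the equality follows from the induction hypothesis applied to the shorter telescope $\extendsctx{\Phi'}{\Lambda}$.

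The only genuinely delicate part is the telescope algebra of the first paragraph: one must check that $\addtele{\addtele{\Phi_1}{\mu}}{\Phi_2}$ really denotes $\Phi_1$ extended by a $\mu$-variable and then by $\Phi_2$, and that absorbing $\Phi_1$ into $\sigma_1$ leaves untouched the boundary variable of modality $\mu$ that $\pi$ weakens away. Everything past this reduction is a faithful copy of the proof of \cref{lem:pi-lift-commute-ren}, with \cref{lem:lift-asub-var} supplying the single new ingredient, namely the $\pi$ that appears when a lifted atomic \emph{substitution} meets a successor variable.
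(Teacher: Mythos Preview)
Your first paragraph is exactly right and is, in fact, the entire proof: setting $\sigma_1 := \addtele{\sigma}{\Phi_1}$ and noting $\addtele{\sigma}{(\extendsctx{\Phi_1}{\mu})} = \lift{\sigma_1}$, the claim becomes literally the statement of \cref{lem:pi-lift-commute-ren} with $\sigma_1$ in place of $\sigma$ and $\Phi_2$ in place of $\Phi$. This is precisely what the paper does.

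You have been misled by a typo in the statement: the hypothesis $\sfasub{\sigma}{\hat{\Gamma}}{\hat{\Delta}}{m}$ should read $\sfaren{\sigma}{\hat{\Gamma}}{\hat{\Delta}}{m}$, i.e.\ $\sigma$ is an atomic \emph{renaming}. This is clear from the equation, which uses $\arenexpr{-}{-}$ throughout, and from the only use of the corollary (in the proof of \cref{lem:pi-lift-commute-sub}, instantiated with $\sigma = \pi$). With $\sigma$ a renaming, $\sigma_1$ is also a renaming and there is nothing left to do after your reduction.

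Your paragraphs 2--3 instead attack the \emph{substitution} analogue, which is \cref{lem:pi-lift-commute-sub}, and there your sketch has a genuine gap. In the case $v = \suc{v'}$ with $\Psi = \extendsctx{\extendsctx{\Phi'}{\rho}}{\Lambda}$ the two sides do \emph{not} simply ``peel off a single $\suc{}$'': applying \cref{lem:lift-asub-var} to $\suc{-}$ yields $\arenexpr[\Lambda]{-}{\pi}$ rather than $\suc{-}$, so the right-hand side becomes $\arenexpr[\Lambda]{\arenexpr[\Lambda]{\asubexpr[\Lambda]{v'}{\addtele{\sigma_1}{\Phi'}}}{\pi}}{\lift{(\addtele{\pi}{\Phi'})}}$, while the left-hand side becomes $\arenexpr[\Lambda]{\asubexpr[\Lambda]{\arenexpr[\Lambda]{v'}{\addtele{\pi}{\Phi'}}}{\addtele{\lift{\sigma_1}}{\Phi'}}}{\pi}$. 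After the induction hypothesis you still need to commute the two outer $\pi$-renamings on the right-hand side, which is exactly an instance of the present corollary (for renamings). This is not circular, since \cref{lem:pi-lift-commute-ren} is already available, but it is a missing step; the paper's proof of \cref{lem:pi-lift-commute-sub} invokes the corollary at precisely this point.
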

\begin{proof}
  This follows directly from \cref{lem:pi-lift-commute-ren} by taking $\sigma$ to be $\addtele{\sigma}{\Phi_1}$ and $\Phi$ to be $\Phi_2$, and realising that $\extendsctx{\addtele{\sigma}{\Phi_1}}{\mu} = \lift{(\addtele{\sigma}{\Phi_1})}$.
\end{proof}

We also need a result like \cref{lem:pi-lift-commute-ren}, but where $\sigma$ is an atomic substitution instead of an atomic renaming.
\begin{lemma}
  \label{lem:pi-lift-commute-sub}
  Let $\Phi : \Tele{n}{m}$ be a scoping telescope, $\sfasub{\sigma}{\hat{\Gamma}}{\hat{\Delta}}{m}$ an atomic \subfree{} substitution and $\sfexpr{\extendsctx{\hat{\Delta}}{\Phi}}{t}{n}$ an expression.
  Then $\asubexpr{\arenexpr{t}{\addtele{\pi}{\Phi}}}{\addtele{\lift{\sigma}}{\Phi}} = \arenexpr{\asubexpr{t}{\addtele{\sigma}{\Phi}}}{\addtele{\pi}{\Phi}}$.
\end{lemma}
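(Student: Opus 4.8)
The plan is to follow the same strategy as the proof of \cref{lem:pi-lift-commute-ren}, now genuinely using that \cref{prop:mixseq-equiv-tele} applies to \emph{mixed} sequences of atomic renamings and substitutions. First I would read the two sides of the equation as such mixed sequences: the left-hand side $\asubexpr{\arenexpr{t}{\addtele{\pi}{\Phi}}}{\addtele{\lift{\sigma}}{\Phi}}$ is the action of the sequence ``apply the renaming $\pi$, then the substitution $\lift{\sigma}$'', whereas the right-hand side $\arenexpr{\asubexpr{t}{\addtele{\sigma}{\Phi}}}{\addtele{\pi}{\Phi}}$ is ``apply the substitution $\sigma$, then the renaming $\pi$''. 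By \cref{prop:mixseq-equiv-tele} it then suffices to prove the equation on variables, i.e.\ that $\asubexpr{\arenexpr{v}{\addtele{\pi}{\Phi}}}{\addtele{\lift{\sigma}}{\Phi}} = \arenexpr{\asubexpr{v}{\addtele{\sigma}{\Phi}}}{\addtele{\pi}{\Phi}}$ for every scoping telescope $\Phi$ and every variable $\sfvar{\addtele{\hat{\Delta}}{\Phi}}{v}{n}$, which I would establish by induction on the number of variables occurring in $\Phi$.

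In the base case $\Phi = \Lambda$ consists only of locks, and the computation is short: after absorbing the trailing locks via \cref{eq:arenvar-lock,eq:asubvar-lock}, the inner renaming $\arenexpr[\Lambda]{v}{\pi}$ reduces to $\suc{v}$ using \cref{eq:arenvar-weaken,eq:arenvar-id}, and then \cref{lem:lift-asub-var} turns $\asubexpr[\Lambda]{\suc{v}}{\lift{\sigma}}$ into $\arenexpr[\Lambda]{\asubexpr[\Lambda]{v}{\sigma}}{\pi}$, which is exactly the right-hand side. In the inductive step $\Phi = \extendsctx{(\extendsctx{\Phi'}{\rho})}{\Lambda}$ I would split on the shape of $v$. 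For $v = \vzero{\alpha}$ both sides collapse to $\vzero{\alpha}$: the lifted renamings fix $\vzero{\alpha}$ by \cref{lem:lift-aren-var} and the lifted substitutions fix it by \cref{lem:lift-asub-var} (both apply directly, since lifting and locking an atomic rensub again yields an atomic one, so every rensub occurring here is atomic).

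The hard part will be the case $v = \suc{v'}$, where the asymmetry between \cref{lem:lift-aren-var,lem:lift-asub-var} becomes visible: applying a lifted renaming to a successor merely peels off a $\suc$, whereas applying a lifted \emph{substitution} to a successor leaves behind an extra weakening $\pi$ on the outside. Reducing both sides with \cref{lem:lift-aren-var,lem:lift-asub-var}, absorbing the locks $\Lambda$, and invoking the induction hypothesis for $v'$ with the shorter telescope $\extendsctx{\Phi'}{\Lambda}$ (one fewer variable than $\Phi$), I expect both sides to reduce to the single expression $w := \asubexpr[\Lambda]{v'}{\addtele{\sigma}{\Phi'}}$ acted on by the two renamings $\pi$ and $\addtele{\pi}{\Phi'}$ composed in the two opposite orders. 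These residual mismatches are reconciled by the \emph{renaming} commutation result \cref{lem:pi-lift-commute-ren} itself: since $\addtele{\pi}{\Phi'}$ is again an atomic renaming, \cref{lem:pi-lift-commute-ren} applies to $w$ with that renaming and lock telescope $\Lambda$ and closes the gap. The main delicacy is therefore purely bookkeeping: tracking exactly where the stray $\pi$ produced by \cref{lem:lift-asub-var} lands on each side, and confirming that the leftover discrepancy is precisely an instance of the already-established renaming lemma.
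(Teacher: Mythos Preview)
Your proposal is correct and follows essentially the same route as the paper: reduce to variables via \cref{prop:mixseq-equiv-tele}, induct on the number of variables in $\Phi$, handle the base case and the $\vzero{\alpha}$ subcase with \cref{lem:lift-aren-var,lem:lift-asub-var}, and in the $\suc{v'}$ subcase close the residual renaming mismatch using \cref{lem:pi-lift-commute-ren}. The paper packages that last step as \cref{cor:pi-lift-commute-ren-general} (instantiated with $\Phi_1 = \Phi'$, $\Phi_2 = \Lambda$ and renaming $\pi$), which is exactly your direct invocation of \cref{lem:pi-lift-commute-ren} with atomic renaming $\addtele{\pi}{\Phi'}$ and telescope $\Lambda$; note that on the right-hand side the outer renaming is really $\addtele{\lift{(\addtele{\pi}{\Phi'})}}{\Lambda}$ rather than $\addtele{\pi}{\Phi'}$ itself, which is precisely what makes \cref{lem:pi-lift-commute-ren} applicable.
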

\begin{proof}
  The proof is similar to that of \cref{lem:pi-lift-commute-ren}.
  We make use of \cref{prop:mixseq-equiv-tele}, and now we really have two sequences both consisting of an atomic renaming and an atomic substitution.
  Hence, we have to show that $\asubexpr{\arenexpr{v}{\addtele{\pi}{\Phi}}}{\addtele{\lift{\sigma}}{\Phi}} = \arenexpr{\asubexpr{v}{\addtele{\sigma}{\Phi}}}{\addtele{\pi}{\Phi}}$ for every variable $\sfvar{\extendsctx{\hat{\Delta}}{\Phi}}{v}{n}$.
  We will do this by induction on the number of variables in the scoping telescope $\Phi$.
  \begin{itemize}
  \item \case{} $\Phi = \Lambda$, so $\Phi$ contains no variables. \\
    Now we can compute that
    \begin{align*}
      \asubexpr{\arenexpr{v}{\addtele{\pi}{\Lambda}}}{\addtele{\lift{\sigma}}{\Lambda}}
      &= \asubexpr[\Lambda]{\arenexpr[\Lambda]{v}{\pi}}{\lift{\sigma}} \\
      &= \asubexpr[\Lambda]{\suc{v}}{\lift{\sigma}} \\
      &= \arenexpr[\Lambda]{\asubexpr[\Lambda]{v}{\sigma}}{\pi} && \text{(\cref{lem:lift-asub-var})} \\
      &= \arenexpr{\asubexpr{v}{\addtele{\sigma}{\Lambda}}}{\addtele{\pi}{\Lambda}}.
    \end{align*}
  \item \case{} $\Phi = \extendsctx{\extendsctx{\Phi'}{\rho}}{\Lambda}$ \\
    We now have to distinguish two cases for the variable $v$.
    \begin{itemize}
    \item \case{} $v = \vzero{\alpha}$ \\
      The computations go as follows.
      \begin{align*}
        &\asubexpr{\arenexpr{\vzero{\alpha}}{\addtele{\pi}{\extendsctx{\extendsctx{\Phi'}{\rho}}{\Lambda}}}}{\addtele{\lift{\sigma}}{\extendsctx{\extendsctx{\Phi'}{\rho}}{\Lambda}}} \\
        &= \asubexpr[\Lambda]{\arenexpr[\Lambda]{\vzero{\alpha}}{\lift{(\addtele{\pi}{\Phi'})}}}{\lift{(\addtele{\lift{\sigma}}{\Phi'})}} \\
        &= \asubexpr[\Lambda]{\vzero{\alpha}}{\lift{(\addtele{\lift{\sigma}}{\Phi'})}} && \text{(\cref{lem:lift-aren-var})} \\
        &= \vzero{\alpha} && \text{(\cref{lem:lift-asub-var})}
      \end{align*}
      \begin{align*}
        &\arenexpr{\asubexpr{\vzero{\alpha}}{\addtele{\sigma}{\extendsctx{\extendsctx{\Phi'}{\rho}}{\Lambda}}}}{\addtele{\pi}{\extendsctx{\extendsctx{\Phi'}{\rho}}{\Lambda}}} \\
        &= \arenexpr[\Lambda]{\asubexpr[\Lambda]{\vzero{\alpha}}{\lift{(\addtele{\sigma}{\Phi'})}}}{\lift{(\addtele{\pi}{\Phi'})}} \\
        &= \arenexpr[\Lambda]{\vzero{\alpha}}{\lift{(\addtele{\pi}{\Phi'})}} && \text{(\cref{lem:lift-asub-var})} \\
        &= \vzero{\alpha} && \text{(\cref{lem:lift-aren-var})}
      \end{align*}
    \item \case{} $v = \suc{v'}$ \\
      Now we can compute
      \begin{align*}
        &\asubexpr{\arenexpr{\suc{v'}}{\addtele{\pi}{\extendsctx{\extendsctx{\Phi'}{\rho}}{\Lambda}}}}{\addtele{\lift{\sigma}}{\extendsctx{\extendsctx{\Phi'}{\rho}}{\Lambda}}} \\
        &= \asubexpr[\Lambda]{\arenexpr[\Lambda]{\suc{v'}}{\lift{(\addtele{\pi}{\Phi'})}}}{\lift{(\addtele{\lift{\sigma}}{\Phi'})}} \\
        &= \asubexpr[\Lambda]{\suc{\arenexpr[\Lambda]{v'}{\addtele{\pi}{\Phi'}}}}{\lift{(\addtele{\lift{\sigma}}{\Phi'})}} && \text{(\cref{lem:lift-aren-var})} \\
        &= \arenexpr{\asubexpr{\arenexpr{v'}{\addtele{\pi}{\extendsctx{\Phi'}{\Lambda}}}}{\addtele{\lift{\sigma}}{\extendsctx{\Phi'}{\Lambda}}}}{\addtele{\pi}{\Lambda}} && \text{(\cref{lem:lift-asub-var})}
      \end{align*}
      \begin{align*}
        &\arenexpr{\asubexpr{\suc{v'}}{\addtele{\sigma}{\extendsctx{\extendsctx{\Phi'}{\rho}}{\Lambda}}}}{\addtele{\pi}{\extendsctx{\extendsctx{\Phi'}{\rho}}{\Lambda}}} \\
        &= \arenexpr{\asubexpr[\Lambda]{\suc{v'}}{\left\lift{(\addtele{\sigma}{\Phi'}\right)}}}{\addtele{\pi}{\extendsctx{\extendsctx{\Phi'}{\rho}}{\Lambda}}} \\
        &= \arenexpr{\arenexpr{\asubexpr{v'}{\addtele{\sigma}{\extendsctx{\Phi'}{\Lambda}}}}{\addtele{\pi}{\Lambda}}}{\addtele{\pi}{\extendsctx{\extendsctx{\Phi'}{\rho}}{\Lambda}}} && \text{(\cref{lem:lift-asub-var})}
      \end{align*}
      The induction hypothesis with scoping telescope $\extendsctx{\Phi'}{\Lambda}$ (which has one variable less than $\Phi$) gives us that $\asubexpr{\arenexpr{v'}{\addtele{\pi}{\extendsctx{\Phi'}{\Lambda}}}}{\addtele{\lift{\sigma}}{\extendsctx{\Phi'}{\Lambda}}} = \arenexpr{\asubexpr{v'}{\addtele{\sigma}{\extendsctx{\Phi'}{\Lambda}}}}{\addtele{\pi}{\extendsctx{\Phi'}{\Lambda}}}$.
      The result then follows from \cref{cor:pi-lift-commute-ren-general} with $t = \asubexpr{v'}{\addtele{\sigma}{\extendsctx{\Phi'}{\Lambda}}}$, $\sigma = \pi$, $\Phi_1 = \Phi'$, $\mu = \rho$ and $\Phi_2 = \Lambda$. \qedhere
    \end{itemize}
  \end{itemize}
\end{proof}

Combining \cref{lem:pi-lift-commute-ren,lem:pi-lift-commute-sub}, we get the following result.
\begin{lemma}
  \label{lem:pi-lift-commute-mix}
  Let $\Phi : \Tele{n}{m}$ be a scoping telescope, $\sfmixseq{\bar{\sigma}}{\hat{\Gamma}}{\hat{\Delta}}{m}$ a mixed sequence of atomic renamings and substitution and $\sfexpr{\extendsctx{\hat{\Delta}}{\Phi}}{t}{n}$ an SFMTT expression.
  Then $\seqexpr{\arenexpr{t}{\addtele{\pi}{\Phi}}}{\addtele{\lift{\bar{\sigma}}}{\Phi}} = \arenexpr{\seqexpr{t}{\addtele{\bar{\sigma}}{\Phi}}}{\addtele{\pi}{\Phi}}$.
\end{lemma}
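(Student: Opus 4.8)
The plan is to proceed by induction on the structure of the mixed sequence $\bar{\sigma}$, keeping the scoping telescope $\Phi$ fixed throughout, and to discharge the two snoc-cases by appealing to \cref{lem:pi-lift-commute-ren,lem:pi-lift-commute-sub} respectively. Before starting, I would record one bookkeeping fact: because lifting and locking of mixed sequences act atom-by-atom (\cref{fig:mix-arensub-seq}) and $\addtele{\bar{\sigma}}{\Phi}$ is assembled purely from lifts and locks, the operation $\addtele{(-)}{\Phi}$ distributes over the snoc constructors, i.e.\ $\addtele{(\mixsnocren{\bar{\rho}}{\tau})}{\Phi} = \mixsnocren{(\addtele{\bar{\rho}}{\Phi})}{(\addtele{\tau}{\Phi})}$ and likewise for $\mixsnocsub{\bar{\rho}}{\tau}$. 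This is a short side-induction on $\Phi$ and lets me commute $\addtele{(-)}{\Phi}$ past snoc freely. The base case $\bar{\sigma} = \idm$ is immediate: since $\lift{\idm} = \idm$ and $\lock{\idm}{\mu} = \idm$ we have $\addtele{\idm}{\Phi} = \idm$, and both sides of the lemma reduce to $\arenexpr{t}{\addtele{\pi}{\Phi}}$.

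For the renaming snoc case $\bar{\sigma} = \mixsnocren{\bar{\rho}}{\tau}$, I would first unfold $\lift{\bar{\sigma}} = \mixsnocren{\lift{\bar{\rho}}}{\lift{\tau}}$, distribute $\addtele{(-)}{\Phi}$ using the fact above, and peel off the outer atom with the defining clause $\seqexpr{s}{\mixsnocren{\bar{\alpha}}{\beta}} = \arenexpr{\seqexpr{s}{\bar{\alpha}}}{\beta}$. This rewrites the left-hand side of the lemma as $\arenexpr{\bigl(\seqexpr{\arenexpr{t}{\addtele{\pi}{\Phi}}}{\addtele{\lift{\bar{\rho}}}{\Phi}}\bigr)}{\addtele{\lift{\tau}}{\Phi}}$. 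Applying the induction hypothesis to the strictly smaller sequence $\bar{\rho}$ turns the inner factor into $\arenexpr{\seqexpr{t}{\addtele{\bar{\rho}}{\Phi}}}{\addtele{\pi}{\Phi}}$, leaving $\arenexpr{\arenexpr{\bigl(\seqexpr{t}{\addtele{\bar{\rho}}{\Phi}}\bigr)}{\addtele{\pi}{\Phi}}}{\addtele{\lift{\tau}}{\Phi}}$. This is exactly the left-hand side of \cref{lem:pi-lift-commute-ren} instantiated at the expression $\seqexpr{t}{\addtele{\bar{\rho}}{\Phi}}$ and the atomic renaming $\tau$, so that lemma swaps $\addtele{\pi}{\Phi}$ and $\addtele{\lift{\tau}}{\Phi}$ into $\addtele{\pi}{\Phi}$ and $\addtele{\tau}{\Phi}$; re-folding the $\tau$-application and then the distributivity fact recovers $\arenexpr{\seqexpr{t}{\addtele{\bar{\sigma}}{\Phi}}}{\addtele{\pi}{\Phi}}$, which is the right-hand side.

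The substitution snoc case $\bar{\sigma} = \mixsnocsub{\bar{\rho}}{\tau}$ runs identically, now using $\seqexpr{s}{\mixsnocsub{\bar{\alpha}}{\beta}} = \asubexpr{\seqexpr{s}{\bar{\alpha}}}{\beta}$ and invoking \cref{lem:pi-lift-commute-sub} (with the same instantiation) in place of \cref{lem:pi-lift-commute-ren}; the induction hypothesis on $\bar{\rho}$ is applied in precisely the same spot. I do not expect any genuine difficulty here: the argument is essentially rearrangement driven by the defining equations. The one point that warrants care — the \emph{main obstacle}, such as it is — is verifying that all modes and contexts line up so that the earlier two lemmas are applicable at each step; concretely, that with $\bar{\rho}$ mapping into some intermediate $\hat{\Delta}'$ and the final atom $\tau : \hat{\Gamma} \to \hat{\Delta}'$, the expression $\seqexpr{t}{\addtele{\bar{\rho}}{\Phi}}$ indeed lives in $\extendsctx{\hat{\Delta}'}{\Phi}$, which is exactly the context shape required by \cref{lem:pi-lift-commute-ren,lem:pi-lift-commute-sub} relative to $\tau$ and $\Phi$.
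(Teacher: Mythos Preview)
Your proposal is correct and follows essentially the same approach as the paper: an induction on the mixed sequence $\bar{\sigma}$, using \cref{lem:pi-lift-commute-ren} and \cref{lem:pi-lift-commute-sub} for the respective snoc cases, together with the observation that $\addtele{(-)}{\Phi}$ acts componentwise on mixed sequences. The paper's own proof simply states this more tersely as ``the result follows by repeatedly using \cref{lem:pi-lift-commute-ren,lem:pi-lift-commute-sub} for every atomic rensub in $\bar{\sigma}$''; your version spells out the induction and the distributivity bookkeeping in more detail, but the underlying argument is the same.
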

\begin{proof}
  In \cref{fig:mix-arensub-seq} we see that the lifting and lock operations on mixed sequences of atomic rensubs consist of applying these operations to all constituent atomic rensubs.
  From this we deduce that also applying a general scoping telescope $\Phi$ to such a mixed sequence amounts to applying $\Phi$ to every constituent atomic rensub.
  Hence the result follows by repeatedly using \cref{lem:pi-lift-commute-ren,lem:pi-lift-commute-sub} for every atomic rensub in $\bar{\sigma}$.
\end{proof}

\subsubsection{Proof Technique (Part 2)}

Using the results from the previous sections, we can now relax the requirement from \cref{prop:mixseq-equiv-tele} so that we only need to check the equality of applying two mixed sequences to a variable after adding a lock telescope instead of a general scoping telescope.
\begin{proposition}
  \label{prop:mixseq-equiv-locktele}
  If $\sfmixseq{\bar{\sigma}, \bar{\tau}}{\hat{\Gamma}}{\hat{\Delta}}{m}$ are two mixed sequences of SFMTT atomic rensubs such that $\seqexpr{v}{\addtele{\bar{\sigma}}{\Lambda}} = \seqexpr{v}{\addtele{\bar{\tau}}{\Lambda}}$ for every lock telescope $\Lambda : \Locktele{n}{m}$ and every variable $\sfvar{\addtele{\hat{\Delta}}{\Lambda}}{v}{n}$, then $\seqexpr{t}{\bar{\sigma}} = \seqexpr{t}{\bar{\tau}}$ for all expressions $\sfexpr{\hat{\Delta}}{t}{m}$.
\end{proposition}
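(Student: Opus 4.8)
The plan is to reduce to \cref{prop:mixseq-equiv-tele}, whose conclusion is exactly the statement we want. To invoke it we must verify its hypothesis, namely that $\seqexpr{v}{\addtele{\bar{\sigma}}{\Phi}} = \seqexpr{v}{\addtele{\bar{\tau}}{\Phi}}$ holds for \emph{every} scoping telescope $\Phi : \Tele{n}{m}$ and every variable $\sfvar{\addtele{\hat{\Delta}}{\Phi}}{v}{n}$, whereas we are only handed this for lock telescopes $\Lambda$. I would therefore establish the stronger statement (agreement after adding an arbitrary scoping telescope) by induction on the number of variable-extensions occurring in $\Phi$, and then hand the result to \cref{prop:mixseq-equiv-tele}. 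In the base case $\Phi$ contains no variables, so it \emph{is} a lock telescope $\Lambda$, and the required equality is precisely the assumption of the proposition (in particular the empty telescope is covered here).

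For the inductive step, decompose $\Phi = \extendsctx{\extendsctx{\Phi'}{\rho}}{\Lambda}$, where $\Lambda$ is the (possibly empty) lock telescope following the last variable-extension, $\rho$ is that variable's modality, and $\Phi'$ contains one variable-extension fewer. Writing $\bar{\sigma}' := \addtele{\bar{\sigma}}{\Phi'}$, we have $\addtele{\bar{\sigma}}{\Phi} = \addtele{(\lift{\bar{\sigma}'})}{\Lambda}$, and likewise for $\bar{\tau}$. I then case-split on the variable $v$ in $\addtele{\hat{\Delta}}{\Phi}$. If $v = \vzero{\alpha}$, then a lifted sequence applied under a lock telescope fixes the zero variable: unfolding $\seqexpr{\vzero{\alpha}}{\addtele{(\lift{\bar{\sigma}'})}{\Lambda}}$ over the constituent atomic rensubs of $\bar{\sigma}'$ and using \cref{lem:lift-aren-var,lem:lift-asub-var} at each step yields $\vzero{\alpha}$, and the identical computation for $\bar{\tau}$ also yields $\vzero{\alpha}$; hence the two sides agree. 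Note this case does not even use the induction hypothesis.

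If $v = \suc{v'}$, then $v'$ is a variable of $\addtele{\hat{\Delta}}{(\extendsctx{\Phi'}{\Lambda})}$, a context built from a scoping telescope with one variable-extension fewer than $\Phi$. The key move is to recognise $\suc{v'} = \arenexpr{v'}{\addtele{\pi}{\Lambda}}$ (weakening past the $\rho$-extension) and then commute this weakening past the lifted sequence using \cref{lem:pi-lift-commute-mix} with telescope $\Lambda$, sequence $\bar{\sigma}'$ and expression $v'$:
\[
  \seqexpr{\suc{v'}}{\addtele{\bar{\sigma}}{\Phi}}
  = \seqexpr{\arenexpr{v'}{\addtele{\pi}{\Lambda}}}{\addtele{(\lift{\bar{\sigma}'})}{\Lambda}}
  = \arenexpr{\seqexpr{v'}{\addtele{\bar{\sigma}}{(\extendsctx{\Phi'}{\Lambda})}}}{\addtele{\pi}{\Lambda}},
\]
where the last step uses $\addtele{\bar{\sigma}'}{\Lambda} = \addtele{\bar{\sigma}}{(\extendsctx{\Phi'}{\Lambda})}$; the same computation applies to $\bar{\tau}$. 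The induction hypothesis, applied to the telescope $\extendsctx{\Phi'}{\Lambda}$ and the variable $v'$, gives $\seqexpr{v'}{\addtele{\bar{\sigma}}{(\extendsctx{\Phi'}{\Lambda})}} = \seqexpr{v'}{\addtele{\bar{\tau}}{(\extendsctx{\Phi'}{\Lambda})}}$, so both sides are the same renaming $\arenexpr{(-)}{\addtele{\pi}{\Lambda}}$ of equal expressions. This establishes the hypothesis of \cref{prop:mixseq-equiv-tele}, which then delivers the claim.

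I expect the $\suc$ case to be the main obstacle, precisely because a lifted atomic \emph{substitution} does not simply push $\mathsf{suc}$ inward: by \cref{lem:lift-asub-var} it produces an extra weakening $\pi$, and reconciling this uniformly for mixed sequences of renamings and substitutions is exactly the content of \cref{lem:pi-lift-commute-mix}. The whole chain of auxiliary lemmas in the preceding subsections exists to make this single commutation step go through; by comparison the $\vzero{}$ case and the base case are routine.
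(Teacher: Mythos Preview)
Your proposal is correct and follows essentially the same approach as the paper: reduce to \cref{prop:mixseq-equiv-tele}, induct on the number of variable-extensions in $\Phi$, handle the $\vzero{\alpha}$ case via \cref{lem:lift-aren-var,lem:lift-asub-var}, and handle the $\suc{v'}$ case by rewriting $\suc{v'}$ as $\arenexpr{v'}{\addtele{\pi}{\Lambda}}$ and invoking \cref{lem:pi-lift-commute-mix} before applying the induction hypothesis to $\extendsctx{\Phi'}{\Lambda}$. Your commentary on why the $\suc$ case is the crux is also accurate.
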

\begin{proof}
  We make use of \cref{prop:mixseq-equiv-tele}, so we have to show that $\seqexpr{v}{\addtele{\bar{\sigma}}{\Phi}} = \seqexpr{v}{\addtele{\bar{\tau}}{\Phi}}$ for every scoping telescope $\Phi : \Tele{n}{m}$ and every variable $\sfvar{\extendsctx{\hat{\Delta}}{\Phi}}{v}{n}$.
  We do this by induction on the number of variables in the scoping telescope $\Phi$.
  \begin{itemize}
  \item \case{} $\Phi = \Lambda$, so there are no variables in $\Phi$. \\
    The result is exactly the assumption of the proposition we are proving.
  \item \case{} $\Phi = \extendsctx{\extendsctx{\Phi'}{\mu}}{\Lambda}$ with $\Lambda$ a lock telescope \\
    We distinguish between the two different cases for the variable $v$.
    \begin{itemize}
    \item \case{} $v = \vzero{\alpha}$ \\
      For every atomic rensub $\sfarensub{\chi}{\hat{\Gamma}}{\hat{\Delta}}{m}$ we have that
      \begin{align*}
        \arensubexpr{\vzero{\alpha}}{\addtele{\chi}{\extendsctx{\extendsctx{\Phi'}{\mu}}{\Lambda}}}
        = \arensubexpr[\Lambda]{\vzero{\alpha}}{\lift{(\addtele{\chi}{\Phi'})}}
        = \vzero{\alpha}.
        && \text{(\cref{lem:lift-aren-var,lem:lift-asub-var})}
      \end{align*}
      By repeatedly applying this result it follows that the same is true for sequences of atomic rensubs.
      In particular, we can conclude that $\seqexpr{\vzero{\alpha}}{\addtele{\bar{\sigma}}{\extendsctx{\extendsctx{\Phi'}{\mu}}{\Lambda}}} = \vzero{\alpha} = \seqexpr{\vzero{\alpha}}{\addtele{\bar{\tau}}{\extendsctx{\extendsctx{\Phi'}{\mu}}{\Lambda}}}$.
    \item \case{} $v = \suc{v'}$ \\
      For any sequence of atomic rensubs $\sfmixseq{\bar{\chi}}{\hat{\Gamma}}{\hat{\Delta}}{m}$ we can compute as follows
      \begin{align*}
        \seqexpr{\suc{v'}}{\addtele{\bar{\chi}}{\extendsctx{\extendsctx{\Phi'}{\mu}}{\Lambda}}}
        &= \seqexpr{\suc{v'}}{\addtele{\lift{(\addtele{\bar{\chi}}{\Phi'})}}{\Lambda}} \\
        &= \seqexpr{\arenexpr{v'}{\addtele{\pi}{\Lambda}}}{\addtele{\lift{(\addtele{\bar{\chi}}{\Phi'})}}{\Lambda}} \\
        &= \arenexpr{\seqexpr{v'}{\addtele{\addtele{\bar{\chi}}{\Phi'}}{\Lambda}}}{\addtele{\pi}{\Lambda}} && \text{(\cref{lem:pi-lift-commute-mix})}
      \end{align*}
      By the induction hypothesis we know that $\seqexpr{v'}{\addtele{\addtele{\bar{\sigma}}{\Phi'}}{\Lambda}} = \seqexpr{v'}{\addtele{\addtele{\bar{\tau}}{\Phi'}}{\Lambda}}$.
      Hence we can conclude that
      \begin{align*}
        \seqexpr{\suc{v'}}{\addtele{\bar{\sigma}}{\extendsctx{\extendsctx{\Phi'}{\mu}}{\Lambda}}}
        &= \arenexpr{\seqexpr{v'}{\addtele{\addtele{\bar{\sigma}}{\Phi'}}{\Lambda}}}{\addtele{\pi}{\Lambda}} \\
        &= \arenexpr{\seqexpr{v'}{\addtele{\addtele{\bar{\tau}}{\Phi'}}{\Lambda}}}{\addtele{\pi}{\Lambda}} \\
        &= \seqexpr{\suc{v'}}{\addtele{\bar{\tau}}{\extendsctx{\extendsctx{\Phi'}{\mu}}{\Lambda}}}. \qedhere
      \end{align*}
    \end{itemize}
  \end{itemize}
\end{proof}

In particular, we have the following proof technique for observational equivalence of regular SFMTT substitutions.
\begin{proposition}
  \label{prop:sfmtt-sigma-equiv-var-locktele}
  Let $\sfsub{\sigma, \tau}{\hat{\Gamma}}{\hat{\Delta}}{m}$ be two \subfree{} substitutions and suppose that $\subexpr{v}{\addtele{\sigma}{\Lambda}} = \subexpr{v}{\addtele{\tau}{\Lambda}}$ for every lock telescope $\Lambda : \Locktele{n}{m}$ and every variable $\sfvar{\extendsctx{\hat{\Delta}}{\Lambda}}{v}{n}$.
  Then $\sfsigmeqsub{\sigma}{\tau}$.
\end{proposition}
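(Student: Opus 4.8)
The plan is to recognize the regular \subfree{} substitutions $\sigma$ and $\tau$ as special cases of mixed sequences of atomic rensubs, and then appeal directly to \cref{prop:mixseq-equiv-locktele}. Comparing \cref{fig:raw-subfree-rensub} with \cref{fig:mix-arensub-seq}, a regular substitution is nothing but a mixed sequence in which every snoc step uses an atomic \emph{substitution} and none uses an atomic renaming: the base case $\id$ corresponds to $\idm$, and each substitution-snoc $\rensubcons{\rho}{\chi}$ corresponds to $\mixsnocsub{\bar{\rho}}{\chi}$. I would write $\bar{\sigma}$ and $\bar{\tau}$ for the mixed sequences obtained from $\sigma$ and $\tau$ in this way.

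First I would check that under this identification the relevant operations agree on the nose. Since applying a regular substitution to an expression means applying its constituent atomic substitutions sequentially, we have $\subexpr{t}{\sigma} = \seqexpr{t}{\bar{\sigma}}$ for every expression $t$ (and likewise for $\tau$), directly from the definitions of $\subexpr{\_}{\_}$ and $\seqexpr{\_}{\_}$. Moreover, lifting, locking, and appending a scoping telescope are all defined constituent-wise in both settings, so $\addtele{\sigma}{\Lambda}$ corresponds to $\addtele{\bar{\sigma}}{\Lambda}$ for every lock telescope $\Lambda$, and in particular $\subexpr{v}{\addtele{\sigma}{\Lambda}} = \seqexpr{v}{\addtele{\bar{\sigma}}{\Lambda}}$ for every variable $v$. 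Each of these identities is verified by a routine induction on the structure of the substitution, matching the two definitions line by line.

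With these identifications in hand, the hypothesis of the proposition translates verbatim into the hypothesis of \cref{prop:mixseq-equiv-locktele}: for every lock telescope $\Lambda : \Locktele{n}{m}$ and every variable $\sfvar{\addtele{\hat{\Delta}}{\Lambda}}{v}{n}$ we have $\seqexpr{v}{\addtele{\bar{\sigma}}{\Lambda}} = \subexpr{v}{\addtele{\sigma}{\Lambda}} = \subexpr{v}{\addtele{\tau}{\Lambda}} = \seqexpr{v}{\addtele{\bar{\tau}}{\Lambda}}$. Applying \cref{prop:mixseq-equiv-locktele} then yields $\seqexpr{t}{\bar{\sigma}} = \seqexpr{t}{\bar{\tau}}$ for all expressions $\sfexpr{\hat{\Delta}}{t}{m}$, which by the identification above is exactly $\subexpr{t}{\sigma} = \subexpr{t}{\tau}$, i.e.\ $\sfsigmeqsub{\sigma}{\tau}$.

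I do not expect a genuine obstacle here: the proposition is essentially a corollary of \cref{prop:mixseq-equiv-locktele}, obtained by specialising mixed sequences to the substitution-only case. The only point requiring care is the bookkeeping that establishes the coincidence of the two families of operations (application, lifting, locking, and telescope-appending) under the identification of $\sigma$ with $\bar{\sigma}$; once that is settled, the conclusion is immediate.
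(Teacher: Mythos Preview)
Your proposal is correct and follows essentially the same approach as the paper: recognise regular substitutions as the substitution-only fragment of mixed sequences and invoke \cref{prop:mixseq-equiv-locktele}. The paper's proof is in fact terser than yours, relegating the operation-compatibility bookkeeping you spell out to a footnote remarking that it is trivial.
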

\begin{proof}
  Given the definition of observational equivalence for SFMTT substitutions, this follows immediately from \cref{prop:mixseq-equiv-locktele} where both sequences consist of only atomic substitutions (so no renamings).%
  \footnote{Strictly speaking, we should define an embedding of regular SFMTT substitutions into mixed sequences of atomic rensubs and prove that their actions on SFMTT expressions correspond, but this is trivial.}
\end{proof}
\begin{example} \label{ex:subst-existence}
  If we instantiate \subfree{} on the trivial mode theory (by which we mean the terminal 2-category) then variables are non-modal De Bruijn indices and lock telescopes can be essentially ignored.
  In this setting, what \cref{prop:sfmtt-sigma-equiv-var-locktele} really says is that a substitution is uniquely determined, up to observational equivalence, by its action on De Bruijn indices.
  Since there exists exactly one De Bruijn index for every variable in the context, this means that we have an injection from substitutions, up to  observational equivalence, to vectors of terms.
  In plain dependent type theory, substitutions are often \emph{defined} as vectors of terms, or at least it is clear that they can be uniquely represented in this way.
  In other words, the aforementioned injection is actually a bijection.
  
  Thus, it is natural to ask whether this idea carries over to general SFMTT.
  Could we define an SFMTT substitution $\sfsub{\sigma}{\hat{\Gamma}}{\hat{\Delta}}{m}$ as a thing that assigns, to every lock telescope $\Lambda : \Tele{n}{m}$ and every variable $\sfvar{\extendsctx{\hat{\Delta}}{\Lambda}}{v}{n}$ a term $\subexpr{v}{\addtele{\sigma}{\Lambda}}$, perhaps satisfying some coherence conditions?
  Let us call such an assignment a \emph{substitution observation}.
  Then \cref{prop:sfmtt-sigma-equiv-var-locktele} asserts that there is an injection from substitutions, up to observational equivalence, to substitution observations.
  We are asking if this injection is in fact a bijection.
  
  The answer is no.
  Consider, as mode theory, the walking arrow, i.e.\ the 2-category with two modes $m$ and $n$, one modality $\mu : m \to n$, and only identity 2-cells.
  Then a substitution observation in a context of the form $\hat \Delta = (\lock{\extendsctx{\emptyctx}{\unitMod}}{\mu})$ carries no information.
  Indeed, $\hat \Delta$ lives at mode $m$ and no lock telescope can get us back to $n$, which is where the only introduced variable lives.
  Thus, for any other context $\hat \Gamma$, there exists a unique substitution observation from $\hat \Gamma$ to $\hat \Delta$.
  However, if we and instantiate $\hat \Gamma = \emptyctx$, then there exists no substitution $\sfsub{\sigma}{\hat{\Gamma}}{\hat{\Delta}}{m}$. Indeed, since the only 2-cell with codomain $\mu$ is the identity, it is impossible to get rid of $\locksym_\mu$ in the domain of $\sigma$.
  
  A cleaner argument can be given in the typed case.
  There, we could type $\hat \Delta$ as $\Delta = (\lock{\extendctx{\emptyctx}{\unitMod}{\mathsf{Empty}}}{\mu})$ and instantiate $\Gamma = (\lock{\emptyctx}{\mu})$ and now $\locksym_\mu$ is no longer the problem, but we still cannot construct a substitution as there are no closed terms of the empty type $\mathsf{Empty}$.
  
  This situation is caused by an intentional underspecification of what $\locksym_\mu$ does.
  For a \emph{general} model of WSMTT with said mode theory, it is not sound to allow mentions of the variable in context $\Delta$, and thus substitution observations to $\Delta$ are devoid of information.
  However, $\mu$ \emph{could} be the identity modality, in which case a substitution from $\Gamma$ to $\Delta$ should really not exist, but there would be no qualms against mentioning the variable in context $\Delta$.
\end{example}

\subsection{Preservation of Observational Equivalence of \subfree{} Substitutions}

\Cref{def:sfmtt-sub-sigma-equiv} tells us that two \subfree{} substitutions are observationally equivalent if they yield equal results when applied to any expression.
It is not immediately clear that this property is preserved by some of the operations that act on substitutions, such as $\locksym_\mu$ or lifting.
The following lemmas tell us that this is indeed the case.

\begin{lemma}
  \label{lem:lock-preserves-sigma-eq}
  Let $\sfsub{\sigma, \tau}{\hat{\Gamma}}{\hat{\Delta}}{n}$ be two \subfree{} substitutions and $\mu : m \to n$ a modality.
  If $\sfsigmeqsub{\sigma}{\tau}$, then also $\sfsigmeqsub{\lock{\sigma}{\mu}}{\lock{\tau}{\mu}}$.
\end{lemma}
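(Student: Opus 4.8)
The plan is to prove the claim directly from the definition of observational equivalence (\cref{def:sfmtt-sub-sigma-equiv}), exploiting the way explicit substitutions are pushed through the modal term constructor $\modtm{\mu}{\_}$. Concretely, to establish $\sfsigmeqsub{\lock{\sigma}{\mu}}{\lock{\tau}{\mu}}$ I must show that $\subexpr{t}{\lock{\sigma}{\mu}} = \subexpr{t}{\lock{\tau}{\mu}}$ for every expression $\sfexpr{\lock{\hat{\Delta}}{\mu}}{t}{m}$.

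First I would, given such a $t$, form the expression $\modtm{\mu}{t}$, which is well-scoped as $\sfexpr{\hat{\Delta}}{\modtm{\mu}{t}}{n}$ by \inlinerulename{sf-expr-mod-tm}. The crucial observation is that applying a substitution to a modal term merely locks the substitution and pushes it into the body: the atomic equation \cref{eq:push-atomic-modtm} states $\arensubexpr{\modtm{\mu}{t}}{\chi} = \modtm{\mu}{\arensubexpr{t}{\lock{\chi}{\mu}}}$, and since a regular \subfree{} substitution acts by applying its constituent atomic rensubs in sequence while $\lock{\_}{\mu}$ distributes componentwise over that sequence, a straightforward induction on the length of the sequence yields $\subexpr{\modtm{\mu}{t}}{\rho} = \modtm{\mu}{\subexpr{t}{\lock{\rho}{\mu}}}$ for $\rho \in \{\sigma, \tau\}$.

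With this identity in hand the argument closes quickly. Since $\modtm{\mu}{t}$ is an expression in $\hat{\Delta}$ at mode $n$ and $\sfsigmeqsub{\sigma}{\tau}$ holds, the definition of observational equivalence gives $\subexpr{\modtm{\mu}{t}}{\sigma} = \subexpr{\modtm{\mu}{t}}{\tau}$. Rewriting both sides with the pushing identity turns this into $\modtm{\mu}{\subexpr{t}{\lock{\sigma}{\mu}}} = \modtm{\mu}{\subexpr{t}{\lock{\tau}{\mu}}}$, an equality of \subfree{} expressions built with the very same constructor $\modtm{\mu}{\_}$. As syntactic constructors are injective, I strip off $\modtm{\mu}{\_}$ to obtain $\subexpr{t}{\lock{\sigma}{\mu}} = \subexpr{t}{\lock{\tau}{\mu}}$, which is exactly what was required.

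The main (and only mildly technical) obstacle is the intermediate pushing identity $\subexpr{\modtm{\mu}{t}}{\rho} = \modtm{\mu}{\subexpr{t}{\lock{\rho}{\mu}}}$ for regular substitutions $\rho$: one must check that applying a whole sequence of atomic rensubs commutes with wrapping in $\modtm{\mu}{\_}$, using that locking a regular substitution is defined componentwise (as recorded just after \cref{eq:sfmtt-arensub-lift}) together with the single-step equation \cref{eq:push-atomic-modtm}. Everything else is immediate. Notably, this route sidesteps \cref{prop:sfmtt-sigma-equiv-var-locktele} entirely; reducing instead to the action on variables under lock telescopes would force a converse ``observational equivalence implies agreement on variables after locks'' direction, which is comparatively painful, so the modal-term trick is the cleaner path.
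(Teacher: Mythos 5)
Your proposal is correct and matches the paper's own proof essentially verbatim: wrap $t$ in $\modtm{\mu}{\_}$, apply observational equivalence of $\sigma$ and $\tau$ to $\modtm{\mu}{t}$, push the substitution through via \cref{eq:push-atomic-modtm} and the componentwise definition of locking, and conclude by injectivity of expression constructors. No differences worth noting.
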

\begin{proof}
  Take an arbitrary expression $\sfexpr{\lock{\hat{\Delta}}{\mu}}{t}{m}$.
  Then we can apply \inlinerulename{sf-expr-mod-tm} to see that $\sfexpr{\hat{\Delta}}{\modtm{\mu}{t}}{n}$.
  Hence, since $\sfsigmeqsub{\sigma}{\tau}$, the definition of observational equivalence tells us that $\subexpr{\left(\modtm{\mu}{t}\right)}{\sigma} = \subexpr{\left(\modtm{\mu}{t}\right)}{\tau}$.
  Since applying a lock to a regular \subfree{} substitution amounts to applying the lock to all its constituent atomic substitutions, it follows that $\subexpr{\left(\modtm{\mu}{t}\right)}{\sigma} = \modtm{\mu}{\subexpr{t}{\lock{\sigma}{\mu}}}$ (and similarly for $\tau$).
  We therefore have that $\modtm{\mu}{\subexpr{t}{\lock{\sigma}{\mu}}} = \modtm{\mu}{\subexpr{t}{\lock{\tau}{\mu}}}$ and by injectivity of expression constructors it follows that $\subexpr{t}{\lock{\sigma}{\mu}} = \subexpr{t}{\lock{\tau}{\mu}}$.
  As this holds for arbitrary $t$, we have proven that $\sfsigmeqsub{\lock{\sigma}{\mu}}{\lock{\tau}{\mu}}$.
\end{proof}
\begin{lemma}
  \label{lem:lift-preserves-sigma-eq}
  Let $\sfsub{\sigma, \tau}{\hat{\Gamma}}{\hat{\Delta}}{m}$ be two \subfree{} substitutions.
  If $\sfsigmeqsub{\sigma}{\tau}$, then also $\sfsigmeqsub{\lift{\sigma}}{\lift{\tau}}$.
\end{lemma}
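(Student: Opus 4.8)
The plan is to invoke the proof technique of \cref{prop:sfmtt-sigma-equiv-var-locktele}, now applied to the pair $\lift{\sigma}, \lift{\tau} : \extendsctx{\hat{\Gamma}}{\mu} \to \extendsctx{\hat{\Delta}}{\mu}$. Concretely, it then suffices to show that $\subexpr{v}{\addtele{\lift{\sigma}}{\Lambda}} = \subexpr{v}{\addtele{\lift{\tau}}{\Lambda}}$ for every lock telescope $\Lambda : \Locktele{n}{m}$ and every variable $\sfvar{\extendsctx{\extendsctx{\hat{\Delta}}{\mu}}{\Lambda}}{v}{n}$. Since the outermost context former before $\Lambda$ is the variable introduced by $\mu$, such a $v$ is either $\vzero{\alpha}$ (with $\alpha : \mu \Rightarrow \locks{\Lambda}$) or of the form $\suc{v'}$ with $\sfvar{\extendsctx{\hat{\Delta}}{\Lambda}}{v'}{n}$; I would then proceed by case analysis on these two possibilities.

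In the case $v = \vzero{\alpha}$, both sides collapse to $\vzero{\alpha}$ independently of the substitution applied: this is the content of the $\vzero{\alpha}$ clause of \cref{lem:lift-asub-var}, extended from a single atomic substitution to the whole sequence constituting $\lift{\sigma}$ (resp.\ $\lift{\tau}$) by applying it once per atomic piece, exactly as in the $\vzero{\alpha}$ case of the proof of \cref{prop:mixseq-equiv-locktele}. Hence the two sides are trivially equal, and the hypothesis $\sfsigmeqsub{\sigma}{\tau}$ is not even needed here.

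The interesting case is $v = \suc{v'}$. Here I would use \cref{lem:pi-lift-commute-mix} with $\Phi := \Lambda$, noting that $\arenexpr{v'}{\addtele{\pi}{\Lambda}} = \suc{v'}$, to rewrite $\subexpr{\suc{v'}}{\addtele{\lift{\sigma}}{\Lambda}} = \arenexpr{\subexpr{v'}{\addtele{\sigma}{\Lambda}}}{\addtele{\pi}{\Lambda}}$, and symmetrically for $\tau$. It therefore remains to prove $\subexpr{v'}{\addtele{\sigma}{\Lambda}} = \subexpr{v'}{\addtele{\tau}{\Lambda}}$. This is where the hypothesis enters: starting from $\sfsigmeqsub{\sigma}{\tau}$ and applying \cref{lem:lock-preserves-sigma-eq} once for each lock in the telescope $\Lambda$, I obtain $\sfsigmeqsub{\addtele{\sigma}{\Lambda}}{\addtele{\tau}{\Lambda}}$. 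Since the variable $v'$ is in particular an expression in $\addtele{\hat{\Delta}}{\Lambda}$, the definition of observational equivalence yields $\subexpr{v'}{\addtele{\sigma}{\Lambda}} = \subexpr{v'}{\addtele{\tau}{\Lambda}}$, and applying the renaming $\addtele{\pi}{\Lambda}$ to both sides closes the case.

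The main obstacle is bookkeeping rather than conceptual: \cref{lem:lift-asub-var} is phrased for a single atomic substitution, whereas $\lift{\sigma}$ is a regular substitution (a sequence of atomic ones), so both variable computations must be lifted to sequences; routing the $\suc$ case through \cref{lem:pi-lift-commute-mix}, which is already stated at the level of mixed sequences, is the cleanest way to do this. I should also be careful that \cref{lem:lock-preserves-sigma-eq} is only stated for a single lock, so the passage to $\addtele{\sigma}{\Lambda}$ for a general lock telescope $\Lambda$ requires an easy induction on the length of $\Lambda$.
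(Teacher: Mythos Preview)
Your argument is correct, but it takes a substantially longer route than the paper's proof. The paper simply mirrors the proof of \cref{lem:lock-preserves-sigma-eq}: given an arbitrary expression $\sfexpr{\extendsctx{\hat{\Delta}}{\mu}}{t}{m}$, form $\lam{\mu}{t}$ in $\hat{\Delta}$, use $\sfsigmeqsub{\sigma}{\tau}$ to get $\subexpr{\lam{\mu}{t}}{\sigma} = \subexpr{\lam{\mu}{t}}{\tau}$, push the substitution through the binder to obtain $\lam{\mu}{\subexpr{t}{\lift{\sigma}}} = \lam{\mu}{\subexpr{t}{\lift{\tau}}}$, and conclude by injectivity of $\lam{\mu}{\_}$. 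This is a two-line argument that uses no auxiliary machinery beyond the constructor-pushing equation already established.

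Your approach instead goes through \cref{prop:sfmtt-sigma-equiv-var-locktele}, case-splits on the variable, invokes \cref{lem:lift-asub-var} iterated over the atomic pieces of $\lift{\sigma}$, appeals to \cref{lem:pi-lift-commute-mix}, and finally needs \cref{lem:lock-preserves-sigma-eq} plus an induction on the length of $\Lambda$. Each step is sound, and the bookkeeping you flag (lifting single-atomic results to sequences, iterating the lock lemma) is handled correctly. What the paper's approach buys is directness: it exploits that observational equivalence is \emph{defined} on all expressions, so one can pick a convenient expression shape (here $\lam{\mu}{t}$) that makes the lift appear automatically. Your approach, by contrast, first restricts to variables and then has to rebuild the connection to general expressions, which is why it ends up leaning on so much of the preceding technical development.
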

\begin{proof}
  We can apply the same reasoning as in the proof of \cref{lem:lock-preserves-sigma-eq}, but with the expression constructor $\lam{\mu}{\_}$ instead of $\modtm{\mu}{\_}$.
\end{proof}
\begin{corollary}
  \label{cor:tel-preserves-sigma-eq}
  If $\sfsub{\sigma, \tau}{\hat{\Gamma}}{\hat{\Delta}}{m}$ are two \subfree{} substitutions and $\Phi : \Tele{n}{m}$ is a scoping telescope, then $\sfsigmeqsub{\sigma}{\tau}$ implies $\sfsigmeqsub{\addtele{\sigma}{\Phi}}{\addtele{\tau}{\Phi}}$.
\end{corollary}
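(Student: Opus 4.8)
The plan is to prove this by structural induction on the scoping telescope $\Phi : \Tele{n}{m}$ (defined in \cref{fig:scoping-tele}), leaning entirely on the two preservation lemmas just established. The guiding observation is that the operation of appending a telescope to a substitution is assembled, layer by layer, out of precisely the two operations whose preservation of $\sfsigmeqsubsym$ we have already shown: lifting, used for each variable entry, and locking, used for each lock entry. Hence the statement reduces to stripping these layers off one at a time.

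First I would dispatch the base case $\Phi = \emptyctx$. Since $\addtele{\sigma}{\emptyctx} = \sigma$ and $\addtele{\tau}{\emptyctx} = \tau$, the desired conclusion is literally the hypothesis $\sfsigmeqsub{\sigma}{\tau}$. For the inductive steps I would assume the induction hypothesis $\sfsigmeqsub{\addtele{\sigma}{\Phi'}}{\addtele{\tau}{\Phi'}}$ for the strictly smaller telescope $\Phi'$. In the variable-extension case $\Phi = \extendsctx{\Phi'}{\mu}$, the defining equations give $\addtele{\sigma}{(\extendsctx{\Phi'}{\mu})} = \lift{(\addtele{\sigma}{\Phi'})}$, and likewise for $\tau$, so \cref{lem:lift-preserves-sigma-eq} applied to the induction hypothesis yields exactly $\sfsigmeqsub{\lift{(\addtele{\sigma}{\Phi'})}}{\lift{(\addtele{\tau}{\Phi'})}}$. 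In the lock case $\Phi = \lock{\Phi'}{\mu}$, the defining equations give $\addtele{\sigma}{(\lock{\Phi'}{\mu})} = \lock{(\addtele{\sigma}{\Phi'})}{\mu}$, and likewise for $\tau$, so \cref{lem:lock-preserves-sigma-eq} applied to the induction hypothesis gives $\sfsigmeqsub{\lock{(\addtele{\sigma}{\Phi'})}{\mu}}{\lock{(\addtele{\tau}{\Phi'})}{\mu}}$. In each case this is exactly the required conclusion at $\Phi$.

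I do not expect any genuine obstacle here: with \cref{lem:lock-preserves-sigma-eq,lem:lift-preserves-sigma-eq} in hand, the two inductive cases are discharged immediately, and the proof is essentially a bookkeeping exercise. The only subtlety worth flagging is that the induction must run over the structure of the telescope $\Phi$, not over the substitutions $\sigma$ and $\tau$, so that the induction hypothesis becomes available at the peeled-down telescope $\Phi'$ while $\sigma$ and $\tau$ stay fixed throughout.
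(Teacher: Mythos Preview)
Your proposal is correct and matches the paper's intended argument: the corollary is stated immediately after \cref{lem:lock-preserves-sigma-eq,lem:lift-preserves-sigma-eq} with no proof given, precisely because it follows from those two lemmas by the straightforward structural induction on $\Phi$ that you spell out.
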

We note that the converse of \cref{prop:sfmtt-sigma-equiv-var} immediately follows from \cref{cor:tel-preserves-sigma-eq}.
Furthermore, if we restrict the scoping telescopes in this corollary to lock telescopes, the converse of \cref{prop:sfmtt-sigma-equiv-var-locktele} can also be derived.

\subsection{Relating WSMTT and \subfree{} Lifting}

\begin{lemma}
  \label{lem:lift-mtt-to-sfmtt}
  Given a WSMTT substitution $\mttsub{\sigma}{\hat{\Gamma}}{\hat{\Delta}}{m}$, we have $\sfsigmeqsub{\translate{\lift{\sigma}}}{\lift{\translate{\sigma}}}$.
\end{lemma}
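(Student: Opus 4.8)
The plan is to prove this by establishing observational equivalence through the proof technique of \cref{prop:sfmtt-sigma-equiv-var-locktele}: it suffices to show that $\subexpr{v}{\addtele{\translate{\lift{\sigma}}}{\Lambda}} = \subexpr{v}{\addtele{\lift{\translate{\sigma}}}{\Lambda}}$ for every lock telescope $\Lambda : \Locktele{n}{m}$ and every variable $\sfvar{\addtele{(\extendsctx{\hat{\Delta}}{\mu})}{\Lambda}}{v}{n}$, where $\mu$ is the modality along which we lift. First I would unfold the left-hand side using the WSMTT definition $\lift{\sigma} = (\sigma \circ \pi) . \vzero{}$ together with the translation rules, obtaining
\[
  \translate{\lift{\sigma}} = \rensubcons{\lift{(\concat{\translate{\sigma}}{(\rensubcons{\id}{\pi})})}}{(\ida . \vzero{1_\mu})},
\]
where I have used $\translate{\sigma \circ \pi} = \concat{\translate{\sigma}}{(\rensubcons{\id}{\pi})}$ and $\translate{\vzero{}} = \vzero{1_\mu}$. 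The crucial structural observation is that the outer, lifted component is the lift of a \emph{regular} rensub, so \cref{lem:lift-aren-var,lem:lift-asub-var} and their extensions to regular rensubs apply to it.

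Next I would proceed by case analysis on $v$. For $v = \vzero{\alpha}$, the right-hand side reduces to $\vzero{\alpha}$ by repeatedly applying \cref{lem:lift-aren-var,lem:lift-asub-var} to the constituent lifted atomic rensubs of $\lift{\translate{\sigma}}$. For the left-hand side, the lifted regular part again sends $\vzero{\alpha}$ to $\vzero{\alpha}$, after which applying $\addtele{(\ida . \vzero{1_\mu})}{\Lambda}$ reduces, via \cref{eq:asubvar-extend-vzero,eq:arenvar-key,eq:twocell-vzero} and the strict-$2$-category unit laws, to $\vzero{(1_{\unitMod} \star \alpha) \circ 1_\mu} = \vzero{\alpha}$ --- this is exactly the computation already performed in the proof of \cref{lem:lift-asub-var}. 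Hence both sides agree.

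For $v = \suc{v'}$, I would first rewrite $\suc{v'} = \arenexpr[\Lambda]{v'}{\pi}$ and then push this weakening through the lifted sequences by \cref{lem:pi-lift-commute-mix}. On the right-hand side this immediately yields $\arenexpr{\subexpr{v'}{\addtele{\translate{\sigma}}{\Lambda}}}{\addtele{\pi}{\Lambda}}$. On the left-hand side the same manipulation, combined with the fact that $\addtele{(\rensubcons{\id}{\pi})}{\Lambda} = \rensubcons{\id}{\addtele{\pi}{\Lambda}}$ contributes one further weakening from the $\pi$ inside $\translate{\sigma \circ \pi}$, produces a \emph{double} weakening $\arenexpr{\arenexpr{e}{\addtele{\pi}{\Lambda}}}{\addtele{\pi}{\Lambda}}$ with $e = \subexpr{v'}{\addtele{\translate{\sigma}}{\Lambda}}$, to which the trailing atomic substitution $\addtele{(\ida . \vzero{1_\mu})}{\Lambda}$ is still applied.

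The main obstacle is then to show that this trailing substitution cancels precisely one of the two weakenings, i.e.\ that $\arensubexpr{\arenexpr{t}{\addtele{\pi}{\Lambda}}}{\addtele{(\ida . \vzero{1_\mu})}{\Lambda}} = t$ for every expression $t$. I would prove this as a small auxiliary claim, again via the proof technique of \cref{prop:mixseq-equiv-locktele}: reducing to variables, the equations \cref{eq:arenvar-id,eq:arenvar-weaken,eq:asubvar-extend-vzero,eq:asubvar-extend-vsuc} show that weakening a variable and then substituting with $\ida . \vzero{1_\mu}$ leaves both $\vzero{\beta}$ and $\suc{w}$ unchanged. Applying this cancellation to $t = \arenexpr{e}{\addtele{\pi}{\Lambda}}$ collapses the left-hand side to $\arenexpr{e}{\addtele{\pi}{\Lambda}}$, matching the right-hand side and closing the case. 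The delicate points throughout are tracking the $2$-cells attached to the zero variable and verifying that the intermediate doubly-extended context introduced by translating the WSMTT lift is exactly undone by $\ida . \vzero{1_\mu}$; once this cancellation is isolated as an auxiliary fact, the rest is routine bookkeeping with the substitution-on-variables equations.
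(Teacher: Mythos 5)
Your proof is correct and follows essentially the same route as the paper's: unfold $\translate{\lift{\sigma}}$ into $\lift{\translate{\sigma}}$ followed by $\lift{\pi}$ and $\ida{} . \vzero{1_\mu}$, then reduce to variables under lock telescopes via \cref{prop:mixseq-equiv-locktele} and case-split on $\vzero{\alpha}$ versus $\suc{v'}$. The only difference is organizational: the paper notes that it suffices to show the two-element tail $\asubexpr{\asubexpr{t}{\lift{\pi}}}{\ida{} . \vzero{1_\mu}} = t$ for all expressions $t$, which leaves the prefix $\lift{\translate{\sigma}}$ untouched and thereby avoids your detour through \cref{lem:pi-lift-commute-mix} and the auxiliary cancellation $\asubexpr{\arenexpr{t}{\addtele{\pi}{\Lambda}}}{\addtele{(\ida{} . \vzero{1_\mu})}{\Lambda}} = t$.
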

\begin{proof}
  First of all, we can calculate that
  \begin{align*}
    \translate{\lift{\sigma}}
    &= \translate{(\sigma \circ \pi) . \vzero{}} && \text{(Definition of $^+$, \cref{eq:wsmtt-sub-lift})} \\
    &= \rensubcons{\lift{\translate{\sigma \circ \pi}}}{(\ida{} . \translate{\vzero{}})} && \explanation{Definition of $\translate{\_}$} \\
    &= \rensubcons{\lift{(\concat{\translate{\sigma}}{\translate{\pi}})}}{\left(\ida{} . \vzero{1_\mu}\right)} && \explanation{Definition of $\translate{\_}$} \\
    &= \rensubcons{\rensubcons{\lift{\translate{\sigma}}}{\lift{\pi}}}{\left(\ida{} . \vzero{1_\mu}\right)}.
  \end{align*}
  The last step combines the definition of $\translate{\pi}$ with the fact that lifting a regular substitution amounts to lifting all of its constituent substitutions.
  By the definition of $\sfsigmeqsubsym$ it now suffices to prove that $\asubexpr{\asubexpr{t}{\lift{\pi}}}{\ida{} . \vzero{1_\mu}} = t$ for every expression $\sfexpr{\extendsctx{\hat{\Gamma}}{\mu}}{t}{m}$.
  For this we use \cref{prop:mixseq-equiv-locktele}, so we have to show that $\asubexpr{\asubexpr{v}{\addtele{\lift{\pi}}{\Lambda}}}{\addtele{\left(\ida{} . \vzero{1_\mu}\right)}{\Lambda}} = v$ for every lock telescope $\Lambda : \Locktele{n}{m}$ and every variable $\sfvar{\addtele{\extendsctx{\hat{\Gamma}}{\mu}}{\Lambda}}{v}{n}$.
  We distinguish between two cases for the variable $v$.
  \begin{itemize}
  \item \case{} $v = \vzero{\alpha}$ \\
    We can now compute that
    \begin{align*}
      &\asubexpr{\asubexpr{\vzero{\alpha}}{\addtele{\lift{\pi}}{\Lambda}}}{\addtele{\left(\ida{} . \vzero{1_\mu}\right)}{\Lambda}} \\
      &= \asubexpr[\Lambda]{\asubexpr[\Lambda]{\vzero{\alpha}}{\lift{\pi}}}{\ida{} . \vzero{1_\mu}} \\
      &= \asubexpr[\Lambda]{\vzero{\alpha}}{\ida{}.\vzero{1_\mu}} && \text{(\cref{lem:lift-asub-var})} \\
      &= \transf{\locksym_\mu}{\Lambda}{\alpha}{\vzero{1_\mu}} && \explanation{\cref{eq:asubvar-extend-vzero,eq:arenvar-key}} \\
      &= \vzero{\alpha}.
    \end{align*}
  \item \case{} $v = \suc{v'}$ \\
    Then we have that
    \begin{align*}
      &\asubexpr{\asubexpr{\suc{v'}}{\addtele{\lift{\pi}}{\Lambda}}}{\addtele{\left(\ida{} . \vzero{1_\mu}\right)}{\Lambda}} \\
      &= \asubexpr[\Lambda]{\asubexpr[\Lambda]{\asubexpr[\Lambda]{v'}{\pi}}{\pi}}{\ida{} . \vzero{1_\mu}} && \text{(\cref{lem:lift-asub-var})} \\
      &= \asubexpr[\Lambda]{\suc{\suc{v'}}}{\ida{} . \vzero{1_\mu}} \\
      &= \asubexpr[\Lambda]{\suc{v'}}{\ida{}} && \explanation{\cref{eq:asubvar-extend-vsuc}} \\
      &= \suc{v'}. && \qedhere
    \end{align*}
  \end{itemize}
\end{proof}

\subsection{Properties of Key Renamings}

In order to prove the completeness of the substitution algorithm, we need a counterpart in SFMTT for every rule in \cref{fig:sigma-equiv} relating to key substitutions.
That is exactly what will be covered in this section, but we start with two auxiliary results.
\begin{lemma}
  \label{lem:key-ren-suc-var}
  Let $\Lambda : \Locktele{n}{m}$ and $\Theta, \Psi : \Locktele{o}{n}$ and $\Omega : \Locktele{p}{o}$ be lock telescopes, $\twocell{\alpha}{\locks{\Theta}}{\locks{\Psi}}$ a 2-cell, and $\sfvar{\addtele{\addtele{\addtele{\hat{\Gamma}}{\Lambda}}{\Theta}}{\Omega}}{v}{p}$ a variable. Then $\arenexpr{\suc{v}}{\addtele{\key{\alpha}{\Theta}{\Psi}{\addtele{\extendsctx{\hat{\Gamma}}{\mu}}{\Lambda}}}{\Omega}} = \suc{\arenexpr{v}{\addtele{\key{\alpha}{\Theta}{\Psi}{\addtele{\hat{\Gamma}}{\Lambda}}}{\Omega}}}$.
\end{lemma}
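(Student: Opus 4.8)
The plan is to unfold both sides with the equations for the action of an atomic renaming on a variable, and to exploit the fact that the result of applying a key renaming to a variable does not depend on the scoping context in which the key is based. The only difference between the two key renamings in the statement is precisely this base context (namely $\addtele{\extendsctx{\hat{\Gamma}}{\mu}}{\Lambda}$ on the left and $\addtele{\hat{\Gamma}}{\Lambda}$ on the right), so this discrepancy will disappear as soon as both sides are reduced to applications of the 2-cell action $\transf{\_}{\_}{\_}{\_}$.

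First I would absorb the outer lock telescope $\Omega$ into the accumulator. Because $\Omega$ consists only of locks, $\addtele{\_}{\Omega}$ merely applies those locks to the atomic key renaming, and repeatedly invoking the lock rule \cref{eq:arenvar-lock} rewrites any application $\arenexpr{w}{\addtele{\key{\alpha}{\Theta}{\Psi}{\hat{\Gamma}'}}{\Omega}}$ into $\arenexpr[\Omega]{w}{\key{\alpha}{\Theta}{\Psi}{\hat{\Gamma}'}}$, i.e.\ the bare key renaming applied with $\Omega$ sitting in the accumulator slot. (This uses only the routine structural fact that threading the locks of $\Omega$ through the accumulator reconstructs $\Omega$, which follows from associativity of lock-telescope concatenation.) At this point the key rule \cref{eq:arenvar-key} applies with accumulator $\Omega$ and yields $\transf{\extendsctx{\Theta}{\Omega}}{\extendsctx{\Psi}{\Omega}}{\alpha \star 1_{\locks{\Omega}}}{w}$; crucially, the base context $\hat{\Gamma}'$ has vanished from this expression.

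Applying this reduction to both sides, the left-hand side becomes $\transf{\extendsctx{\Theta}{\Omega}}{\extendsctx{\Psi}{\Omega}}{\alpha \star 1_{\locks{\Omega}}}{\suc{v}}$, whereas the inner renaming on the right-hand side becomes $\transf{\extendsctx{\Theta}{\Omega}}{\extendsctx{\Psi}{\Omega}}{\alpha \star 1_{\locks{\Omega}}}{v}$, so that the whole right-hand side becomes $\suc{\transf{\extendsctx{\Theta}{\Omega}}{\extendsctx{\Psi}{\Omega}}{\alpha \star 1_{\locks{\Omega}}}{v}}$. These two expressions are equal by \cref{eq:twocell-vsuc}, which states exactly that the 2-cell action commutes with $\suc{}$. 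This finishes the argument.

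I expect the only real work to be bookkeeping rather than anything conceptual: tracking the modes and lock telescopes so that $\Omega$ is threaded correctly into the accumulator, and checking that the two 2-cells $\alpha \star 1_{\locks{\Omega}}$ generated on the two sides genuinely agree. They do, since $\Theta$, $\Psi$, $\alpha$ and $\Omega$ are common to both sides and only the irrelevant base context differs. The whole statement rests on two facts already available: that \cref{eq:arenvar-key} never mentions the base scoping context, and that \cref{eq:twocell-vsuc} lets $\suc{}$ pass through a 2-cell application.
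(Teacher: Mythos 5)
Your proposal is correct and follows essentially the same route as the paper's proof: absorb $\Omega$ into the accumulator via \cref{eq:arenvar-lock}, reduce both sides to the 2-cell action $\transf{\addtele{\Theta}{\Omega}}{\addtele{\Psi}{\Omega}}{\alpha \star 1_{\locks{\Omega}}}{\_}$ using \cref{eq:arenvar-key} (which indeed discards the base context), and conclude with \cref{eq:twocell-vsuc}. The paper's proof is exactly this four-line computation.
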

\begin{proof}
  We can compute that
  \begin{align*}
    \arenexpr{\suc{v}}{\addtele{\key{\alpha}{\Theta}{\Psi}{\addtele{\extendsctx{\hat{\Gamma}}{\mu}}{\Lambda}}}{\Omega}}
    &= \arenexpr[\Omega]{\suc{v}}{\key{\alpha}{\Theta}{\Psi}{\addtele{\extendsctx{\hat{\Gamma}}{\mu}}{\Lambda}}} \\
    &= \transf{\addtele{\Theta}{\Omega}}{\addtele{\Psi}{\Omega}}{\alpha \star \unitTwocell_{\locks{\Omega}}}{\suc{v}} && \explanation{\cref{eq:arenvar-key}} \\
    &= \suc{\transf{\addtele{\Theta}{\Omega}}{\addtele{\Psi}{\Omega}}{\alpha \star \unitTwocell_{\locks{\Omega}}}{v}} && \explanation{\cref{eq:twocell-vsuc}} \\
    &= \suc{\arenexpr{v}{\addtele{\key{\alpha}{\Theta}{\Psi}{\addtele{\hat{\Gamma}}{\Lambda}}}{\Omega}}} && \explanation{\cref{eq:arenvar-key}} \qedhere
  \end{align*}
\end{proof}
\begin{lemma}
  \label{lem:key-pi-commute}
  Key renamings commute with $\pi$ renamings.
  In other words, we have $\arenexpr{\arenexpr{t}{\key{\alpha}{\Lambda}{\Theta}{\hat{\Gamma}}}}{\addtele{\pi}{\Theta}} = \arenexpr{\arenexpr{t}{\addtele{\pi}{\Lambda}}}{\key{\alpha}{\Lambda}{\Theta}{\extendsctx{\hat{\Gamma}}{\mu}}}$ for every expression $\sfexpr{\addtele{\hat{\Gamma}}{\Lambda}}{t}{m}$.
\end{lemma}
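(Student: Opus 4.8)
The plan is to reduce the statement about arbitrary expressions to one about variables using \cref{prop:mixseq-equiv-locktele}, and then to discharge the variable case by a direct computation that is glued together by \cref{lem:key-ren-suc-var}. Observe that both sides of the claimed equality are obtained by applying a two-step mixed sequence of atomic renamings to $t$: on the left the sequence $\bar{\sigma} = \mixsnocren{(\mixsnocren{\idm}{\key{\alpha}{\Lambda}{\Theta}{\hat{\Gamma}}})}{\addtele{\pi}{\Theta}}$, and on the right the sequence $\bar{\tau} = \mixsnocren{(\mixsnocren{\idm}{\addtele{\pi}{\Lambda}})}{\key{\alpha}{\Lambda}{\Theta}{\extendsctx{\hat{\Gamma}}{\mu}}}$, both going from $\addtele{(\extendsctx{\hat{\Gamma}}{\mu})}{\Theta}$ to $\addtele{\hat{\Gamma}}{\Lambda}$. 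Since $\seqexpr{t}{\bar{\sigma}}$ is exactly the left-hand side and $\seqexpr{t}{\bar{\tau}}$ the right-hand side, \cref{prop:mixseq-equiv-locktele} tells us it suffices to prove $\seqexpr{v}{\addtele{\bar{\sigma}}{\Omega}} = \seqexpr{v}{\addtele{\bar{\tau}}{\Omega}}$ for every lock telescope $\Omega : \Locktele{p}{m}$ and every variable $\sfvar{\addtele{(\addtele{\hat{\Gamma}}{\Lambda})}{\Omega}}{v}{p}$. Because appending $\Omega$ distributes over the constituents of a sequence (and appending two lock telescopes in succession is the same as appending their concatenation), this unfolds to the variable-level goal
\[
  \arenexpr{\arenexpr{v}{\addtele{\key{\alpha}{\Lambda}{\Theta}{\hat{\Gamma}}}{\Omega}}}{\addtele{(\addtele{\pi}{\Theta})}{\Omega}} = \arenexpr{\arenexpr{v}{\addtele{(\addtele{\pi}{\Lambda})}{\Omega}}}{\addtele{\key{\alpha}{\Lambda}{\Theta}{\extendsctx{\hat{\Gamma}}{\mu}}}{\Omega}}.
\]

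For the left-hand side I would abbreviate $w := \arenexpr{v}{\addtele{\key{\alpha}{\Lambda}{\Theta}{\hat{\Gamma}}}{\Omega}}$. Peeling the locks of $\Omega$ into the accumulator via \cref{eq:arenvar-lock} and then applying \cref{eq:arenvar-key}, this equals $\transf{\addtele{\Lambda}{\Omega}}{\addtele{\Theta}{\Omega}}{\alpha \star \unitTwocell_{\locks{\Omega}}}{v}$; by \cref{eq:twocell-vzero,eq:twocell-vsuc} it is in particular again a variable. Applying the weakening $\pi = \weaken{\ida}$ to this variable (under the lock telescope obtained by concatenating $\Theta$ and $\Omega$) then gives $\suc{w}$, by \cref{eq:arenvar-weaken} followed by \cref{eq:arenvar-id}. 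So the left-hand side reduces to $\suc{w}$.

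For the right-hand side, applying $\pi$ to $v$ first turns it into $\suc{v}$ by the very same two equations, and then \cref{lem:key-ren-suc-var}, instantiated with an empty intermediate lock telescope (so that its base context $\extendsctx{\hat{\Gamma}}{\mu}$ is exactly ours, with its source/target telescopes taken to be $\Lambda$ and $\Theta$), lets me pull the successor back out:
\[
  \arenexpr{\suc{v}}{\addtele{\key{\alpha}{\Lambda}{\Theta}{\extendsctx{\hat{\Gamma}}{\mu}}}{\Omega}} = \suc{\arenexpr{v}{\addtele{\key{\alpha}{\Lambda}{\Theta}{\hat{\Gamma}}}{\Omega}}} = \suc{w}.
\]
Hence both sides equal $\suc{w}$, the two sequences agree on every variable under every lock telescope, and \cref{prop:mixseq-equiv-locktele} yields the lemma.

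I expect no deep obstacle; the argument is essentially careful bookkeeping with lock telescopes. The one genuinely load-bearing point is the commutation of the successor constructor with the key renaming on the right-hand side, which is precisely the content of \cref{lem:key-ren-suc-var}, so the only real subtlety is getting its instantiation right (empty intermediate telescope, matching source/target telescopes $\Lambda$ and $\Theta$) and recognising that the $2$-cell–transformed variable $w$ produced on the left is literally the variable that \cref{lem:key-ren-suc-var} reproduces on the right.
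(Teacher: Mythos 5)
Your proposal is correct and follows essentially the same route as the paper's own proof: reduce to variables under an arbitrary lock telescope via \cref{prop:mixseq-equiv-locktele}, observe that the left-hand side computes to a successor of the key-renamed variable, and use \cref{lem:key-ren-suc-var} (with the intermediate telescope of that lemma taken to be empty) to match it with the right-hand side. The paper's version is just a more compact chain of three equalities, whereas you compute both sides down to $\suc{w}$ separately, but the decomposition and the load-bearing lemma are identical.
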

\begin{proof}
  We use \cref{prop:mixseq-equiv-locktele}, so we take an arbitrary lock telescope $\Psi$ and a variable $\sfvar{\addtele{\hat{\Gamma}}{\extendsctx{\Lambda}{\Psi}}}{v}{n}$.
  Then we can compute that
  \begin{align*}
    \arenexpr[\Psi]{\arenexpr[\Psi]{v}{\key{\alpha}{\Lambda}{\Theta}{\hat{\Gamma}}}}{\addtele{\pi}{\Theta}}
    &= \suc{\arenexpr[\Psi]{v}{\key{\alpha}{\Lambda}{\Theta}{\hat{\Gamma}}}} \\
    &= \arenexpr[\Psi]{\suc{v}}{\key{\alpha}{\Lambda}{\Theta}{\extendsctx{\hat{\Gamma}}{\mu}}} && \explanation{\cref{lem:key-ren-suc-var}} \\
    &= \arenexpr[\Psi]{\arenexpr[\Psi]{v}{\addtele{\pi}{\Lambda}}}{\key{\alpha}{\Lambda}{\Theta}{\extendsctx{\hat{\Gamma}}{\mu}}}. && \qedhere
  \end{align*}
\end{proof}

\begin{proposition}
  \label{prop:key-unit-cell}
  For every lock telescope $\Lambda : \Locktele{n}{m}$ and \subfree{} expression $\sfexpr{\addtele{\hat{\Gamma}}{\Lambda}}{t}{n}$ we have that $\arenexpr{t}{\key{1_{\locks{\Lambda}}}{\Lambda}{\Lambda}{\hat{\Gamma}}} = t$.
\end{proposition}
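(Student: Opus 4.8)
The plan is to avoid a direct structural induction over $t$ and instead feed the whole statement into the proof technique of \cref{prop:mixseq-equiv-locktele}. I would regard the single atomic key renaming $\key{1_{\locks{\Lambda}}}{\Lambda}{\Lambda}{\hat{\Gamma}}$ as the one-element mixed sequence $\bar{\sigma} := \mixsnocren{\idm}{\key{1_{\locks{\Lambda}}}{\Lambda}{\Lambda}{\hat{\Gamma}}}$, which runs from $\extendsctx{\hat{\Gamma}}{\Lambda}$ to $\extendsctx{\hat{\Gamma}}{\Lambda}$ at mode $n$, and compare it with the empty sequence $\bar{\tau} := \idm$. Since $\seqexpr{t}{\bar{\sigma}} = \arenexpr{t}{\key{1_{\locks{\Lambda}}}{\Lambda}{\Lambda}{\hat{\Gamma}}}$ and $\seqexpr{t}{\bar{\tau}} = t$, proving $\seqexpr{t}{\bar{\sigma}} = \seqexpr{t}{\bar{\tau}}$ for all $t$ is exactly the claim. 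By \cref{prop:mixseq-equiv-locktele} it then suffices to verify the hypothesis $\seqexpr{v}{\addtele{\bar{\sigma}}{\Omega}} = \seqexpr{v}{\addtele{\bar{\tau}}{\Omega}}$ for every lock telescope $\Omega : \Locktele{p}{n}$ and every variable $\sfvar{\addtele{(\extendsctx{\hat{\Gamma}}{\Lambda})}{\Omega}}{v}{p}$. The right-hand side is simply $v$, so everything reduces to a claim about variables.

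For the left-hand side, I would first note that applying a lock telescope distributes over the sequence and that $\lock{\idm}{\mu} = \idm$, so $\addtele{\bar{\sigma}}{\Omega} = \mixsnocren{\idm}{\addtele{\key{1_{\locks{\Lambda}}}{\Lambda}{\Lambda}{\hat{\Gamma}}}{\Omega}}$ and hence $\seqexpr{v}{\addtele{\bar{\sigma}}{\Omega}} = \arenexpr[\Omega]{v}{\key{1_{\locks{\Lambda}}}{\Lambda}{\Lambda}{\hat{\Gamma}}}$, the accumulator telescope $\Omega$ arising from repeated use of \cref{eq:arenvar-lock}. Now \cref{eq:arenvar-key} rewrites this to the $2$-cell transformation
\[
  \transf{\extendsctx{\Lambda}{\Omega}}{\extendsctx{\Lambda}{\Omega}}{1_{\locks{\Lambda}} \star 1_{\locks{\Omega}}}{v},
\]
and by the laws of a strict $2$-category the horizontal composite $1_{\locks{\Lambda}} \star 1_{\locks{\Omega}}$ of identity $2$-cells is again an identity $2$-cell. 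It therefore remains to show that transforming a variable along an identity $2$-cell returns it unchanged.

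This last fact I would prove by induction on $v$ using \cref{eq:twocell-vzero,eq:twocell-vsuc}. For $v = \suc{v'}$, \cref{eq:twocell-vsuc} pushes the transformation under $\suc{\_}$ while keeping the (identity) $2$-cell fixed, so the induction hypothesis applies immediately. For $v = \vzero{\beta}$, \cref{eq:twocell-vzero} produces $\vzero{\gamma}$ with $\gamma = (1 \star 1) \circ \beta$; another appeal to the strict $2$-category laws ($1 \star 1 = 1$ and $1 \circ \beta = \beta$) collapses $\gamma$ to $\beta$, recovering $\vzero{\beta}$. The only genuine content is this $2$-cell bookkeeping: one must track precisely which identity $2$-cells are generated by the horizontal composite and by \cref{eq:twocell-vzero}, and confirm that it is the \emph{strict} (not merely pseudo) $2$-category laws that make them all cancel. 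I expect this $2$-cell cancellation to be the main obstacle, as everything else is a mechanical consequence of the reduction afforded by \cref{prop:mixseq-equiv-locktele}.
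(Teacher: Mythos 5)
Your proposal is correct and matches the paper's proof in all essentials: both reduce the claim to variables via \cref{prop:mixseq-equiv-locktele} and then induct on the variable, discharging the $\vzero{\alpha}$ case by the strict 2-category laws applied to the 2-cell produced by \cref{eq:arenvar-key,eq:twocell-vzero}. The only cosmetic difference is that you apply \cref{eq:arenvar-key} once up front and carry out the induction inside the 2-cell action $\transf{\Theta}{\Psi}{\alpha}{v}$ via \cref{eq:twocell-vsuc}, whereas the paper keeps the key renaming in place and handles the $\suc{v'}$ case through \cref{lem:key-ren-suc-var}, which encapsulates the same computation.
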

\begin{proof}
  We use \cref{prop:mixseq-equiv-locktele}, so we have to show that $\arenexpr{v}{\addtele{\key{1_{\locks{\Lambda}}}{\Lambda}{\Lambda}{\hat{\Gamma}}}{\Theta}} = v$ for all lock telescopes $\Theta : \Locktele{o}{n}$ and variables $\sfvar{\addtele{\hat{\Gamma}}{\extendsctx{\Lambda}{\Theta}}}{v}{o}$.
  This proof proceeds by induction on the variable $v$.
  \begin{itemize}
  \item \case{} $v = \vzero{\alpha}$ with $\hat{\Gamma} = \addtele{\extendsctx{\hat{\Gamma}'}{\mu}}{\Psi}$ \\
    We have
    \begin{align*}
      &\arenexpr{\vzero{\alpha}}{\addtele{\key{1_{\locks{\Lambda}}}{\Lambda}{\Lambda}{\addtele{\extendsctx{\hat{\Gamma}'}{\mu}}{\Psi}}}{\Theta}} \\
      &= \arenexpr[\Theta]{\vzero{\alpha}}{\key{1_{\locks{\Lambda}}}{\Lambda}{\Lambda}{\addtele{\extendsctx{\hat{\Gamma}'}{\mu}}{\Psi}}} \\
      &= \transf{\extendsctx{\Lambda}{\Theta}}{\extendsctx{\Lambda}{\Theta}}{1_{\locks{\Lambda}} \star 1_{\locks{\Theta}}}{\vzero{\alpha}} && \explanation{\cref{eq:arenvar-key}} \\
      &= \vzero{(1_{\locks{\Psi}} \star (1_{\locks{\Lambda}} \star 1_{\locks{\Theta}})) \circ \alpha} && \explanation{\cref{eq:twocell-vzero}} \\
      &= \vzero{\alpha}. && \explanation{Strict 2-category laws}
    \end{align*}
  \item \case{} $v = \suc{v'}$ with $\hat{\Gamma} = \addtele{\hat{\Gamma}'}{\extendsctx{\mu}{\Psi}}$ \\
    Now we can compute
    \begin{align*}
      &\arenexpr{\suc{v'}}{\addtele{\key{1_{\locks{\Lambda}}}{\Lambda}{\Lambda}{\addtele{\hat{\Gamma}'}{\extendsctx{\mu}{\Psi}}}}{\Theta}} \\
      &= \suc{\arenexpr{v'}{\addtele{\key{1_{\locks{\Lambda}}}{\Lambda}{\Lambda}{\addtele{\hat{\Gamma}'}{\Psi}}}{\Theta}}} && \explanation{\cref{lem:key-ren-suc-var}} \\
      &= \suc{v'}. && \explanation{Induction hypothesis} \qedhere
    \end{align*}
  \end{itemize}
\end{proof}

\begin{proposition}
  \label{prop:key-vert-comp-cell}
  If $\Lambda_1, \Lambda_2, \Lambda_3 : \Locktele{n}{m}$ are lock telescopes, $\twocell{\alpha}{\locks{\Lambda_1}}{\locks{\Lambda_2}}$ and $\twocell{\beta}{\locks{\Lambda_2}}{\locks{\Lambda_3}}$ are 2-cells and $\sfexpr{\addtele{\hat{\Gamma}}{\Lambda_1}}{t}{n}$ is an expression, then $\arenexpr{t}{\key{\beta \circ \alpha}{\Lambda_1}{\Lambda_3}{\hat{\Gamma}}} = \arenexpr{\arenexpr{t}{\key{\alpha}{\Lambda_1}{\Lambda_2}{\hat{\Gamma}}}}{\key{\beta}{\Lambda_2}{\Lambda_3}{\hat{\Gamma}}}$.
\end{proposition}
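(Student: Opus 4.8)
The plan is to follow the same strategy as in the proof of \cref{prop:key-unit-cell}. First I would invoke \cref{prop:mixseq-equiv-locktele} with the two mixed sequences $\bar{\sigma}$, consisting of the single renaming $\key{\beta \circ \alpha}{\Lambda_1}{\Lambda_3}{\hat{\Gamma}}$, and $\bar{\tau}$, consisting of $\key{\alpha}{\Lambda_1}{\Lambda_2}{\hat{\Gamma}}$ followed by $\key{\beta}{\Lambda_2}{\Lambda_3}{\hat{\Gamma}}$; both are renamings from $\extendsctx{\hat{\Gamma}}{\Lambda_3}$ to $\extendsctx{\hat{\Gamma}}{\Lambda_1}$, so their domains and codomains match. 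It then suffices to prove, for every lock telescope $\Theta : \Locktele{o}{n}$ and every variable $\sfvar{\addtele{\hat{\Gamma}}{\extendsctx{\Lambda_1}{\Theta}}}{v}{o}$, that $\arenexpr{v}{\addtele{\key{\beta \circ \alpha}{\Lambda_1}{\Lambda_3}{\hat{\Gamma}}}{\Theta}} = \arenexpr{\arenexpr{v}{\addtele{\key{\alpha}{\Lambda_1}{\Lambda_2}{\hat{\Gamma}}}{\Theta}}}{\addtele{\key{\beta}{\Lambda_2}{\Lambda_3}{\hat{\Gamma}}}{\Theta}}$. Since $\extendsctx{\Lambda_1}{\Theta}$ contains only locks, any such variable originates in $\hat{\Gamma}$, so I may assume $\hat{\Gamma} = \addtele{\extendsctx{\hat{\Gamma}'}{\rho}}{\Psi}$ for a modality $\rho$ and a lock telescope $\Psi$, and proceed by induction on $v$.

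In the base case $v = \vzero{\gamma}$, I would rewrite both sides using \cref{eq:arenvar-key} (with $\Theta$ as the accumulated lock telescope), turning each key renaming into an application of the 2-cell operation $\transf{}{}{}{}$, and then apply \cref{eq:twocell-vzero}. The left-hand side becomes $\vzero{(1_{\locks{\Psi}} \star ((\beta \circ \alpha) \star 1_{\locks{\Theta}})) \circ \gamma}$, whereas the right-hand side becomes $\vzero{(1_{\locks{\Psi}} \star (\beta \star 1_{\locks{\Theta}})) \circ ((1_{\locks{\Psi}} \star (\alpha \star 1_{\locks{\Theta}})) \circ \gamma)}$. These two variables are equal precisely because whiskering preserves vertical composition in a strict 2-category: right whiskering by $1_{\locks{\Theta}}$ gives $(\beta \circ \alpha) \star 1_{\locks{\Theta}} = (\beta \star 1_{\locks{\Theta}}) \circ (\alpha \star 1_{\locks{\Theta}})$, left whiskering by $1_{\locks{\Psi}}$ then distributes over this composite, and associativity of $\circ$ reassociates the annotation. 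This manipulation of composite 2-cells is the step I expect to be the main obstacle; everything else is bookkeeping.

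In the step case $v = \suc{v'}$, where $v'$ lives in $\addtele{\addtele{\hat{\Gamma}'}{\Psi}}{\extendsctx{\Lambda_1}{\Theta}}$, I would apply \cref{lem:key-ren-suc-var} to each key renaming — once on the left and twice on the right — to pull the $\suc$ outside and simultaneously drop the $\rho$-variable from the base context, replacing $\hat{\Gamma} = \addtele{\extendsctx{\hat{\Gamma}'}{\rho}}{\Psi}$ by $\addtele{\hat{\Gamma}'}{\Psi}$. Both sides then have the form $\suc{(\ldots)}$, and the induction hypothesis for the structurally smaller variable $v'$ (with base context $\addtele{\hat{\Gamma}'}{\Psi}$ and the same $\Lambda_1, \Lambda_2, \Lambda_3, \alpha, \beta$) identifies the two bodies, which closes the induction and hence the proof.
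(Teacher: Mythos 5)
Your proposal is correct and follows essentially the same route as the paper's own proof: reduce to variables via \cref{prop:mixseq-equiv-locktele}, handle $\vzero{\gamma}$ by \cref{eq:arenvar-key}, \cref{eq:twocell-vzero} and the interchange/whiskering laws of a strict 2-category, and handle $\suc{v'}$ by \cref{lem:key-ren-suc-var} plus the induction hypothesis. The intermediate 2-cell annotations you compute in the base case match those in the paper's calculation.
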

\begin{proof}
  The proof is similar to that of \cref{prop:key-unit-cell}, so we use \cref{prop:mixseq-equiv-locktele} and take an arbitrary lock telescope $\Theta :\Locktele{o}{n}$ and variable $\sfvar{\addtele{\hat{\Gamma}}{\extendsctx{\Lambda_1}{\Theta}}}{v}{o}$. Then we prove that $\arenexpr[\Theta]{v}{\key{\beta \circ \alpha}{\Lambda_1}{\Lambda_3}{\hat{\Gamma}}} = \arenexpr[\Theta]{\arenexpr[\Theta]{v}{\key{\alpha}{\Lambda_1}{\Lambda_2}{\hat{\Gamma}}}}{\key{\beta}{\Lambda_2}{\Lambda_3}{\hat{\Gamma}}}$ by induction on $v$.
  \begin{itemize}
  \item \case{} $v = \vzero{\gamma}$ with $\hat{\Gamma} = \addtele{\extendsctx{\hat{\Gamma}'}{\mu}}{\Psi}$ and $\twocell{\gamma}{\mu}{\locks{\addtele{\addtele{\Psi}{\Lambda_1}}{\Theta}}}$ \\
    Now we have
    \begin{align*}
      &\arenexpr[\Theta]{\vzero{\gamma}}{\key{\beta \circ \alpha}{\Lambda_1}{\Lambda_3}{\addtele{\extendsctx{\hat{\Gamma}'}{\mu}}{\Psi}}} \\
      &= \transf{\extendsctx{\Lambda_1}{\Theta}}{\extendsctx{\Lambda_3}{\Theta}}{(\beta \circ \alpha) \star 1_{\locks{\Theta}}}{\vzero{\gamma}} && \explanation{\cref{eq:arenvar-key}} \\
      &= \vzero{\left( 1_{\locks{\Psi}} \star ((\beta \circ \alpha) \star 1_{\locks{\Theta}}) \right) \circ \gamma} && \explanation{\cref{eq:twocell-vzero}} \\
      &= \vzero{\left( 1_{\locks{\Psi}} \star (\beta \star 1_{\locks{\Theta}}) \right) \circ \left( 1_{\locks{\Psi}} \star (\alpha \star 1_{\locks{\Theta}}) \right) \circ \gamma} && \explanation{Strict 2-category laws} \\
      &= \transf{\extendsctx{\Lambda_2}{\Theta}}{\extendsctx{\Lambda_3}{\Theta}}{\beta}{\vzero{\left( 1_{\locks{\Psi}} \star (\alpha \star 1_{\locks{\Theta}}) \right) \circ \gamma}} && \explanation{\cref{eq:twocell-vzero}} \\
      &= \transf{\extendsctx{\Lambda_2}{\Theta}}{\extendsctx{\Lambda_3}{\Theta}}{\beta}{\transf{\extendsctx{\Lambda_1}{\Theta}}{\extendsctx{\Lambda_2}{\Theta}}{\alpha}{\vzero{\gamma}}} && \explanation{\cref{eq:twocell-vzero}} \\
      &= \arenexpr[\Theta]{\arenexpr[\Theta]{\vzero{\gamma}}{\key{\alpha}{\Lambda_1}{\Lambda_2}{\hat{\Gamma}}}}{\key{\beta}{\Lambda_2}{\Lambda_3}{\hat{\Gamma}}}. && \explanation{\cref{eq:arenvar-key}}
    \end{align*}
  \item \case{} $v = \suc{v'}$ with $\hat{\Gamma} = \addtele{\extendsctx{\hat{\Gamma}'}{\mu}}{\Psi}$ \\
    Similarly as in the proof of \cref{prop:key-unit-cell} we compute
    \begin{align*}
      &\arenexpr[\Theta]{\suc{v'}}{\key{\beta \circ \alpha}{\Lambda_1}{\Lambda_3}{\addtele{\extendsctx{\hat{\Gamma}'}{\mu}}{\Psi}}} \\
      &= \suc{\arenexpr[\Theta]{v'}{\key{\beta \circ \alpha}{\Lambda_1}{\Lambda_3}{\addtele{\hat{\Gamma}'}{\Psi}}}} && \explanation{\cref{lem:key-ren-suc-var}} \\
      &= \suc{\arenexpr[\Theta]{\arenexpr[\Theta]{v'}{\key{\alpha}{\Lambda_1}{\Lambda_2}{\addtele{\hat{\Gamma}'}{\Psi}}}}{\key{\beta}{\Lambda_2}{\Lambda_3}{\addtele{\hat{\Gamma}'}{\Psi}}}} && \explanation{Induction hypothesis} \\
      &= \arenexpr[\Theta]{\arenexpr[\Theta]{\suc{v'}}{\key{\alpha}{\Lambda_1}{\Lambda_2}{\addtele{\extendsctx{\hat{\Gamma}'}{\mu}}{\Psi}}}}{\key{\beta}{\Lambda_2}{\Lambda_3}{\addtele{\extendsctx{\hat{\Gamma}'}{\mu}}{\Psi}}}. && \explanation{\cref{lem:key-ren-suc-var}} \qedhere
    \end{align*}
  \end{itemize}
\end{proof}

\begin{proposition}
  \label{prop:key-hor-comp-cell}
  Given lock telescopes $\Lambda_1, \Lambda_2 : \Locktele{n}{m}$, $\Theta_1, \Theta_2 : \Locktele{o}{n}$ and 2-cells $\twocell{\beta}{\locks{\Lambda_1}}{\locks{\Lambda_2}}$ and $\twocell{\alpha}{\locks{\Theta_1}}{\locks{\Theta_2}}$, the following two equations hold for any expression $\sfexpr{\addtele{\hat{\Gamma}}{\extendsctx{\Lambda_1}{\Theta_1}}}{t}{o}$
  \begin{align*}
    \arenexpr{t}{\key{\beta \star \alpha}{\extendsctx{\Lambda_1}{\Theta_1}}{\extendsctx{\Lambda_2}{\Theta_2}}{\hat{\Gamma}}}
    &= \arenexpr{\arenexpr{t}{\addtele{\key{\beta}{\Lambda_1}{\Lambda_2}{\hat{\Gamma}}}{\Theta_1}}}{\key{\alpha}{\Theta_1}{\Theta_2}{\addtele{\hat{\Gamma}}{\Lambda_2}}} \\
    &= \arenexpr{\arenexpr{t}{\key{\alpha}{\Theta_1}{\Theta_2}{\addtele{\hat{\Gamma}}{\Lambda_1}}}}{\addtele{\key{\beta}{\Lambda_1}{\Lambda_2}{\hat{\Gamma}}}{\Theta_2}}.
  \end{align*}
\end{proposition}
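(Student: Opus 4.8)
The plan is to establish both equations with the template already used for \cref{prop:key-unit-cell,prop:key-vert-comp-cell}. I would first invoke \cref{prop:mixseq-equiv-locktele}, which reduces each of the two claimed equalities of renamings to a statement about \emph{variables}: it suffices to fix an arbitrary lock telescope $\Omega : \Locktele{p}{o}$ and a variable $\sfvar{\addtele{\hat{\Gamma}}{\extendsctx{\extendsctx{\Lambda_1}{\Theta_1}}{\Omega}}}{v}{p}$, and to show that applying the renamings on either side (now with the extra telescope $\Omega$) to $v$ yields the same variable. Throughout I would freely use \cref{eq:arenvar-lock} to turn a telescope-attached key $\addtele{\key{\beta}{\Lambda_1}{\Lambda_2}{\hat{\Gamma}}}{\Theta_i}$ applied with $\Omega$ into the plain key $\key{\beta}{\Lambda_1}{\Lambda_2}{\hat{\Gamma}}$ applied with the longer telescope $\extendsctx{\Theta_i}{\Omega}$. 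I would then proceed by induction on $v$.

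For the base case $v = \vzero{\gamma}$, writing $\hat{\Gamma} = \addtele{\extendsctx{\hat{\Gamma}'}{\rho}}{\Psi}$ with $\gamma$ the 2-cell decorating the variable, I would unfold every key via \cref{eq:arenvar-key} and reduce the resulting 2-cell transports on $\vzero{\gamma}$ using \cref{eq:twocell-vzero}. Each of the three expressions then collapses to a single $\vzero{\cdot}$ whose decoration has the shape $(1_{\locks{\Psi}} \star (-) \star 1_{\locks{\Omega}}) \circ \gamma$. The decisive observation is that attaching $\Theta_i$ contributes the whiskering factor $\beta \star 1_{\locks{\Theta_i}}$, while passing to the base context $\addtele{\hat{\Gamma}}{\Lambda_j}$ contributes $1_{\locks{\Lambda_j}} \star \alpha$ (since $\locks{\extendsctx{\Psi}{\Lambda_j}} = \locks{\Psi} \circ \locks{\Lambda_j}$). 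Thus the first right-hand side produces the inner 2-cell $(1_{\locks{\Lambda_2}} \star \alpha) \circ (\beta \star 1_{\locks{\Theta_1}})$, the second produces $(\beta \star 1_{\locks{\Theta_2}}) \circ (1_{\locks{\Lambda_1}} \star \alpha)$, and the left-hand side produces $\beta \star \alpha$.

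The crux of the base case is then a purely 2-categorical identity in $\catM$: both right-hand inner 2-cells equal $\beta \star \alpha$ by the interchange law, and the surrounding whiskering by $1_{\locks{\Psi}}$ and $1_{\locks{\Omega}}$ is compatible with vertical composition by functoriality of horizontal composition (using $1_{f \circ g} = 1_f \star 1_g$ and associativity of $\star$ to split the unit decorations that the longer telescopes $\extendsctx{\Theta_i}{\Omega}$ introduce). I expect this bookkeeping of whiskering units to be the main obstacle, just as the 2-cell juggling was the only real content of \cref{prop:key-vert-comp-cell}; once the decorations are normalised, the two equations are precisely the two interchange factorisations of $\beta \star \alpha$.

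For the inductive case $v = \suc{v'}$, again with $\hat{\Gamma} = \addtele{\extendsctx{\hat{\Gamma}'}{\rho}}{\Psi}$, I would repeatedly apply \cref{lem:key-ren-suc-var} (combined with the \cref{eq:arenvar-lock} rewriting above, so that it also covers the telescope-attached keys) to pull the outer $\suc{}$ through every key renaming appearing on both sides, reducing each side to the corresponding expression for the structurally smaller variable $v'$ in context $\addtele{\hat{\Gamma}'}{\Psi}$. The induction hypothesis then equates the two sides, and reassembling $\suc{}$ via \cref{lem:key-ren-suc-var} closes both equations.
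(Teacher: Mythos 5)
Your proposal is correct and follows essentially the same route as the paper's proof: reduce to variables via \cref{prop:mixseq-equiv-locktele} with an extra lock telescope, induct on the variable, unfold the base case with \cref{eq:arenvar-key,eq:twocell-vzero} and close it with the interchange-law factorisations $(1_{\locks{\Lambda_2}} \star \alpha) \circ (\beta \star 1_{\locks{\Theta_1}}) = \beta \star \alpha = (\beta \star 1_{\locks{\Theta_2}}) \circ (1_{\locks{\Lambda_1}} \star \alpha)$, and handle $\suc{v'}$ by \cref{lem:key-ren-suc-var} plus the induction hypothesis. The only cosmetic difference is that you carry out both equations explicitly, whereas the paper proves the first and declares the second analogous.
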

\begin{proof}
  We only prove the first equality, the second one can be proved similarly.
  Making use of \cref{prop:mixseq-equiv-locktele}, we introduce a lock telescope $\Psi : \Locktele{p}{o}$ and a variable $\sfvar{\addtele{\hat{\Gamma}}{\extendsctx{\Lambda_1}{\extendsctx{\Theta_1}{\Psi}}}}{v}{p}$, and then we need to show that $\arenexpr[\Psi]{v}{\key{\beta \star \alpha}{\extendsctx{\Lambda_1}{\Theta_1}}{\extendsctx{\Lambda_2}{\Theta_2}}{\hat{\Gamma}}} = \arenexpr[\Psi]{\arenexpr[\Psi]{t}{\addtele{\key{\beta}{\Lambda_1}{\Lambda_2}{\hat{\Gamma}}}{\Theta_1}}}{\key{\alpha}{\Theta_1}{\Theta_2}{\addtele{\hat{\Gamma}}{\Lambda_2}}}$.
  This proof proceeds by induction on $v$.
  \begin{itemize}
    \item \case{} $v = \vzero{\gamma}$ with $\hat{\Gamma} = \addtele{\extendsctx{\hat{\Gamma}'}{\mu}}{\Omega}$ and $\twocell{\gamma}{\mu}{\locks{\extendsctx{\extendsctx{\extendsctx{\Omega}{\Lambda_1}}{\Theta_1}}{\Psi}}}$ \\
      Now we compute that
      \begin{align*}
        &\arenexpr[\Psi]{\vzero{\gamma}}{\key{\beta \star \alpha}{\extendsctx{\Lambda_1}{\Theta_1}}{\extendsctx{\Lambda_2}{\Theta_2}}{\addtele{\extendsctx{\hat{\Gamma}'}{\mu}}{\Omega}}} \\
        &= \transf{\extendsctx{\Lambda_1}{\extendsctx{\Theta_1}{\Psi}}}{\extendsctx{\Lambda_2}{\extendsctx{\Theta_2}{\Psi}}}{(\beta \star \alpha) \star 1_{\locks{\Psi}}}{\vzero{\gamma}} && \explanation{\cref{eq:arenvar-key}} \\
        &= \vzero{(1_{\locks{\Omega}} \star (\beta \star \alpha) \star 1_{\locks{\Psi}}) \circ \gamma} && \explanation{\cref{eq:twocell-vzero}} \\
        &= \vzero{(1_{\locks{\Omega}} \star (1_{\locks{\Lambda_2}} \star \alpha) \star 1_{\locks{\Psi}}) \circ (1_{\locks{\Omega}} \star (\beta \star 1_{\locks{\Theta_1}}) \star 1_{\locks{\Psi}}) \circ \gamma} && \explanation{Strict 2-category laws} \\
        &= \vzero{(1_{\locks{\addtele{\Omega}{\Lambda_2}}} \star (\alpha \star 1_{\locks{\Psi}})) \circ (1_{\locks{\Omega}} \star (\beta \star 1_{\locks{\addtele{\Theta_1}{\Psi}}}) ) \circ \gamma} && \explanation{Strict 2-category laws} \\
        &= \mathrlap{\transf{\extendsctx{\Theta_1}{\Psi}}{\extendsctx{\Theta_2}{\Psi}}{\alpha \star 1_{\locks{\Psi}}}{\transf{\extendsctx{\extendsctx{\Lambda_1}{\Theta_1}}{\Psi}}{\extendsctx{\extendsctx{\Lambda_2}{\Theta_1}}{\Psi}}{\beta \star 1_{\locks{\addtele{\Theta_1}{\Psi}}}}{\vzero{\gamma}}}} \\
        &&& \explanation{\cref{eq:twocell-vzero}} \\
        &= \arenexpr[\Psi]{\arenexpr[\addtele{\Theta_1}{\Psi}]{\vzero{\gamma}}{\key{\beta}{\Lambda_1}{\Lambda_2}{\hat{\Gamma}}}}{\key{\alpha}{\Theta_1}{\Theta_2}{\addtele{\hat{\Gamma}}{\Lambda_2}}} && \explanation{\cref{eq:arenvar-key}} \\
        &= \arenexpr[\Psi]{\arenexpr[\Psi]{\vzero{\gamma}}{\addtele{\key{\beta}{\Lambda_1}{\Lambda_2}{\hat{\Gamma}}}{\Theta_1}}}{\key{\alpha}{\Theta_1}{\Theta_2}{\addtele{\hat{\Gamma}}{\Lambda_2}}}
      \end{align*}
    \item \case{} $v = \suc{v'}$ with $\hat{\Gamma} = \addtele{\extendsctx{\hat{\Gamma}'}{\mu}}{\Omega}$ \\
      In this case we have
      \begin{align*}
        &\arenexpr[\Psi]{\suc{v'}}{\key{\beta \star \alpha}{\extendsctx{\Lambda_1}{\Theta_1}}{\extendsctx{\Lambda_2}{\Theta_2}}{\addtele{\extendsctx{\hat{\Gamma}'}{\mu}}{\Omega}}} \\
        &= \suc{\arenexpr[\Psi]{v'}{\key{\beta \star \alpha}{\extendsctx{\Lambda_1}{\Theta_1}}{\extendsctx{\Lambda_2}{\Theta_2}}{\addtele{\hat{\Gamma}'}{\Omega}}}} && \explanation{\cref{lem:key-ren-suc-var}} \\
        &= \suc{\arenexpr[\Psi]{\arenexpr[\Psi]{v'}{\addtele{\key{\beta}{\Lambda_1}{\Lambda_2}{\hat{\Gamma}}}{\Theta_1}}}{\key{\alpha}{\Theta_1}{\Theta_2}{\addtele{\hat{\Gamma}}{\Lambda_2}}}} && \explanation{Induction hypothesis} \\
        &= \arenexpr[\Psi]{\arenexpr[\Psi]{\suc{v'}}{\addtele{\key{\beta}{\Lambda_1}{\Lambda_2}{\hat{\Gamma}}}{\Theta_1}}}{\key{\alpha}{\Theta_1}{\Theta_2}{\addtele{\hat{\Gamma}}{\Lambda_2}}}. && \explanation{\cref{lem:key-ren-suc-var}} \qedhere
      \end{align*}
  \end{itemize}
\end{proof}

\begin{proposition}
  \label{prop:key-natural}
  Key renamings are natural. In other words, given lock telescopes $\Lambda, \Theta : \Locktele{n}{m}$, a 2-cell $\twocell{\alpha}{\locks{\Lambda}}{\locks{\Theta}}$, a substitution $\sfsub{\sigma}{\hat{\Gamma}}{\hat{\Delta}}{m}$ and an expression $\sfexpr{\addtele{\hat{\Delta}}{\Lambda}}{t}{n}$, we have that $\subexpr{\arenexpr{t}{\key{\alpha}{\Lambda}{\Theta}{\hat{\Delta}}}}{\addtele{\sigma}{\Theta}} = \arenexpr{\subexpr{t}{\addtele{\sigma}{\Lambda}}}{\key{\alpha}{\Lambda}{\Theta}{\hat{\Gamma}}}$.
\end{proposition}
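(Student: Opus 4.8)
The plan is to prove the statement by induction on the regular substitution $\sigma$, peeling it down to the case of a single atomic substitution and then attacking that case with the proof technique of \cref{prop:mixseq-equiv-locktele}. In the base case $\sigma = \id$ we have $\addtele{\id}{\Theta} = \id = \addtele{\id}{\Lambda}$, so both sides collapse to $\arenexpr{t}{\key{\alpha}{\Lambda}{\Theta}{\hat{\Gamma}}}$ (here $\hat{\Gamma} = \hat{\Delta}$). For $\sigma = \rensubcons{\sigma'}{\tau}$ with $\tau$ atomic, I would use that lock-telescope application distributes over the snoc, so that $\addtele{\sigma}{\Theta} = \rensubcons{\addtele{\sigma'}{\Theta}}{\addtele{\tau}{\Theta}}$ (and similarly for $\Lambda$), and that applying such a regular substitution amounts to applying $\sigma'$ first and then $\tau$. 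Applying the induction hypothesis to the $\sigma'$-layer rewrites $\subexpr{\arenexpr{t}{\key{\alpha}{\Lambda}{\Theta}{\hat{\Delta}}}}{\addtele{\sigma'}{\Theta}}$ into $\arenexpr{\subexpr{t}{\addtele{\sigma'}{\Lambda}}}{\key{\alpha}{\Lambda}{\Theta}{\hat{\Gamma}'}}$, where $\hat{\Gamma}'$ is the intermediate scoping context of the snoc. The whole goal then reduces to the \emph{same} naturality statement, but for the single atomic substitution $\tau$ and the expression $s := \subexpr{t}{\addtele{\sigma'}{\Lambda}}$.

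For this atomic case I would invoke \cref{prop:mixseq-equiv-locktele} with the two parallel mixed sequences that apply $\key{\alpha}{\Lambda}{\Theta}{\hat{\Delta}}$ and then $\addtele{\tau}{\Theta}$, versus $\addtele{\tau}{\Lambda}$ and then $\key{\alpha}{\Lambda}{\Theta}{\hat{\Gamma}}$ (both running from $\addtele{\hat{\Delta}}{\Lambda}$ to $\addtele{\hat{\Gamma}}{\Theta}$). This reduces the goal to an equality after acting on an arbitrary variable $v$ in a context extended by a further lock telescope $\Omega$. Here I would unfold: the lock-telescope-wrapped key renamings become a single transformation $\transf{\extendsctx{\Lambda}{\Omega}}{\extendsctx{\Theta}{\Omega}}{\alpha \star 1_{\locks{\Omega}}}{v}$ by \cref{eq:arenvar-lock,eq:arenvar-key}, whereas each locked copy $\addtele{\tau}{\Theta}$ and $\addtele{\tau}{\Lambda}$ absorbs its locks into the ambient lock telescope via \cref{eq:asubvar-lock}. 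What remains is a purely variable-level identity describing how $\tau$'s action on variables commutes with the $2$-cell transformation and with the residual key renaming.

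I would establish this variable-level identity by induction on the structure of the atomic substitution $\tau$, case-splitting on $v$ in the extend case. The case $\tau = \ida$ is immediate from \cref{eq:asubvar-id}; the case $\tau = \lock{\tau'}{\rho}$ follows by absorbing $\rho$ into the ambient lock telescope on both sides and applying the induction hypothesis; the case $\tau = \weaken{\tau'}$ produces a $\pi$-renaming (\cref{eq:asubvar-weaken}) which I would commute past the key renaming using \cref{lem:key-pi-commute} before concluding by the induction hypothesis. The key-renaming case (\cref{eq:asubvar-key}) and the zero-variable subcase of the extend case (\cref{eq:asubvar-extend-vzero}) both stack a second transformation/key on top of the first; reassociating the resulting nested $2$-cell actions is exactly the content of the vertical and horizontal composition laws for keys (\cref{prop:key-vert-comp-cell,prop:key-hor-comp-cell}), combined with \cref{eq:twocell-vzero,eq:twocell-vsuc} and the strict $2$-category laws. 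The successor subcase of the extend case is dispatched by \cref{lem:key-ren-suc-var} together with the induction hypothesis.

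The hard part will be the zero-variable/extend interaction. With $\tau = \sigma'.t'$ and $v = \vzero{\beta}$, the left-hand side first transforms $\vzero{\beta}$ into $\vzero{\beta'}$, where $\beta'$ precomposes $\beta$ with a whiskering of $\alpha$ (\cref{eq:twocell-vzero}), and then uses \cref{eq:asubvar-extend-vzero} to obtain $t'$ renamed by a key built from $\beta'$; the right-hand side instead substitutes first, obtaining $t'$ renamed by a key built from $\beta$, and only afterwards applies the outer key renaming. Proving these two renamed copies of $t'$ equal means fusing the two successive key renamings into a single one and checking that the resulting composite $2$-cell really equals the annotation $\beta'$ produced on the left; this forces a careful reassociation through the interchange law together with the composition laws for keys (\cref{prop:key-vert-comp-cell,prop:key-hor-comp-cell}). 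Getting every whiskering and telescope index to line up is the delicate, error-prone step, whereas everything else is routine bookkeeping that the earlier lemmas of this section were specifically designed to absorb.
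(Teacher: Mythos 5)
Your proposal follows essentially the same route as the paper's proof: reduce to a single atomic substitution, apply \cref{prop:mixseq-equiv-locktele} to pass to variables under an extra lock telescope, then induct on the atomic substitution with the same case analysis (weakening via \cref{lem:key-pi-commute}, lock via key-fusion, key via \cref{prop:key-hor-comp-cell}, and the extend/$\vzero{}$ interaction resolved by \cref{prop:key-vert-comp-cell,prop:key-hor-comp-cell}), and your identification of the $\vzero{\beta}$/extend subcase as the delicate step matches where the paper does the real work. The only cosmetic differences are that you spell out the regular-to-atomic reduction that the paper leaves implicit and omit mentioning the vacuous $\emptysub$ case.
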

\begin{proof}
  It suffices to prove this lemma for an atomic substitution $\sigma$, for which we use \cref{prop:mixseq-equiv-locktele}.
  Hence for an arbitrary lock telescope $\Psi : \Locktele{o}{n}$ and variable $\sfvar{\addtele{\hat{\Delta}}{\extendsctx{\Lambda}{\Psi}}}{v}{o}$ we show that $\asubexpr[\extendsctx{\Theta}{\Psi}]{\arenexpr[\Psi]{v}{\key{\alpha}{\Lambda}{\Theta}{\hat{\Delta}}}}{\sigma} = \arenexpr[\Psi]{\asubexpr[\extendsctx{\Lambda}{\Psi}]{v}{\sigma}}{\key{\alpha}{\Lambda}{\Theta}{\hat{\Gamma}}}$.
  We do this by induction on $\sigma$.
  \begin{itemize}
  \item \case{} $\sigma = \emptysub$ (\inlinerulename{sf-arensub-empty}) \\
    Now $\hat{\Delta}$ is the empty scoping context. Since there are no variables in $\addtele{\emptyctx}{\extendsctx{\Lambda}{\Psi}}$, this case is trivial.
  \item \case{} $\sigma = \ida$ (\inlinerulename{sf-arensub-id}) \\
    Since the action of $\ida$ on variables is the identity, this case is also trivial.
  \item \case{} $\sigma = \weaken{\sigma'}$ with $\hat{\Gamma} = \extendsctx{\hat{\Gamma}'}{\mu}$ and $\sfasub{\sigma'}{\hat{\Gamma}'}{\hat{\Delta}}{m}$ (\inlinerulename{sf-arensub-weaken}) \\
    We have
    \begin{align*}
      &\asubexpr[\extendsctx{\Theta}{\Psi}]{\arenexpr[\Psi]{v}{\key{\alpha}{\Lambda}{\Theta}{\hat{\Delta}}}}{\weaken{\sigma'}} \\
      &= \arenexpr[\extendsctx{\Theta}{\Psi}]{\asubexpr[\extendsctx{\Theta}{\Psi}]{\arenexpr[\Psi]{v}{\key{\alpha}{\Lambda}{\Theta}{\hat{\Delta}}}}{\sigma'}}{\pi} && \explanation{\cref{eq:asubvar-weaken}} \\
      &= \arenexpr[\extendsctx{\Theta}{\Psi}]{\arenexpr[\Psi]{\asubexpr[\extendsctx{\Lambda}{\Psi}]{v}{\sigma'}}{\key{\alpha}{\Lambda}{\Theta}{\hat{\Gamma}'}}}{\pi} && \explanation{Induction hypothesis} \\
      &= \arenexpr[\Psi]{\arenexpr[\extendsctx{\Lambda}{\Psi}]{\asubexpr[\extendsctx{\Lambda}{\Psi}]{v}{\sigma'}}{\pi}}{\key{\alpha}{\Lambda}{\Theta}{\extendsctx{\hat{\Gamma}'}{\mu}}} && \explanation{\cref{lem:key-pi-commute}} \\
      &= \arenexpr[\Psi]{\asubexpr[\extendsctx{\Lambda}{\Psi}]{v}{\weaken{\sigma'}}}{\key{\alpha}{\Lambda}{\Theta}{\extendsctx{\hat{\Gamma}'}{\mu}}}. && \explanation{\cref{eq:asubvar-weaken}}
    \end{align*}
  \item \case{} $\sigma = \lock{\sigma'}{\mu}$ (\inlinerulename{sf-arensub-lock}) \\
    We compute that
    \begin{align*}
      &\asubexpr[\extendsctx{\Theta}{\Psi}]{\arenexpr[\Psi]{v}{\key{\alpha}{\Lambda}{\Theta}{\lock{\hat{\Delta}}{\mu}}}}{\lock{\sigma'}{\mu}} \\
      &= \mathrlap{\asubexpr[\extendsctx{\Theta}{\Psi}]{\arenexpr[\Psi]{\arenexpr[\Psi]{v}{\addtele{\key{1_\mu}{\,\locksym_\mu}{\,\locksym_\mu}{\hat{\Delta}}}{\Lambda}}}{\key{\alpha}{\Lambda}{\Theta}{\lock{\hat{\Delta}}{\mu}}}}{\lock{\sigma'}{\mu}}} \\
      &&& \explanation{\cref{prop:key-unit-cell}} \\
      &= \asubexpr[\extendsctx{\extendsctx{\locksym_\mu}{\Theta}}{\Psi}]{\arenexpr[\Psi]{v}{\key{1_\mu \star \alpha}{\extendsctx{\,\locksym_\mu}{\Lambda}}{\extendsctx{\,\locksym_\mu}{\Theta}}{\hat{\Delta}}}}{\sigma'} && \explanation{\cref{prop:key-hor-comp-cell,eq:asubvar-lock}} \\
      &= \arenexpr[\Psi]{\asubexpr[\extendsctx{\extendsctx{\locksym_\mu}{\Lambda}}{\Psi}]{v}{\sigma'}}{\key{1_\mu \star \alpha}{\extendsctx{\,\locksym_\mu}{\Lambda}}{\extendsctx{\,\locksym_\mu}{\Theta}}{\hat{\Gamma}}} && \explanation{Induction hypothesis} \\
      &= \arenexpr[\Psi]{\asubexpr[\extendsctx{\extendsctx{\locksym_\mu}{\Lambda}}{\Psi}]{v}{\sigma'}}{\key{\alpha}{\Lambda}{\Theta}{\hat{\Gamma}}} && \explanation{\cref{prop:key-hor-comp-cell,prop:key-unit-cell}} \\
      &= \arenexpr[\Psi]{\asubexpr[\extendsctx{\Lambda}{\Psi}]{v}{\lock{\sigma'}{\mu}}}{\key{\alpha}{\Lambda}{\Theta}{\hat{\Gamma}}}. && \explanation{\cref{eq:asubvar-lock}}
    \end{align*}
  \item \case{} $\sigma = \key{\beta}{\Upsilon}{\Omega}{\hat{\Gamma}}$ (\inlinerulename{sf-arensub-key}) \\
    This case follows directly from \cref{prop:key-hor-comp-cell}.
  \item \case{} $\sigma = \sigma' . t$ with $\hat{\Delta} = \extendsctx{\hat{\Delta}'}{\mu}$ (\inlinerulename{sf-asub-extend}) \\
    We distinguish two cases for the variable $v$.
    \begin{itemize}
      \item \case{} $v = \vzero{\beta}$ with $\twocell{\beta}{\mu}{\locks{\extendsctx{\Lambda}{\Psi}}}$ \\
        Now we have
        \begin{align*}
          &\asubexpr[\extendsctx{\Theta}{\Psi}]{\arenexpr[\Psi]{\vzero{\beta}}{\key{\alpha}{\Lambda}{\Theta}{\extendsctx{\hat{\Delta}'}{\mu}}}}{\sigma' . t} \\
          &= \asubexpr[\extendsctx{\Theta}{\Psi}]{\vzero{(\alpha \star 1_{\locks{\Psi}}) \circ \beta}}{\sigma' . t} && \explanation{\cref{eq:arenvar-key,eq:twocell-vzero}} \\
          &= \arenexpr{t}{\key{(\alpha \star 1_{\locks{\Psi}}) \circ \beta}{\,\locksym_\mu}{\extendsctx{\Theta}{\Psi}}{\hat{\Gamma}}} && \explanation{\cref{eq:asubvar-extend-vzero}} \\
          &= \arenexpr{\arenexpr{t}{\key{\beta}{\,\locksym_\mu}{\extendsctx{\Lambda}{\Psi}}{\hat{\Gamma}}}}{\key{\alpha \star 1_{\locks{\Psi}}}{\extendsctx{\Lambda}{\Psi}}{\extendsctx{\Theta}{\Psi}}{\hat{\Gamma}}} && \explanation{\cref{prop:key-vert-comp-cell}} \\
          &= \arenexpr[\Psi]{\arenexpr{t}{\key{\beta}{\,\locksym_\mu}{\extendsctx{\Lambda}{\Psi}}{\hat{\Gamma}}}}{\key{\alpha}{\Lambda}{\Theta}{\hat{\Gamma}}} && \explanation{\cref{prop:key-hor-comp-cell,prop:key-unit-cell}} \\
          &= \arenexpr[\Psi]{\arenexpr[\extendsctx{\Lambda}{\Psi}]{\vzero{\beta}}{\sigma' . t}}{\key{\alpha}{\Lambda}{\Theta}{\hat{\Gamma}}}. && \explanation{\cref{eq:asubvar-extend-vzero}}
        \end{align*}
      \item \case{} $v = \suc{v'}$ \\
        In this case we get that
        \begin{align*}
          &\asubexpr[\extendsctx{\Theta}{\Psi}]{\arenexpr[\Psi]{\suc{v'}}{\key{\alpha}{\Lambda}{\Theta}{\extendsctx{\hat{\Delta}'}{\mu}}}}{\sigma' . t} \\
          &= \asubexpr[\extendsctx{\Theta}{\Psi}]{\suc{\arenexpr[\Psi]{v'}{\key{\alpha}{\Lambda}{\Theta}{\hat{\Delta}'}}}}{\sigma' . t} && \explanation{\cref{lem:key-ren-suc-var}} \\
          &= \asubexpr[\extendsctx{\Theta}{\Psi}]{\arenexpr[\Psi]{v'}{\key{\alpha}{\Lambda}{\Theta}{\hat{\Delta}'}}}{\sigma'} && \explanation{\cref{eq:asubvar-extend-vsuc}} \\
          &= \arenexpr[\Psi]{\asubexpr[\extendsctx{\Lambda}{\Psi}]{v'}{\sigma'}}{\key{\alpha}{\Lambda}{\Theta}{\hat{\Gamma}}} && \explanation{Induction hypothesis} \\
          &= \arenexpr[\Psi]{\asubexpr[\extendsctx{\Lambda}{\Psi}]{\suc{v'}}{\sigma' . t}}{\key{\alpha}{\Lambda}{\Theta}{\hat{\Gamma}}}. && \explanation{\cref{eq:asubvar-extend-vsuc}} \qedhere
        \end{align*}
    \end{itemize}
  \end{itemize}
\end{proof}

\subsection{Proof of \cref{thm:completeness}}

We can now prove a more general result that includes substitutions (and which can hence be proved by induction) and of which \cref{thm:completeness} is a consequence.
\begin{theorem}[Completeness]
  Given two $\upsigma$-equivalent WSMTT expressions $\sigmeqexpr{\hat{\Gamma}}{t}{s}{m}$, we have that $\translate{t} = \translate{s}$.
  Furthermore, given two $\upsigma$-equivalent WSMTT substitutions $\sigmeqsub{\sigma}{\tau}{\hat{\Gamma}}{\hat{\Delta}}{m}$, we have that $\sfsigmeqsub{\translate{\sigma}}{\translate{\tau}}$.
\end{theorem}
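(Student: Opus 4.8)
The plan is to prove both claims simultaneously by mutual induction on the derivation of $\upsigma$-equivalence, treating one case for each rule of \cref{fig:sigma-equiv} and its continuation. The mutual recursion between the two judgements---\inlinerulename{wsmtt-eq-expr-cong-sub} consumes a substitution equivalence whereas \inlinerulename{wsmtt-eq-sub-cong-extend} consumes an expression equivalence---means neither half can be established in isolation. Throughout, I keep in mind the asymmetry that the expression claim is a strict syntactic equality $\translate{t} = \translate{s}$ of \subfree{} expressions, while the substitution claim is only the observational equivalence $\sfsigmeqsub{\translate{\sigma}}{\translate{\tau}}$ of \cref{def:sfmtt-sub-sigma-equiv}, which is itself an equivalence relation.

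The structural cases I expect to dispatch quickly. Reflexivity, symmetry and transitivity follow from $=$ and $\sfsigmeqsubsym$ being equivalence relations. The expression congruences \inlinerulename{wsmtt-eq-expr-cong-lam} and \inlinerulename{wsmtt-eq-expr-cong-app} are immediate from the compositionality of $\translate{\_}$ and the induction hypotheses, and \inlinerulename{wsmtt-eq-expr-cong-sub} additionally uses that observationally equivalent substitutions act identically on the common expression $\translate{t_1} = \translate{t_2}$. The substitution congruences will appeal to the preservation results \cref{lem:lock-preserves-sigma-eq,lem:lift-preserves-sigma-eq,cor:tel-preserves-sigma-eq}, which ensure that $\locksym_\mu$, lifting and telescope extension respect $\sfsigmeqsubsym$, while $\sfsigmeqsubsym$ is trivially a congruence for $\concat{\_}{\_}$. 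Finally, the category laws, the functoriality laws for substitution in expressions, and the functor laws \inlinerulename{wsmtt-eq-sub-lock-id} and \inlinerulename{wsmtt-eq-sub-lock-compose} will hold by unfolding $\translate{\_}$ and using that $\translate{\sigma \circ \tau} = \concat{\translate{\sigma}}{\translate{\tau}}$, that $\subexpr{u}{\concat{\bar{\sigma}}{\bar{\tau}}} = \subexpr{\subexpr{u}{\bar{\sigma}}}{\bar{\tau}}$, and that $\locksym_\mu$ distributes over $\concat{\_}{\_}$.

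The genuinely modal cases are those involving keys and context extension, and for each I plan to invoke the machinery already developed. Because $\translate{\key{\alpha}{\Theta}{\Psi}{\hat{\Gamma}}} = \rensubcons{\id}{\key{\alpha}{\Theta}{\Psi}{\hat{\Gamma}}}$ acts on any expression precisely as the key renaming $\key{\alpha}{\Theta}{\Psi}{\hat{\Gamma}}$ does, the rules \inlinerulename{wsmtt-eq-sub-key-unit}, \inlinerulename{wsmtt-eq-sub-key-compose-vertical}, \inlinerulename{wsmtt-eq-sub-key-compose-horizontal} and \inlinerulename{wsmtt-eq-sub-key-natural} will reduce directly to \cref{prop:key-unit-cell,prop:key-vert-comp-cell,prop:key-hor-comp-cell,prop:key-natural} (the last applied to the regular substitution $\translate{\sigma}$). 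For the pushing rules \inlinerulename{wsmtt-eq-expr-lam-sub} and \inlinerulename{wsmtt-eq-expr-app-sub} I will observe that applying the whole sequence $\translate{\sigma}$ to $\lam{\mu}{\translate{t}}$ pushes each constituent atomic rensub through the $\lambda$ by \cref{eq:push-atomic-lam}, producing $\lam{\mu}{\subexpr{\translate{t}}{\lift{\translate{\sigma}}}}$, and then bridge the gap to $\translate{\lam{\mu}{\subexprmtt{t}{\lift{\sigma}}}} = \lam{\mu}{\subexpr{\translate{t}}{\translate{\lift{\sigma}}}}$ by replacing $\translate{\lift{\sigma}}$ with the observationally equivalent $\lift{\translate{\sigma}}$ via \cref{lem:lift-mtt-to-sfmtt}. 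The extension rules \inlinerulename{wsmtt-eq-expr-extend-var}, \inlinerulename{wsmtt-eq-sub-extend-weaken} and \inlinerulename{wsmtt-eq-sub-extend-eta} I will handle by unfolding $\translate{\sigma . t} = \rensubcons{\lift{\translate{\sigma}}}{(\ida . \translate{t})}$ and computing on variables, reducing observational equivalence to a variable check under an arbitrary lock telescope via \cref{prop:sfmtt-sigma-equiv-var-locktele} and then evaluating $\vzero{}$- and $\suc{\_}$-variables with \cref{lem:lift-asub-var,prop:key-unit-cell}.

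I expect the main obstacle to be conceptual rather than computational. Even though the expression half of the statement is on-the-nose syntactic equality, its proof must repeatedly route through \emph{observational} equivalences of the substitutions sitting underneath expression constructors: in every pushing case one can only conclude that $\translate{\lift{\sigma}}$ and $\lift{\translate{\sigma}}$ agree after being applied to the relevant subexpression, never that they are the same \subfree{} substitution. Disciplining these replacements---always justifying them through \cref{def:sfmtt-sub-sigma-equiv}---is what ties the two halves of the induction together and is the part most prone to error. The single most delicate computation will be the $\eta$-law \inlinerulename{wsmtt-eq-sub-extend-eta}, where proving $\sfsigmeqsub{\translate{\sigma}}{\translate{(\pi \circ \sigma) . (\subexprmtt{\vzero{}}{\lock{\sigma}{\mu}})}}$ requires the full force of \cref{prop:sfmtt-sigma-equiv-var-locktele}, a separate analysis of the $\vzero{}$- and $\suc{\_}$-cases after an arbitrary lock telescope, and \cref{lem:lift-mtt-to-sfmtt,prop:key-unit-cell} to collapse the key renamings that appear.
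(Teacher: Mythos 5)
Your proposal matches the paper's proof in both structure and substance: induction on the derivation of $\upsigma$-equivalence, with the equivalence-relation and category-law cases dispatched by unfolding $\translate{\_}$, the congruence cases handled via \cref{lem:lock-preserves-sigma-eq,lem:lift-preserves-sigma-eq}, the pushing cases bridged by \cref{lem:lift-mtt-to-sfmtt}, the key cases reduced to \cref{prop:key-unit-cell,prop:key-vert-comp-cell,prop:key-hor-comp-cell,prop:key-natural}, and the extension/$\eta$ cases settled by a variable analysis under lock telescopes (the paper routes the $\eta$-case's $\suc{\_}$ subcase through \cref{lem:pi-lift-commute-sub} and its $\vzero{}$ subcase through \cref{prop:key-natural}, a detail your sketch leaves implicit but which is consistent with your plan). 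No gaps; this is essentially the paper's argument.
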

\begin{proof}
  We proceed by induction on a derivation of the $\upsigma$-equivalence judgement.
  To do this, we discuss all the rules from \cref{fig:sigma-equiv} and provide an outline of the argument for all the rules that are omitted in that figure.
  \begin{itemize}
  \item For the rules expressing that $\upsigma$-equivalence is an equivalence relation (e.g.\ \inlinerulename{wsmtt-eq-expr-refl}), we immediately get the desired result since equality of \subfree{} expressions and $\sfsigmeqsubsym$ are also equivalence relations.
  \item \case{} $\sigmeqsub{\sigma \circ \id}{\sigma}{\hat{\Gamma}}{\hat{\Delta}}{m}$ (\inlinerulename{wsmtt-eq-sub-id-right}) \\
    We have that $\translate{\sigma \circ \id} = \concat{\translate{\sigma}}{\translate{\id}}$ which is equal to $\translate{\sigma}$ since $\translate{\id}$ is the empty list of atomic substitutions (see the definition of $\translate{\_}$ in \cref{sec:translation-embedding}).
    This immediately proves that $\sfsigmeqsub{\translate{\sigma \circ \id}}{\translate{\sigma}}$.
    The other two category laws follow similarly from the monoid laws of list concatenation.
  \item \case{} $\sigmeqexpr{\hat{\Gamma}}{\subexprmtt{t}{\id}}{t}{m}$ (\inlinerulename{wsmtt-eq-expr-sub-id}) \\
    The definition of $\translate{\_}$ tells us that $\translate{\subexprmtt{t}{\id}} = \subexpr{\translate{t}}{\translate{\id}}$.
    Since $\translate{\id}$ is the empty list of atomic substitutions, we can directly see that this expression is equal to $\translate{t}$.
  \item \case{} $\sigmeqexpr{\hat{\Gamma}}{\subexprmtt{t}{\sigma \circ \tau}}{\subexprmtt{\subexprmtt{t}{\sigma}}{\tau}}{m}$ (\inlinerulename{wsmtt-eq-expr-sub-compose}) \\
    For the left-hand side we get that $\translate{\subexprmtt{t}{\sigma \circ \tau}} = \subexpr{\translate{t}}{\concat{\translate{\sigma}}{\translate{\tau}}}$, whereas for the right-hand side we have $\translate{\subexprmtt{\subexprmtt{t}{\sigma}}{\tau}} = \subexpr{\subexpr{\translate{t}}{\translate{\sigma}}}{\translate{\tau}}$.
    Since applying a regular substitution to an \subfree{} expression amounts to applying all constituent atomic substitutions, both expressions are equal.
  \item \case{} $\sigmeqexpr{\hat{\Gamma}}{\subexprmtt{t_1}{\sigma_1}}{\subexprmtt{t_2}{\sigma_2}}{m}$ (\inlinerulename{wsmtt-eq-expr-cong-sub}) \\
    The premises of this inference rule tell us that $\sigmeqexpr{\hat{\Delta}}{t_1}{t_2}{m}$ and $\sigmeqsub{\sigma_1}{\sigma_2}{\hat{\Gamma}}{\hat{\Delta}}{m}$.
    From the induction hypothesis it then follows that $\translate{t_1} = \translate{t_2}$ and $\sfsigmeqsub{\translate{\sigma_1}}{\translate{\sigma_2}}$.
    By the definition of $\sfsigmeqsubsym$ and $\translate{\_}$ we therefore have that $\translate{\subexprmtt{t_1}{\sigma_1}} = \subexpr{\translate{t_1}}{\translate{\sigma_1}} = \subexpr{\translate{t_2}}{\translate{\sigma_2}} = \translate{\subexprmtt{s_2}{\sigma_2}}$.
  \item \case{} $\sigmeqexpr{\hat{\Gamma}}{\lam{\mu}{t_1}}{\lam{\mu}{t_2}}{n}$ (\inlinerulename{wsmtt-eq-expr-cong-lam}) \\
    The premise of this inference rule give us that $\sigmeqexpr{\extendsctx{\hat{\Gamma}}{\mu}}{t_1}{t_2}{n}$.
    Therefore, by the induction hypothesis we have $\translate{t_1} = \translate{t_2}$ and hence $\translate{\lam{\mu}{t_1}} = \lam{\mu}{\translate{t_1}} = \lam{\mu}{\translate{t_2}} = \translate{\lam{\mu}{t_2}}$.
    The other congruence rules for expression constructors are proved similarly.
  \item \case{} $\sigmeqsub{\sigma_1 \circ \tau_1}{\sigma_2 \circ \tau_2}{\hat{\Gamma}}{\hat{\Xi}}{m}$ (\inlinerulename{wsmtt-eq-sub-cong-compose}) \\
    We know from the premises that $\sigmeqsub{\sigma_1}{\sigma_2}{\hat{\Delta}}{\hat{\Xi}}{m}$ and $\sigmeqsub{\tau_1}{\tau_2}{\hat{\Gamma}}{\hat{\Delta}}{m}$ and hence via the induction hypothesis $\sfsigmeqsub{\translate{\sigma_1}}{\translate{\sigma_2}}$ and $\sfsigmeqsub{\translate{\tau_1}}{\translate{\tau_2}}$.
    For an arbitrary expression $\sfexpr{\hat{\Xi}}{t}{m}$ we then have that
    \begin{align*}
      \subexpr{t}{\translate{\sigma_1 \circ \tau_1}}
      &= \subexpr{t}{\concat{\translate{\sigma_1}}{\translate{\tau_1}}} && \explanation{Definition of $\translate{\_}$} \\
      &= \subexpr{\subexpr{t}{\translate{\sigma_1}}}{\translate{\tau_1}} \\
      &= \subexpr{\subexpr{t}{\translate{\sigma_2}}}{\translate{\tau_1}} && \explanation{Definition of $\sfsigmeqsub{\sigma_1}{\sigma_2}$} \\
      &= \subexpr{\subexpr{t}{\translate{\sigma_2}}}{\translate{\tau_2}} && \explanation{Definition of $\sfsigmeqsub{\tau_1}{\tau_2}$} \\
      &= \subexpr{t}{\translate{\sigma_2 \circ \tau_2}},
    \end{align*}
    which proves that $\sfsigmeqsub{\translate{\sigma_1 \circ \tau_1}}{\translate{\sigma_2 \circ \tau_2}}$.
  \item \case{} $\sigmeqsub{\sigma_1 . t_1}{\sigma_2 . t_2}{\hat{\Gamma}}{\extendsctx{\hat{\Delta}}{\mu}}{n}$ (\inlinerulename{wsmtt-eq-sub-cong-extend}) \\
    The premises tell us that $\sigmeqsub{\sigma_1}{\sigma_2}{\hat{\Gamma}}{\hat{\Delta}}{n}$ and $\sigmeqexpr{\lock{\hat{\Gamma}}{\mu}}{t_1}{t_2}{m}$ and hence by the induction hypothesis $\sfsigmeqsub{\translate{\sigma_1}}{\translate{\sigma_2}}$ and $\translate{t_1} = \translate{t_2}$.
    \Cref{lem:lift-preserves-sigma-eq} then gives us that $\sfsigmeqsub{\lift{\translate{\sigma_1}}}{\lift{\translate{\sigma_2}}}$ from which it follows that
    \begin{align*}
      \translate{\sigma_1.t_1}
      &\mathrlap{{}={}}\hphantom{{}\sfsigmeqsubsym{}}\rensubcons{\lift{\translate{\sigma_1}}}{(\ida{}.\translate{t_1})} && \explanation{Definition of $\translate{\_}$} \\
      &\sfsigmeqsubsym{} \rensubcons{\lift{\translate{\sigma_2}}}{(\ida{}.\translate{t_1})} && \explanation{$\sfsigmeqsub{\lift{\translate{\sigma_1}}}{\lift{\translate{\sigma_2}}}$} \\
      &\mathrlap{{}={}}\hphantom{{}\sfsigmeqsubsym{}}\rensubcons{\lift{\translate{\sigma_2}}}{(\ida{}.\translate{t_2})} \\
      &\mathrlap{{}={}}\hphantom{{}\sfsigmeqsubsym{}}\translate{\sigma_2.t_2}.
    \end{align*}
  \item \case{} $\sigmeqsub{\lock{\sigma_1}{\mu}}{\lock{\sigma_2}{\mu}}{\lock{\hat{\Gamma}}{\mu}}{\lock{\hat{\Delta}}{\mu}}{m}$ (\inlinerulename{wsmtt-eq-sub-cong-lock}) \\
    From the premise we know that $\sigmeqsub{\sigma_1}{\sigma_2}{\hat{\Gamma}}{\hat{\Delta}}{n}$ and hence via induction $\sfsigmeqsub{\translate{\sigma_1}}{\translate{\sigma_2}}$.
    We can then use \cref{lem:lock-preserves-sigma-eq} to see that $\translate{\lock{\sigma_1}{\mu}} = \lock{\translate{\sigma_1}}{\mu}$ is observationally equivalent to $\translate{\lock{\sigma_2}{\mu}} = \lock{\translate{\sigma_2}}{\mu}$.
  \item \case{} $\sigmeqexpr{\hat{\Gamma}}{\subexprmtt{\left(\lam{\mu}{t}\right)}{\sigma}}{\lam{\mu}{\subexprmtt{t}{\lift{\sigma}}}}{n}$ (\inlinerulename{wsmtt-eq-expr-lam-sub}) \\
    Since all atomic \subfree{} substitutions can be pushed through $\lam{\mu}{\_}$ (see \cref{eq:push-atomic-lam}) and the lifting of a regular substitution consists of the lifted atomic substitutions, we have (also making use of the definition of $\translate{\_}$)
    \[
      \translate{\subexprmtt{\left(\lam{\mu}{t}\right)}{\sigma}}
      = \subexpr{\translate{\lam{\mu}{t}}}{\translate{\sigma}}
      = \subexpr{\lam{\mu}{\translate{t}}}{\translate{\sigma}}
      = \lam{\mu}{\subexpr{\translate{t}}{\lift{\translate{\sigma}}}}.
    \]
    On the other hand we know that $\translate{\lam{\mu}{\subexprmtt{t}{\lift{\sigma}}}} = \lam{\mu}{\subexpr{\translate{t}}{\translate{\lift{\sigma}}}}$.
    We conclude that both expressions are indeed equal because $\sfsigmeqsub{\translate{\lift{\sigma}}}{\lift{\translate{\sigma}}}$ by \cref{lem:lift-mtt-to-sfmtt}.
  \item \case{} $\sigmeqexpr{\hat{\Gamma}}{\subexprmtt{\left(\app{\mu}{f}{t}\right)}{\sigma}}{\app{\mu}{\subexprmtt{f}{\sigma}}{\subexprmtt{t}{\lock{\sigma}{\mu}}}}{n}$ (\inlinerulename{wsmtt-eq-expr-app-sub}) \\
    We have
    \begin{align*}
      &\translate{\subexprmtt{\left(\app{\mu}{f}{t}\right)}{\sigma}} \\
      &= \subexpr{\left(\app{\mu}{\translate{f}}{\translate{t}}\right)}{\translate{\sigma}} && \explanation{Definition of $\translate{\_}$} \\
      &= \app{\mu}{\subexpr{\translate{f}}{\translate{\sigma}}}{\subexpr{\translate{t}}{\lock{\translate{\sigma}}}} && \explanation{Repeated use of \cref{eq:push-atomic-app}}
    \end{align*}
    and
    \[
      \translate{\app{\mu}{\subexprmtt{f}{\sigma}}{\subexprmtt{t}{\lock{\sigma}{\mu}}}}
      = \app{\mu}{\subexpr{\translate{f}}{\translate{\sigma}}}{\subexpr{\translate{t}}{\translate{\lock{\sigma}{\mu}}}}.
    \]
    The result follows immediately since $\translate{\lock{\sigma}{\mu}} = \lock{\translate{\sigma}}{\mu}$.

    The cases for pushing substitutions through all other expression constructors are proved similarly.
  \item \case{} $\sigmeqsub{\sigma}{\emptysub{}}{\hat{\Gamma}}{\emptyctx{}}{m}$ (\inlinerulename{wsmtt-eq-sub-empty-unique}) \\
    We use \cref{prop:sfmtt-sigma-equiv-var-locktele} to prove that $\sfsigmeqsub{\translate{\sigma}}{\translate{!}}$.
    The condition of that proposition is immediately satisfied since there are no variables in the scoping context $\extendsctx{\emptyctx}{\Lambda}$ for any lock telescope $\Lambda$.
  \item \case{} $\sigmeqexpr{\lock{\hat{\Gamma}}{\mu}}{\subexprmtt{\vzero{}}{\lock{(\sigma . t)}{\mu}}}{t}{m}$ (\inlinerulename{wsmtt-eq-expr-extend-var}) \\
    We compute (using among others the definition of $\translate{\_}$)
    \begin{align*}
      &\translate{\subexprmtt{\vzero{}}{\lock{(\sigma . t)}{\mu}}} \\
      &= \subexpr{\translate{\vzero{}}}{\translate{\lock{(\sigma . t)}{\mu}}} \\
      &= \asubexpr{\subexpr{\vzero{1_\mu}}{\lock{\lift{\translate{\sigma}}}{\mu}}}{\lock{(\ida{} . \translate{t})}{\mu}} \\
      &= \asubexpr[\locksym_\mu]{\vzero{1_\mu}}{\ida{} . \translate{t}} && \explanation{Repeated application of \cref{lem:lift-asub-var}} \\
      &= \arenexpr{\translate{t}}{\key{1_\mu}{\locksym_\mu}{\locksym_\mu}{\hat{\Gamma}}} && \explanation{\cref{eq:asubvar-extend-vzero}} \\
      &= \translate{t}. && \explanation{\cref{prop:key-unit-cell}}
    \end{align*}
  \item \case{} $\sigmeqsub{\pi \circ (\sigma . t)}{\sigma}{\hat{\Gamma}}{\hat{\Delta}}{n}$ (\inlinerulename{wsmtt-eq-sub-extend-weaken}) \\
    We have that%
    \footnote{Note that $\rensubcons{\_}{\_}$ actually takes a regular substitution as left argument and an atomic substitution as right argument.
    We slightly abuse this notation by putting an atomic substitution to the left of the right-hand side of the following equation.}
    \[
      \translate{\pi \circ (\sigma . t)}
      = \concat{\translate{\pi}}{\translate{\sigma . t}}
      = \rensubcons{\pi}{\rensubcons{\translate{\sigma}^+}{(\ida{} . \translate{t})}}.
    \]
    Since $\asubexpr{s}{\pi} = \arenexpr{s}{\pi}$ (which is easy to prove using \cref{prop:mixseq-equiv-locktele}), we get that
    \begin{align*}
      \subexpr{s}{\translate{\pi \circ (\sigma . t)}}
      &= \asubexpr{\asubexpr{\asubexpr{s}{\pi}}{\lift{\translate{\sigma}}}}{\ida{} . \translate{t}} \\
      &= \asubexpr{\asubexpr{\asubexpr{s}{\translate{\sigma}}}{\pi}}{\ida{} . \translate{t}} && \explanation{\cref{lem:pi-lift-commute-sub}}
    \end{align*}
    for all expressions $s$.
    It therefore suffices to show that $\asubexpr{\asubexpr{s'}{\pi}}{\ida{} . \translate{t}} = s'$ for every $s'$.
    We do this using \cref{prop:mixseq-equiv-locktele}, so we take an arbitrary lock telescope $\Lambda : \Locktele{o}{n}$ and variable $\sfvar{\addtele{\hat{\Gamma}}{\Lambda}}{v}{o}$. We can then compute that
    \begin{align*}
      \asubexpr[\Lambda]{\asubexpr[\Lambda]{v}{\pi}}{\ida{} . \translate{t}}
      &= \asubexpr[\Lambda]{\arenexpr[\Lambda]{v}{\pi}}{\ida{} . \translate{t}} \\
      &= \asubexpr[\Lambda]{\suc{v}}{\ida{} . \translate{t}} \\
      &= \asubexpr[\Lambda]{v}{\ida{}} = v. && \explanation{\cref{eq:asubvar-extend-vsuc,eq:asubvar-id}}
    \end{align*}
  \item \case{} $\sigmeqsub{\sigma}{(\pi \circ \sigma) . (\subexprmtt{\vzero{}}{\lock{\sigma}{\mu}})}{\hat{\Gamma}}{\extendsctx{\hat{\Delta}}{\mu}}{n}$ (\inlinerulename{wsmtt-eq-sub-extend-eta}) \\
    We have that
    \begin{align*}
      \translate{(\pi \circ \sigma) . (\subexprmtt{\vzero{}}{\lock{\sigma}{\mu}})}
      &= \rensubcons{\lift{\translate{\pi \circ \sigma}}}{\left(\ida{} . \translate{\subexprmtt{\vzero{}}{\lock{\sigma}{\mu}}}\right)} \\
      &= \rensubcons{\lift{(\concat{\translate{\pi}}{\translate{\sigma}})}}{\left(\ida{} . \subexpr{\translate{\vzero{}}}{\translate{\lock{\sigma}{\mu}}}\right)} \\
      &= \rensubcons{\lift{\pi}}{\rensubcons{\lift{\translate{\sigma}}}{\left(\ida{} . \subexpr{\vzero{1_\mu}}{\lock{\translate{\sigma}}{\mu}}\right)}}.
    \end{align*}
    We now use \cref{prop:mixseq-equiv-locktele}, so for any lock telescope $\Lambda : \Locktele{o}{n}$ and variable $\sfvar{\addtele{\extendsctx{\hat{\Delta}}{\mu}}{\Lambda}}{v}{o}$, we need to show that
    \[
      \asubexpr[\Lambda]{\subexpr{\asubexpr[\Lambda]{v}{\lift{\pi}}}{\addtele{\lift{\translate{\sigma}}}{\Lambda}}}{\ida{} . \subexpr{\vzero{1_\mu}}{\lock{\translate{\sigma}}{\mu}}} = \subexpr{v}{\addtele{\translate{\sigma}}{\Lambda}}.
    \]
    We distinguish two cases for $v$.
    \begin{itemize}
    \item \case{} $v = \vzero{\alpha}$ with $\twocell{\alpha}{\mu}{\locks{\Lambda}}$. \\
      Then we get that
      \begin{align*}
        &\asubexpr[\Lambda]{\subexpr{\asubexpr[\Lambda]{\vzero{\alpha}}{\lift{\pi}}}{\addtele{\lift{\translate{\sigma}}}{\Lambda}}}{\ida{} . \subexpr{\vzero{1_\mu}}{\lock{\translate{\sigma}}{\mu}}} \\
        &= \asubexpr[\Lambda]{\subexpr{\vzero{\alpha}}{\addtele{\lift{\translate{\sigma}}}{\Lambda}}}{\ida{} . \subexpr{\vzero{1_\mu}}{\lock{\translate{\sigma}}{\mu}}} && \explanation{\cref{lem:lift-asub-var}} \\
        &= \asubexpr[\Lambda]{\vzero{\alpha}}{\ida{} . \subexpr{\vzero{1_\mu}}{\lock{\translate{\sigma}}{\mu}}} && \explanation{\cref{lem:lift-asub-var}, repeated} \\
        &= \arenexpr{\subexpr{\vzero{1_\mu}}{\lock{\translate{\sigma}}{\mu}}}{\key{\alpha}{\,\locksym_\mu}{\Lambda}{\hat{\Gamma}}} && \explanation{\cref{eq:asubvar-extend-vzero}} \\
        &= \subexpr{\arenexpr{\vzero{1_\mu}}{\key{\alpha}{\,\locksym_\mu}{\Lambda}{\extendsctx{\hat{\Delta}}{\mu}}}}{\addtele{\translate{\sigma}}{\Lambda}} && \explanation{\cref{prop:key-natural}} \\
        &= \subexpr{\vzero{\alpha}}{\addtele{\translate{\sigma}}{\Lambda}}.
      \end{align*}
    \item \case{} $v = \suc{v'}$ with $\sfvar{\addtele{\hat{\Delta}}{\Lambda}}{v'}{o}$ \\
      Now we can compute
      \begin{align*}
        &\asubexpr[\Lambda]{\subexpr{\asubexpr[\Lambda]{\suc{v'}}{\lift{\pi}}}{\addtele{\lift{\translate{\sigma}}}{\Lambda}}}{\ida{} . \subexpr{\vzero{1_\mu}}{\lock{\translate{\sigma}}{\mu}}} \\
        &= \asubexpr[\Lambda]{\subexpr{\arenexpr[\Lambda]{\asubexpr[\Lambda]{v'}{\pi}}{\pi}}{\addtele{\lift{\translate{\sigma}}}{\Lambda}}}{\ida{} . \subexpr{\vzero{1_\mu}}{\lock{\translate{\sigma}}{\mu}}} && \explanation{\cref{lem:lift-asub-var}} \\
        &= \asubexpr[\Lambda]{\subexpr{\arenexpr[\Lambda]{\suc{v'}}{\pi}}{\addtele{\lift{\translate{\sigma}}}{\Lambda}}}{\ida{} . \subexpr{\vzero{1_\mu}}{\lock{\translate{\sigma}}{\mu}}} \\
        &= \asubexpr[\Lambda]{\arenexpr[\Lambda]{\subexpr{\suc{v'}}{\addtele{\translate{\sigma}}{\Lambda}}}{\pi}}{\ida{} . \subexpr{\vzero{1_\mu}}{\lock{\translate{\sigma}}{\mu}}} && \explanation{\cref{lem:pi-lift-commute-sub}} \\
        &= \subexpr{\suc{v'}}{\addtele{\translate{\sigma}}{\Lambda}},
      \end{align*}
      where the last equation is proved as in the case of \inlinerulename{wsmtt-eq-sub-extend-weaken}.
    \end{itemize}
  \item \case{} $\sigmeqsub{\lock{\id}{\mu}}{\id}{\lock{\hat{\Gamma}}{\mu}}{\lock{\hat{\Gamma}}{\mu}}{m}$ (\inlinerulename{wsmtt-eq-sub-lock-id}) \\
    The translations of both sides of this equivalence are the empty sequence of atomic \subfree{} substitutions, so this case is trivial.
  \item \case{} $\sigmeqsub{\lock{(\sigma \circ \tau)}{\mu}}{(\lock{\sigma}{\mu}) \circ (\lock{\tau}{\mu})}{\lock{\hat{\Gamma}}{\mu}}{\lock{\hat{\Xi}}{\mu}}{m}$ (\inlinerulename{wsmtt-eq-sub-lock-compose}) \\
    Again this case is trivial since a lock is applied to every atomic substitution in a sequence and hence it distributes over sequence concatenation.
  \item \case{} $\sigmeqsub{\key{\alpha}{\Lambda}{\Theta}{\hat{\Delta}} \circ (\extendsctx{\sigma}{\Theta})}{(\extendsctx{\sigma}{\Lambda}) \circ \key{\alpha}{\Lambda}{\Theta}{\hat{\Gamma}}}{\extendsctx{\hat{\Gamma}}{\Theta}}{\extendsctx{\hat{\Delta}}{\Lambda}}{n}$ (\inlinerulename{wsmtt-eq-sub-key-natural}) \\
    This is a direct consequence of \cref{prop:key-natural}.
  \item \case{} $\sigmeqsub{\key{1_{\locks{\Lambda}}}{\Lambda}{\Lambda}{\hat{\Gamma}}}{\id}{\extendsctx{\hat{\Gamma}}{\Lambda}}{\extendsctx{\hat{\Gamma}}{\Lambda}}{n}$ (\inlinerulename{wsmtt-eq-sub-key-unit}) \\
    Applying an \subfree{} key substitution is exactly the same as applying the corresponding key renaming (which can be easily proved using \cref{prop:mixseq-equiv-locktele}), so this case follows immediately from \cref{prop:key-unit-cell}.
  \item \case{} $\sigmeqsub{\key{\beta \circ \alpha}{\Lambda}{\Psi}{\hat{\Gamma}}}{\key{\alpha}{\Lambda}{\Theta}{\hat{\Gamma}} \circ \key{\beta}{\Theta}{\Psi}{\hat{\Gamma}}}{\extendsctx{\hat{\Gamma}}{\Psi}}{\extendsctx{\hat{\Gamma}}{\Lambda}}{n}$ (\inlinerulename{wsmtt-eq-sub-key-compose-vertical}) \\
    In the same way, the result in this case is proved by \cref{prop:key-vert-comp-cell}.
  \item \case{} $\scriptstyle \sigmeqsub{\key{\beta \star \alpha}{\extendsctx{\Lambda_1}{\Theta_1}}{\extendsctx{\Lambda_2}{\Theta_2}}{\hat{\Gamma}}}{(\extendsctx{\key{\beta}{\Lambda_1}{\Lambda_2}{\hat{\Gamma}}}{\Theta_1}) \circ \key{\alpha}{\Theta_1}{\Theta_2}{\extendsctx{\hat{\Gamma}}{\Lambda_2}}}{\extendsctx{\extendsctx{\hat{\Gamma}}{\Lambda_2}}{\Theta_2}}{\extendsctx{\extendsctx{\hat{\Gamma}}{\Lambda_1}}{\Theta_1}}{o}$ (\inlinerulename{wsmtt-eq-sub-key-compose-horizontal}) \\
    This is a direct consequence of \cref{prop:key-hor-comp-cell}. \qedhere
  \end{itemize}
\end{proof}

\section{Soundness}
\label{sec:soundness}

We want to prove the soundness of our substitution algorithm with respect to the notion of $\upsigma$-equivalence introduced in \cref{fig:sigma-equiv}.
In other words, whenever we compute all substitutions away in a WSMTT expression $t$, the result should be $\upsigma$-equivalent to the expression $t$ we started from.
\begin{theorem}
  \label{thm:soundness}
  Let $\mttexpr{\hat{\Gamma}}{t}{m}$ be a WSMTT expression. Then we have that $\sigmeqexpr{\hat{\Gamma}}{\embed{\translate{t}}}{t}{m}$.
\end{theorem}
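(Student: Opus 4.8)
The plan is to prove a stronger statement by simultaneous induction on the structure of WSMTT expressions and substitutions: alongside $\sigmeqexpr{\hat{\Gamma}}{\embed{\translate{t}}}{t}{m}$ for expressions, I would prove $\embed{\translate{\sigma}} \sigmeq \sigma$ for every WSMTT substitution $\sigma$. This strengthening is forced by the case $t = \subexprmtt{s}{\tau}$, the only place where $\translate{\_}$ does real computational work (it evaluates $\subexpr{\translate{s}}{\translate{\tau}}$), and whose treatment needs the induction hypothesis for $\tau$ as well as for $s$. For all expression constructors other than the variable and the explicit substitution (i.e.\ $\lambda$, app, arrow, $\Bool$, $\true$, $\false$, if, modal types and terms, and letmod), both $\translate{\_}$ and $\embed{\_}$ act homomorphically, so the result follows from the induction hypotheses and the congruence rules (e.g.\ \inlinerulename{wsmtt-eq-expr-cong-lam}, \inlinerulename{wsmtt-eq-expr-cong-app}). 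The variable case is direct: $\translate{\vzero{}} = \vzero{1_\mu}$ and $\embed{\vzero{1_\mu}} = \subexprmtt{\vzero{}}{\key{1_\mu}{\locksym_\mu}{\locksym_\mu}{\hat{\Gamma}}}$, which reduces to $\vzero{}$ via \inlinerulename{wsmtt-eq-sub-key-unit} (rewriting the key substitution to $\id$) followed by \inlinerulename{wsmtt-eq-expr-sub-id}.

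For the substitution constructors, the cases $\id$, $\pi$, $\emptysub$ and $\key{\alpha}{\Theta}{\Psi}{\hat{\Gamma}}$ reduce to unit laws of composition (and \inlinerulename{wsmtt-eq-sub-empty-unique} for $\emptysub$), while $\sigma \circ \tau$ and $\lock{\sigma}{\mu}$ follow from the induction hypotheses with \inlinerulename{wsmtt-eq-sub-cong-compose} and \inlinerulename{wsmtt-eq-sub-cong-lock}, once I establish two routine bridging lemmas about $\embed{\_}$: that it sends sequence concatenation to composition, $\embed{\concat{s_1}{s_2}} \sigmeq \embed{s_1} \circ \embed{s_2}$ (an induction using associativity and the $\id$-unit laws), and that it commutes with lifting up to $\upsigma$-equivalence, $\embed{\lift{s}} \sigmeq \lift{\embed{s}}$, where on the right $\lift{\_}$ is the WSMTT lift of \cref{eq:wsmtt-sub-lift}; this second fact uses the SFMTT definition \cref{eq:sfmtt-arensub-lift} together with the variable case above. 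The genuinely interesting constructor is the extension $\sigma . t$, where $\translate{\sigma . t} = \rensubcons{\lift{\translate{\sigma}}}{(\ida . \translate{t})}$; unfolding $\embed{\_}$ and applying the two bridging lemmas and the induction hypotheses reduces the goal to the CwF identity $\lift{\sigma} \circ (\id . t) \sigmeq \sigma . t$, which I would check by a short calculation from \inlinerulename{wsmtt-eq-sub-extend-eta}, \inlinerulename{wsmtt-eq-sub-extend-weaken} and \inlinerulename{wsmtt-eq-expr-extend-var}.

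The heart of the proof is the case $t = \subexprmtt{s}{\tau}$, where $\translate{\subexprmtt{s}{\tau}} = \subexpr{\translate{s}}{\translate{\tau}}$. Here I need the central lemma that \emph{embedding is compatible with SFMTT substitution application}: for every SFMTT expression $u$ and every regular SFMTT substitution $\rho$, $\embed{\subexpr{u}{\rho}} \sigmeq \subexprmtt{\embed{u}}{\embed{\rho}}$. Granting this, the case closes at once: $\embed{\subexpr{\translate{s}}{\translate{\tau}}} \sigmeq \subexprmtt{\embed{\translate{s}}}{\embed{\translate{\tau}}} \sigmeq \subexprmtt{s}{\tau}$, using the two induction hypotheses for $s$ and $\tau$ and the congruence rule \inlinerulename{wsmtt-eq-expr-cong-sub}.

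It remains to prove the central lemma, and this is where I expect the main obstacle. I would first reduce from regular to atomic rensubs: since applying a regular substitution is applying its atomic constituents in sequence, repeated use of the concatenation bridging lemma and \inlinerulename{wsmtt-eq-expr-sub-compose} reduces the goal to the atomic statement $\embed{\arensubexpr{u}{\chi}} \sigmeq \subexprmtt{\embed{u}}{\embed{\chi}}$. The atomic statement is then proved by induction on $u$: for every non-variable constructor, both the SFMTT action (pushing $\chi$ inward, lifting to $\lift{\chi}$ under binders and to $\lock{\chi}{\mu}$ under locks) and the WSMTT push rules (\inlinerulename{wsmtt-eq-expr-lam-sub}, \inlinerulename{wsmtt-eq-expr-app-sub}, and their analogues) do the same thing, so these cases follow from the induction hypothesis together with $\embed{\lift{\chi}} \sigmeq \lift{\embed{\chi}}$ and $\embed{\lock{\chi}{\mu}} = \lock{\embed{\chi}}{\mu}$. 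The difficult base case is an atomic rensub applied to a variable: generalising over the threaded lock telescope $\Lambda$, I would prove $\embed{\arenexpr[\Lambda]{v}{\chi}} \sigmeq \subexprmtt{\embed{v}}{\lock{\embed{\chi}}{\Lambda}}$ (writing $\lock{\_}{\Lambda}$ for iterated locking by $\Lambda$) by induction on $\chi$ and $v$, reading each step of the variable algorithm (\cref{eq:arenvar-id} through \cref{eq:asubvar-extend-vsuc}, including the $2$-cell action \cref{eq:twocell-vzero} and \cref{eq:twocell-vsuc}) as a WSMTT rule: the key step corresponds to \inlinerulename{wsmtt-eq-sub-key-natural} combined with \inlinerulename{wsmtt-eq-sub-key-compose-vertical} and \inlinerulename{wsmtt-eq-sub-key-compose-horizontal}, the weakening step to \inlinerulename{wsmtt-eq-sub-extend-weaken}, and the extension step on $\vzero{\alpha}$ to \inlinerulename{wsmtt-eq-expr-extend-var}. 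The subtle point is the substitution weakening case \cref{eq:asubvar-weaken}, which re-applies a renaming $\pi$ to the already-substituted result and so threatens termination of the induction; I would break this circularity by first proving the atomic lemma for renamings alone (whose variable algorithm \cref{eq:arenvar-weaken} merely prepends a $\suc$ and is cleanly structurally recursive) and only afterwards proving the substitution case, legitimately invoking the renaming lemma for the $\pi$ appearing in \cref{eq:asubvar-weaken}.
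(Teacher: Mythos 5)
Your proposal is correct and follows essentially the same route as the paper's proof: the same strengthening to substitutions, the same central lemma that $\embed{\subexpr{u}{\rho}} \sigmeq \subexprmtt{\embed{u}}{\embed{\rho}}$ (the paper's \cref{prop:embed-subst-term}), the same reduction to atomic rensubs acting on variables under a lock telescope, and the same resolution of the circularity in \cref{eq:asubvar-weaken} by establishing the renaming case before the substitution case (the paper's \cref{lem:embed-pi-ren,lem:embed-key-ren} are exactly the instances of your renaming lemma that get used). The only differences are organizational: you quantify over all atomic rensubs in the expression induction where the paper instead threads a scoping telescope through a fixed rensub (\cref{lem:embed-ren-term-tele,lem:embed-ren-term-lock-tele,lem:embed-sub-term-lock-tele}), and these formulations are interchangeable.
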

The proof of this theorem appears at the end of this section.

\subsection{Embedding of SFMTT into WSMTT}

Note that in \cref{sec:translation-embedding} we first defined an embedding of \subfree{} expressions to WSMTT and then an embedding for atomic and regular rensubs.
This is unlike the translation function from WSMTT to SFMTT, which is defined mutually recursively for expressions and substitutions.
The reason for this is that SFMTT substitutions do not occur in the syntax of \subfree{} expressions.
However, the proof of \cref{thm:soundness} is easier to formulate if we do have an embedding of rensubs at our disposal.
In particular, the core result for proving soundness will be \cref{prop:embed-subst-term}.

In this section on the soundness proof, we will extensively use the fact that composition of WSMTT substitutions is associative and that $\id$ is its unit, all up to $\upsigma$-equivalence.
Moreover, congruence rules with respect to WSMTT $\upsigma$-equivalence will also regularly be used.
We will not explicitly mention the use of any of these rules from \cref{fig:sigma-equiv}.

\begin{example}[Embedding does not preserve observational equivalence] \label{ex:embedding-does-not-preserve}
  Given that we have introduced the notion of observational equivalence for SFMTT substitutions in \cref{sec:observ-equiv}, it is natural to ask whether $\sfsigmeqsub \sigma \tau$ implies $\embed{\sigma} \sigmeq \embed{\tau}$.
  The answer is no, and we can give a counterexample similar to \cref{ex:subst-existence}.
  Again, let the mode theory be the walking arrow.
  Let $\hat \Gamma = (\lock{\emptyctx}{\mu})$ and $\hat \Delta = (\lock{\extendsctx{\emptyctx}{\unitMod}}{\mu})$.
  As argued in \cref{ex:subst-existence}, all substitutions to $\hat \Delta$ are observationally equivalent.
  However, the embeddings of $\sfasub{(\lock{\extendsctx{!}{\true}}{\mu}), (\lock{\extendsctx{!}{\false}}{\mu})}{\hat \Gamma}{\hat \Delta}{m}$ are not $\upsigma$-equivalent.
\end{example}

\begin{lemma}
  \label{lem:embed-lift}
  For an \subfree{} renaming or substitution $\sfrensub{\sigma}{\hat{\Gamma}}{\hat{\Delta}}{m}$ we have that $\sigmeqsub{\embed{\lift{\sigma}}}{\lift{\embed{\sigma}}}{\extendsctx{\hat{\Gamma}}{\mu}}{\extendsctx{\hat{\Delta}}{\mu}}{m}$.
\end{lemma}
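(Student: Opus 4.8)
The plan is to prove the statement by induction on the structure of the regular rensub $\sigma$, after first dispatching the \emph{atomic} case and recording two auxiliary facts about WSMTT lifting.

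First I would settle the lemma for an atomic rensub $\sfarensub{\sigma}{\hat{\Gamma}}{\hat{\Delta}}{m}$, which needs no induction, only unfolding. By the SFMTT lifting definition (\cref{eq:sfmtt-arensub-lift}) and the embedding clauses, $\embed{\lift{\sigma}} = \embed{\weaken{\sigma} . \vzero{1_\mu}} = (\embed{\sigma} \circ \pi) . \embed{\vzero{1_\mu}}$. The only genuinely modal point is to show $\embed{\vzero{1_\mu}} \sigmeq \vzero{}$: the embedding of variables gives $\embed{\vzero{1_\mu}} = \subexprmtt{\vzero{}}{\key{1_\mu}{\locksym_\mu}{\locksym_\mu}{\hat{\Gamma}}}$, and \inlinerulename{wsmtt-eq-sub-key-unit} (with $\Lambda = \locksym_\mu$, so $\locks{\Lambda} = \mu$ and $1_{\locks{\Lambda}} = 1_\mu$) reduces the key substitution to $\id$, after which \inlinerulename{wsmtt-eq-expr-sub-id} yields $\vzero{}$. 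Comparing with the WSMTT lifting $\lift{\embed{\sigma}} = (\embed{\sigma} \circ \pi) . \vzero{}$ (\cref{eq:wsmtt-sub-lift}) then gives $\embed{\lift{\sigma}} \sigmeq \lift{\embed{\sigma}}$ for atomic $\sigma$.

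Next I would record two facts about WSMTT lifting up to $\upsigma$-equivalence. The load-bearing one is the derived rule that composition distributes over extension, $(\tau . t) \circ \delta \sigmeq (\tau \circ \delta) . \subexprmtt{t}{\lock{\delta}{\mu}}$, obtained by applying \inlinerulename{wsmtt-eq-sub-extend-eta} to $(\tau . t) \circ \delta$ and simplifying the weakening component with \inlinerulename{wsmtt-eq-sub-extend-weaken} and the variable component with \inlinerulename{wsmtt-eq-sub-lock-compose}, \inlinerulename{wsmtt-eq-expr-sub-compose} and \inlinerulename{wsmtt-eq-expr-extend-var}. From this I would derive functoriality of lifting, $\lift{\id} \sigmeq \id$ and $\lift{(\alpha \circ \beta)} \sigmeq \lift{\alpha} \circ \lift{\beta}$: for the composition law one expands $\lift{\alpha} \circ \lift{\beta}$ via the derived rule, reduces the first component using \inlinerulename{wsmtt-eq-sub-extend-weaken} and associativity to $(\alpha \circ \beta) \circ \pi$, and kills the second component to $\vzero{}$ with \inlinerulename{wsmtt-eq-expr-extend-var}.

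Finally I would induct on the regular rensub $\sigma$. For $\sigma = \id$ we have $\embed{\lift{\id}} = \embed{\id} = \id$, while $\lift{\embed{\id}} = \lift{\id} \sigmeq \id$ by the identity-functoriality fact. For $\sigma = \rensubcons{\sigma'}{\tau}$ with $\sigma'$ regular and $\tau$ atomic, unfolding the definitions gives $\embed{\lift{\rensubcons{\sigma'}{\tau}}} = \embed{\lift{\sigma'}} \circ \embed{\lift{\tau}}$; the induction hypothesis on $\sigma'$ and the atomic case on $\tau$, combined with congruence \inlinerulename{wsmtt-eq-sub-cong-compose}, turn this into $\lift{\embed{\sigma'}} \circ \lift{\embed{\tau}}$, and the composition-functoriality fact rewrites it as $\lift{(\embed{\sigma'} \circ \embed{\tau})} = \lift{\embed{\rensubcons{\sigma'}{\tau}}}$, as required. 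I expect the main obstacle to be organising the derived composition-extension rule and the functoriality of lifting cleanly, since everything else hinges on them; the single modal ingredient, $\embed{\vzero{1_\mu}} \sigmeq \vzero{}$, is isolated and rests entirely on the key-unit law.
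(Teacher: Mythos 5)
Your proposal is correct and follows essentially the same route as the paper's proof: the paper likewise reduces to the atomic case via the functoriality of WSMTT lifting ($\lift{\id} \sigmeq \id$ and $\lift{(\sigma\circ\tau)} \sigmeq \lift{\sigma}\circ\lift{\tau}$, derived from \inlinerulename{wsmtt-eq-sub-extend-weaken}, \inlinerulename{wsmtt-eq-sub-extend-eta} and \inlinerulename{wsmtt-eq-expr-extend-var}) and then settles the atomic case by unfolding $\embed{\weaken{\sigma}.\vzero{1_\mu}}$ and applying \inlinerulename{wsmtt-eq-sub-key-unit}. You merely make explicit the induction on the regular rensub that the paper compresses into ``it suffices to prove this for an atomic rensub''; no gap.
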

\begin{proof}
  Since $\lift{\id} \sigmeq \id$ and $\lift{(\sigma \circ \tau)} \sigmeq \lift{\sigma} \circ \lift{\tau}$ (which can be proved using \inlinerulename{wsmtt-eq-sub-extend-weaken}, \inlinerulename{wsmtt-eq-sub-extend-eta} and \inlinerulename{wsmtt-eq-expr-extend-var}), it suffices to prove this for an atomic rensub $\sigma$.
  Then we have that
  \begin{align*}
    &\embed{\lift{\sigma}} \\
    &= \embed{\weaken{\sigma} . \vzero{1_\mu}} && \explanation{SFMTT definition of $^+$, \eqref{eq:sfmtt-arensub-lift}} \\
    &= (\embed{\sigma} \circ \pi) . \left(\subexprmtt{\vzero{}}{\key{1_\mu}{\,\locksym_\mu}{\,\locksym_\mu}{\extendsctx{\hat{\Gamma}}{\mu}}}\right) && \explanation{Definition of $\embed{\_}$} \\
    &\sigmeq (\embed{\sigma} \circ \pi) . \vzero{} && \explanation{\inlinerulename{wsmtt-eq-sub-key-unit}} \\
    &= \lift{\embed{\sigma}}. && \explanation{WSMTT definition of $^+$, \eqref{eq:wsmtt-sub-lift}} \qedhere
  \end{align*}
\end{proof}

\subsection{Embedding and Renaming/Substitution}

The core property for proving the soundness theorem is \cref{prop:embed-subst-term}, which states that $\embed{\subexpr{t}{\sigma}} \sigmeq \subexprmtt{\embed{t}}{\embed{\sigma}}$ for every $t$ and $\sigma$.
In order to prove such a result, we will adopt a similar technique as in \cref{sec:observ-equiv} for proving observational equivalence of SFMTT substitutions.
First we show that it is sufficient to prove the result for variables after adding an arbitrary scoping telescope $\Phi$ to $\sigma$ (\cref{lem:embed-ren-term-tele}).
Then we prove that actually the scoping telescope $\Phi$ only needs to be a lock telescope (\cref{lem:embed-ren-term-lock-tele,lem:embed-sub-term-lock-tele}).
\begin{lemma}
  \label{lem:embed-ren-term-tele}
  Let $\sfarensub{\sigma}{\hat{\Gamma}}{\hat{\Delta}}{m}$ be an atomic \subfree{} rensub and assume that $\sigmeqexpr{\addtele{\hat{\Gamma}}{\Phi}}{\embed{\arensubexpr{v}{\addtele{\sigma}{\Phi}}}}{\subexprmtt{\embed{v}}{\embed{\addtele{\sigma}{\Phi}}}}{n}$ for any scoping telescope $\Phi : \Tele{n}{m}$ and variable $\sfvar{\addtele{\hat{\Delta}}{\Phi}}{v}{n}$.
  Then we have that $\sigmeqexpr{\hat{\Gamma}}{\embed{\arensubexpr{t}{\sigma}}}{\subexprmtt{\embed{t}}{\embed{\sigma}}}{m}$ for all expressions $\sfexpr{\hat{\Delta}}{t}{m}$.
\end{lemma}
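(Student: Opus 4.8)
The plan is to prove a telescope-generalised statement by induction on $t$, mirroring exactly the structure of \cref{prop:sfmtt-sigma-equiv-var,prop:mixseq-equiv-tele}. Concretely, I would show that for every scoping telescope $\Phi : \Tele{n}{m}$ and every expression $\sfexpr{\addtele{\hat{\Delta}}{\Phi}}{t}{n}$ one has
\[
  \sigmeqexpr{\addtele{\hat{\Gamma}}{\Phi}}{\embed{\arensubexpr{t}{\addtele{\sigma}{\Phi}}}}{\subexprmtt{\embed{t}}{\embed{\addtele{\sigma}{\Phi}}}}{n};
\]
the lemma is then the special case $\Phi = \emptyctx$. Note first that $\addtele{\sigma}{\Phi}$ is again an \emph{atomic} rensub, since lifting and locking preserve atomicity, so that $\arensubexpr{\_}{\addtele{\sigma}{\Phi}}$ and $\embed{\addtele{\sigma}{\Phi}}$ are well-defined and the statement typechecks. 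The base case, where $t$ is a variable, is precisely the hypothesis of the lemma.

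For each non-variable constructor I would combine three facts. First, the equation describing how an atomic SFMTT rensub is pushed through that constructor (for instance \cref{eq:push-atomic-lam} for $\lam{\mu}{\_}$, \cref{eq:push-atomic-app} for application, \cref{eq:push-atomic-modtm} for $\modtm{\mu}{\_}$), rewriting the inner rensub via the identities $\lift{(\addtele{\sigma}{\Phi})} = \addtele{\sigma}{(\extendsctx{\Phi}{\mu})}$ and $\lock{(\addtele{\sigma}{\Phi})}{\mu} = \addtele{\sigma}{(\lock{\Phi}{\mu})}$. Second, that $\embed{\_}$ commutes with every constructor by definition. Third, the matching WSMTT $\upsigma$-equivalence rule pushing a WSMTT substitution through the constructor (\inlinerulename{wsmtt-eq-expr-lam-sub}, \inlinerulename{wsmtt-eq-expr-app-sub} and their analogues for the remaining constructors). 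In the $\lambda$-case, the left-hand side rewrites to $\lam{\mu}{\embed{\arensubexpr{t}{\addtele{\sigma}{(\extendsctx{\Phi}{\mu})}}}}$, while the right-hand side becomes, via \inlinerulename{wsmtt-eq-expr-lam-sub}, $\lam{\mu}{\subexprmtt{\embed{t}}{\lift{(\embed{\addtele{\sigma}{\Phi}})}}}$, so it remains to compare the two bodies under \inlinerulename{wsmtt-eq-expr-cong-lam}.

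The two bodies are matched by invoking the induction hypothesis at the extended telescope $\extendsctx{\Phi}{\mu}$, which supplies $\sigmeqexpr{\addtele{\hat{\Gamma}}{(\extendsctx{\Phi}{\mu})}}{\embed{\arensubexpr{t}{\addtele{\sigma}{(\extendsctx{\Phi}{\mu})}}}}{\subexprmtt{\embed{t}}{\embed{\addtele{\sigma}{(\extendsctx{\Phi}{\mu})}}}}{n}$, together with the fact that $\embed{\addtele{\sigma}{(\extendsctx{\Phi}{\mu})}} = \embed{\lift{(\addtele{\sigma}{\Phi})}} \sigmeq \lift{(\embed{\addtele{\sigma}{\Phi}})}$ by \cref{lem:embed-lift}, closing the case with \inlinerulename{wsmtt-eq-expr-cong-sub}. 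Constructors introducing a lock rather than a binder (the domain of the modal function type, $\modty{\mu}{\_}$, $\modtm{\mu}{\_}$, and the modal argument of application) are easier, since there $\embed{\_}$ commutes with $\locksym_\mu$ \emph{strictly}, so no appeal to \cref{lem:embed-lift} is needed. The main obstacle is exactly this mismatch of lifting: on the SFMTT side, extending the telescope by a variable is definitional, whereas \inlinerulename{wsmtt-eq-expr-lam-sub} produces a WSMTT lift that agrees with the embedded lift only up to $\upsigma$-equivalence, so the argument hinges on threading \cref{lem:embed-lift} through every binder case. The constructors with several subterms (\inlinerulename{sf-expr-if}, \inlinerulename{sf-expr-mod-elim}) are handled by applying this same pattern to each subterm with its appropriate telescope extension.
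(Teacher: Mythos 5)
Your proposal is correct and follows essentially the same route as the paper's proof: the same telescope-generalised statement proved by induction on $t$, with the variable case given by the hypothesis, and each constructor case handled by the push-through equation, the definitional commutation of $\embed{\_}$ with constructors, the induction hypothesis at the extended telescope, \cref{lem:embed-lift} for binder cases, and the corresponding WSMTT substitution-pushing rule. Your observation that lock-introducing constructors need no appeal to \cref{lem:embed-lift} because $\embed{\_}$ commutes strictly with $\locksym_\mu$ matches the paper's treatment of the $\modtm{\mu}{\_}$ case exactly.
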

\begin{proof}
  We will prove the more general result that $\sigmeqexpr{\addtele{\hat{\Gamma}}{\Phi}}{\embed{\arensubexpr{t}{\addtele{\sigma}{\Phi}}}}{\subexprmtt{\embed{t}}{\embed{\addtele{\sigma}{\Phi}}}}{n}$ for all scoping telescopes $\Phi : \Tele{n}{m}$ and expressions $\sfexpr{\addtele{\hat{\Delta}}{\Phi}}{t}{n}$.
  This proof proceeds by induction on $t$.
  We only show the cases for variables, lambda abstraction and the modal term constructor.
  The other cases can be proved similarly.
  \begin{itemize}
  \item \case{} $\sfexpr{\addtele{\hat{\Delta}}{\Phi}}{v}{n}$ (\inlinerulename{sf-expr-var}) \\
    The result is exactly what we assumed in the lemma.
  \item \case{} $\sfexpr{\addtele{\hat{\Delta}}{\Phi}}{\lam{\mu}{t}}{n}$ (\inlinerulename{sf-expr-lam}) \\
    We have that
    \begin{align*}
      &\embed{\arensubexpr{\left(\lam{\mu}{t}\right)}{\addtele{\sigma}{\Phi}}} \\
      &= \embed{\lam{\mu}{\arensubexpr{t}{\lift{(\addtele{\sigma}{\Phi})}}}} && \explanation{\cref{eq:push-atomic-lam}} \\
      &= \lam{\mu}{\embed{\arensubexpr{t}{\lift{(\addtele{\sigma}{\Phi})}}}} && \explanation{Definition of $\embed{\_}$} \\
      &\sigmeq \lam{\mu}{\subexprmtt{\embed{t}}{\embed{\lift{(\addtele{\sigma}{\Phi})}}}} && \explanation{Induction hypothesis} \\
      &\sigmeq \lam{\mu}{\subexprmtt{\embed{t}}{\lift{\big(\embed{\addtele{\sigma}{\Phi}}\big)}}} && \explanation{\cref{lem:embed-lift}} \\
      &\sigmeq \subexprmtt{\left(\lam{\mu}{\embed{t}}\right)}{\embed{\addtele{\sigma}{\Phi}}} && \explanation{\inlinerulename{wsmtt-eq-expr-lam-sub}} \\
      &= \subexprmtt{\embed{\lam{\mu}{t}}}{\embed{\addtele{\sigma}{\Phi}}}. && \explanation{Definition of $\embed{\_}$}
    \end{align*}
    Note that we can indeed apply the induction hypothesis where it is indicated since $\lift{(\addtele{\sigma}{\Phi})} = \addtele{\sigma}{(\extendsctx{\Phi}{\mu})}$.
  \item \case{} $\sfexpr{\addtele{\hat{\Delta}}{\Phi}}{\modtm{\mu}{t}}{n}$ (\inlinerulename{sf-expr-mod-tm}) \\
    Now we can compute that
    \begin{align*}
      &\embed{\arensubexpr{\left(\modtm{\mu}{t}\right)}{\addtele{\sigma}{\Phi}}} \\
      &= \embed{\modtm{\mu}{\arensubexpr{t}{\lock{(\addtele{\sigma}{\Phi})}{\mu}}}} && \explanation{\cref{eq:push-atomic-modtm}} \\
      &= \modtm{\mu}{\embed{\arensubexpr{t}{\lock{(\addtele{\sigma}{\Phi})}{\mu}}}} && \explanation{Definition of $\embed{\_}$} \\
      &\sigmeq \modtm{\mu}{\subexprmtt{\embed{t}}{\embed{\lock{(\addtele{\sigma}{\Phi})}{\mu}}}} && \explanation{Induction hypothesis} \\
      &= \modtm{\mu}{\subexprmtt{\embed{t}}{\lock{\big(\embed{\addtele{\sigma}{\Phi}}\big)}{\mu}}} && \explanation{Definition of $\embed{\_}$} \\
      &\sigmeq \subexprmtt{\big(\modtm{\mu}{\embed{t}}\big)}{\embed{\addtele{\sigma}{\Phi}}}. && \explanation{\inlinerulenamestyle{wsmtt-eq-expr-mod-tm-sub}} \\
      &= \subexprmtt{\embed{\modtm{\mu}{t}}}{\embed{\addtele{\sigma}{\Phi}}} && \explanation{Definition of $\embed{\_}$}
    \end{align*}
    Again we can apply the induction hypothesis because $\lock{(\addtele{\sigma}{\Phi})}{\mu} = \addtele{\sigma}{(\lock{\Phi}{\mu})}$.
    The rule \inlinerulenamestyle{wsmtt-eq-expr-mod-tm-sub} is not included in \cref{fig:sigma-equiv}, but it is similar to \inlinerulename{wsmtt-eq-expr-lam-sub}. \qedhere
  \end{itemize}
\end{proof}

\begin{lemma}
  \label{lem:embed-ren-term-lock-tele}
  Let $\sfaren{\sigma}{\hat{\Gamma}}{\hat{\Delta}}{m}$ be an atomic \subfree{} renaming and assume that $\sigmeqexpr{\addtele{\hat{\Gamma}}{\Lambda}}{\embed{\arenexpr{v}{\addtele{\sigma}{\Lambda}}}}{\subexprmtt{\embed{v}}{\embed{\addtele{\sigma}{\Lambda}}}}{n}$ for every lock telescope $\Lambda : \Tele{n}{m}$ and variable $\sfvar{\addtele{\hat{\Delta}}{\Lambda}}{v}{n}$.
  Then we have that $\sigmeqexpr{\hat{\Gamma}}{\embed{\arenexpr{t}{\sigma}}}{\subexprmtt{\embed{t}}{\embed{\sigma}}}{m}$ for all expressions $\sfexpr{\hat{\Delta}}{t}{m}$.
\end{lemma}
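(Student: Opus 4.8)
The plan is to deduce this statement from the stronger \Cref{lem:embed-ren-term-tele}, mirroring the way \Cref{prop:mixseq-equiv-locktele} was obtained from \Cref{prop:mixseq-equiv-tele}. Concretely, it suffices to verify the hypothesis of \Cref{lem:embed-ren-term-tele}, namely that $\sigmeqexpr{\addtele{\hat{\Gamma}}{\Phi}}{\embed{\arenexpr{v}{\addtele{\sigma}{\Phi}}}}{\subexprmtt{\embed{v}}{\embed{\addtele{\sigma}{\Phi}}}}{n}$ holds for \emph{every} scoping telescope $\Phi$ and variable $v$, using only the assumed instance for lock telescopes. I would prove this by induction on the number of variables occurring in $\Phi$, taking the given lock-telescope assumption as the base case ($\Phi = \Lambda$) and decomposing any scoping telescope with at least one variable as $\Phi = \extendsctx{\extendsctx{\Phi'}{\mu}}{\Lambda}$, with $\Lambda$ the (possibly empty) trailing lock telescope, so that the induction hypothesis applies to $\extendsctx{\Phi'}{\Lambda}$.

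In the inductive step I case split on $v$. For $v = \vzero{\alpha}$ the left-hand side collapses: since $\addtele{\sigma}{\Phi}$ is a lift of $\addtele{\sigma}{\Phi'}$ wrapped in the locks of $\Lambda$, \Cref{lem:lift-aren-var} gives $\arenexpr{\vzero{\alpha}}{\addtele{\sigma}{\Phi}} = \vzero{\alpha}$, so the left-hand side is just $\embed{\vzero{\alpha}}$. The remaining work is to show $\subexprmtt{\embed{\vzero{\alpha}}}{\embed{\addtele{\sigma}{\Phi}}} \sigmeq \embed{\vzero{\alpha}}$ inside WSMTT. I would unfold $\embed{\vzero{\alpha}}$ into $\vzero{}$ precomposed with a key substitution, rewrite $\embed{\addtele{\sigma}{\Phi}}$ as $\addtele{\lift{\embed{\addtele{\sigma}{\Phi'}}}}{\Lambda}$ using \Cref{lem:embed-lift} together with \inlinerulename{wsmtt-eq-sub-cong-lock}, and then push the key past the lifted substitution via the naturality rule \inlinerulename{wsmtt-eq-sub-key-natural}; the lifted substitution acts trivially on $\vzero{}$ by \inlinerulename{wsmtt-eq-expr-extend-var} (recalling $\lift{\tau} = (\tau \circ \pi) . \vzero{}$), leaving $\embed{\vzero{\alpha}}$.

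For $v = \suc{v'}$, \Cref{lem:lift-aren-var} turns the left-hand side into $\embed{\suc{w}}$ with $w = \arenexpr{v'}{\addtele{\sigma}{\extendsctx{\Phi'}{\Lambda}}}$, and unfolding $\embed{\suc{\_}}$ exposes a trailing $\addtele{\pi}{\Lambda}$. I would apply the induction hypothesis to $v'$ over $\extendsctx{\Phi'}{\Lambda}$ and then commute this $\pi$ past the embedded substitution — the WSMTT-level analogue of \Cref{lem:pi-lift-commute-ren}, which follows from \Cref{lem:embed-lift} and the CwF rule \inlinerulename{wsmtt-eq-sub-extend-weaken} since $\pi \circ \lift{\tau} \sigmeq \tau \circ \pi$ — so as to match the unfolded right-hand side $\subexprmtt{\subexprmtt{\embed{v'}}{\addtele{\pi}{\Lambda}}}{\embed{\addtele{\sigma}{\Phi}}}$, freely using associativity of composition and \inlinerulename{wsmtt-eq-expr-sub-compose}.

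I expect the main obstacle to be the $v = \vzero{\alpha}$ case. In the purely syntactic setting of \Cref{prop:mixseq-equiv-locktele} this case was trivial because both sides reduced literally to $\vzero{\alpha}$, but here the right-hand side is a genuine WSMTT substitution applied to $\embed{\vzero{\alpha}}$, and establishing $\upsigma$-equivalence to $\embed{\vzero{\alpha}}$ requires carefully threading key-substitution naturality through the lifted, locked embedded renaming. Keeping the lock-telescope bookkeeping and the modality annotations on keys and variables consistent is where the care lies; once that case is settled, the $\suc{}$ case and the reduction to \Cref{lem:embed-ren-term-tele} are routine.
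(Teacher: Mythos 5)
Your proposal matches the paper's proof essentially step for step: the reduction to \cref{lem:embed-ren-term-tele}, the induction on the number of variables in $\Phi$ with the lock-telescope assumption as base case, the collapse of the left-hand side via \cref{lem:lift-aren-var}, the use of \cref{lem:embed-lift} plus \inlinerulename{wsmtt-eq-sub-key-natural} and \inlinerulename{wsmtt-eq-expr-extend-var} in the $\vzero{\alpha}$ case, and the commutation $\pi \circ \lift{\tau} \sigmeq \tau \circ \pi$ via \inlinerulename{wsmtt-eq-sub-extend-weaken} in the $\suc{v'}$ case. It is correct as outlined and requires no change of strategy.
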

\begin{proof}
  By making use of \cref{lem:embed-ren-term-tele}, we have to show that $\sigmeqexpr{\addtele{\hat{\Gamma}}{\Phi}}{\embed{\arenexpr{v}{\addtele{\sigma}{\Phi}}}}{\subexprmtt{\embed{v}}{\embed{\addtele{\sigma}{\Phi}}}}{n}$ for all $\Phi : \Tele{n}{m}$ and $\sfvar{\addtele{\hat{\Delta}}{\Phi}}{v}{n}$.
  We do this by induction on the number of variables in $\Phi$.
  \begin{itemize}
  \item \case{} $\Phi = \Lambda$, so $\Phi$ has no variables \\
    The result is exactly what we assume in this lemma.
  \item \case{} $\Phi = \extendsctx{\extendsctx{\Phi'}{\mu}}{\Lambda}$ \\
    Now we distinguish between two cases for the variable $v$.
    \begin{itemize}
    \item \case{} $v = \vzero{\alpha}$ with $\twocell{\alpha}{\mu}{\locks{\Lambda}}$ \\
      For the left-hand side, we have that
      \begin{align*}
        &\embed{\arenexpr{\vzero{\alpha}}{\addtele{\extendsctx{\addtele{\sigma}{\Phi'}}{\mu}}{\Lambda}}} \\
        &= \embed{\arenexpr[\Lambda]{\vzero{\alpha}}{\lift{(\addtele{\sigma}{\Phi'})}}} \\
        &= \embed{\vzero{\alpha}} && \explanation{\cref{lem:lift-aren-var}} \\
        &= \subexprmtt{\vzero{}}{\key{\alpha}{\,\locksym_\mu}{\Lambda}{\addtele{\hat{\Gamma}}{\extendsctx{\Phi'}{\mu}}}}. && \explanation{Definition of $\embed{\_}$}
      \end{align*}
      On the other hand, we have
      \begin{align*}
        &\subexprmtt{\embed{\vzero{\alpha}}}{\embed{\addtele{\extendsctx{\addtele{\sigma}{\Phi'}}{\mu}}{\Lambda}}} \\
        &= \subexprmtt{\subexprmtt{\vzero{}}{\key{\alpha}{\,\locksym_\mu}{\Lambda}{\addtele{\hat{\Delta}}{\extendsctx{\Phi'}{\mu}}}}}{\embed{\addtele{\lift{(\addtele{\sigma}{\Phi'})}}{\Lambda}}} && \explanation{Definition of $\embed{\vzero{\alpha}}$} \\
        &\sigmeq \subexprmtt{\subexprmtt{\vzero{}}{\key{\alpha}{\,\locksym_\mu}{\Lambda}{\addtele{\hat{\Delta}}{\extendsctx{\Phi'}{\mu}}}}}{\addtele{\lift{(\embed{\addtele{\sigma}{\Phi'}})}}{\Lambda}} && \explanation{\cref{lem:embed-lift}} \\
        &\sigmeq \subexprmtt{\subexprmtt{\vzero{}}{\lock{\lift{(\embed{\addtele{\sigma}{\Phi'}})}}{\mu}}}{\key{\alpha}{\locksym_\mu}{\Lambda}{\addtele{\hat{\Gamma}}{\extendsctx{\Phi'}{\mu}}}} && \explanation{\inlinerulename{wsmtt-eq-sub-key-natural}} \\
        &\sigmeq \subexprmtt{\vzero{}}{\key{\alpha}{\locksym_\mu}{\Lambda}{\addtele{\hat{\Gamma}}{\extendsctx{\Phi'}{\mu}}}}. && \explanation{\inlinerulename{wsmtt-eq-expr-extend-var}}
      \end{align*}
    \item \case{} $v = \suc{v'}$ with $\sfvar{\addtele{\addtele{\hat{\Delta}}{\Phi'}}{\Lambda}}{v'}{n}$ \\
      Now we see that
      \begin{align*}
        &\embed{\arenexpr{\suc{v'}}{\addtele{\extendsctx{\addtele{\sigma}{\Phi'}}{\mu}}{\Lambda}}} \\
        &= \embed{\arenexpr[\Lambda]{\suc{v'}}{\lift{\left(\addtele{\sigma}{\Phi'}\right)}}} \\
        &= \embed{\suc{\arenexpr[\Lambda]{v'}{\addtele{\sigma}{\Phi'}}}} && \explanation{\cref{lem:lift-aren-var}} \\
        &= \subexprmtt{\embed{\arenexpr{v'}{\addtele{\addtele{\sigma}{\Phi'}}{\Lambda}}}}{\addtele{\pi}{\Lambda}} && \explanation{Definition of $\embed{\_}$} \\
        &\sigmeq \subexprmtt{\subexprmtt{\embed{v'}}{\embed{\addtele{\addtele{\sigma}{\Phi'}}{\Lambda}}}}{\addtele{\pi}{\Lambda}}. && \explanation{Induction hypothesis}
      \end{align*}
      Furthermore, we have
      \begin{align*}
        &\subexprmtt{\embed{\suc{v'}}}{\embed{\addtele{\sigma}{\addtele{\extendsctx{\Phi'}{\mu}}{\Lambda}}}} \\
        &= \subexprmtt{\subexprmtt{\embed{v'}}{\addtele{\pi}{\Lambda}}}{\addtele{\lift{(\embed{\addtele{\sigma}{\Phi'}})}}{\Lambda}} && \explanation{Definition of $\embed{\suc{v'}}$} \\
        &\sigmeq \subexprmtt{\embed{v'}}{\addtele{(\pi \circ \lift{(\embed{\addtele{\sigma}{\Phi'}})})}{\Lambda}} && \explanation{\textasteriskcentered} \\
        &\sigmeq \subexprmtt{\embed{v'}}{\addtele{(\embed{\addtele{\sigma}{\Phi'}} \circ \pi)}{\Lambda}} && \mathrlap{\explanation{\inlinerulename{wsmtt-eq-sub-extend-weaken}}} \\
        &\sigmeq \subexprmtt{\subexprmtt{\embed{v'}}{\embed{\addtele{\addtele{\sigma}{\Phi'}}{\Lambda}}}}{\addtele{\pi}{\Lambda}}. && \explanation{\textasteriskcentered} \qedhere
      \end{align*}
      The steps marked with (\textasteriskcentered) make use of \inlinerulename{wsmtt-eq-expr-sub-compose} and \inlinerulename{wsmtt-eq-sub-lock-compose}.
    \end{itemize}
  \end{itemize}
\end{proof}

\begin{lemma}
  \label{lem:embed-pi-ren}
  Up to $\upsigma$-equivalence, applying a weakening renaming commutes with the embedding function.
  In other words, for every lock telescope $\Lambda : \Locktele{n}{m}$ and \subfree{} expression $\sfexpr{\addtele{\hat{\Gamma}}{\Lambda}}{t}{n}$, we have that $\sigmeqexpr{\addtele{\extendsctx{\hat{\Gamma}}{\mu}}{\Lambda}}{\embed{\arenexpr{t}{\addtele{\pi}{\Lambda}}}}{\subexprmtt{\embed{t}}{\addtele{\pi}{\Lambda}} \sigmeq \subexprmtt{\embed{t}}{\embed{\addtele{\pi}{\Lambda}}}}{n}$.
\end{lemma}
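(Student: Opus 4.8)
The plan is to read the statement as the two-link chain $\embed{\arenexpr{t}{\addtele{\pi}{\Lambda}}} \sigmeq \subexprmtt{\embed{t}}{\addtele{\pi}{\Lambda}} \sigmeq \subexprmtt{\embed{t}}{\embed{\addtele{\pi}{\Lambda}}}$ and to isolate one WSMTT-level ingredient that settles both links at once. That ingredient is the substitution equivalence $\embed{\addtele{\pi}{\Lambda}} \sigmeq \addtele{\pi}{\Lambda}$, identifying the embedding of the SFMTT weakening with the genuine WSMTT weakening, both taken under the locks of $\Lambda$. Since $\pi = \weaken{\ida}$ in SFMTT, the embedding clauses give $\embed{\pi} = \embed{\ida} \circ \pi = \id \circ \pi$, which is $\upsigma$-equivalent to $\pi$ by the unit law for composition; and since $\embed{\_}$ commutes strictly with applying a lock, iterated use of the lock-congruence rule \inlinerulename{wsmtt-eq-sub-cong-lock} propagates this through $\Lambda$ to yield $\embed{\addtele{\pi}{\Lambda}} \sigmeq \addtele{\pi}{\Lambda}$. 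The right-hand link $\subexprmtt{\embed{t}}{\addtele{\pi}{\Lambda}} \sigmeq \subexprmtt{\embed{t}}{\embed{\addtele{\pi}{\Lambda}}}$ then follows immediately from this by the congruence rule \inlinerulename{wsmtt-eq-expr-cong-sub}.

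For the left-hand link I would first observe that a lock telescope applied to the atomic renaming $\weaken{\ida}$ is again an atomic renaming, so I can invoke \cref{lem:embed-ren-term-lock-tele} with $\sigma := \addtele{\pi}{\Lambda}$. This reduces the claim for arbitrary $t$ to its variable instance: for every further lock telescope $\Lambda'$ and variable $\sfvar{\addtele{\hat{\Gamma}}{\extendsctx{\Lambda}{\Lambda'}}}{v}{n}$ one must check $\embed{\arenexpr{v}{\addtele{\pi}{\extendsctx{\Lambda}{\Lambda'}}}} \sigmeq \subexprmtt{\embed{v}}{\embed{\addtele{\pi}{\extendsctx{\Lambda}{\Lambda'}}}}$, using that stacking $\Lambda$ and then $\Lambda'$ concatenates to $\extendsctx{\Lambda}{\Lambda'}$. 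Here the action-on-variables rules \cref{eq:arenvar-lock,eq:arenvar-weaken,eq:arenvar-id} collapse the renaming: pushing all locks of $\extendsctx{\Lambda}{\Lambda'}$ into the lock-telescope argument and then firing $\weaken{\ida}$ gives $\arenexpr{v}{\addtele{\pi}{\extendsctx{\Lambda}{\Lambda'}}} = \suc{v}$. By the defining clause of $\embed{\_}$ on $\suc{\_}$ the left-hand side becomes $\subexprmtt{\embed{v}}{\addtele{\pi}{\extendsctx{\Lambda}{\Lambda'}}}$, the WSMTT weakening under exactly the locks sitting above $v$, and this matches the right-hand side after applying the core equivalence at the telescope $\extendsctx{\Lambda}{\Lambda'}$ and \inlinerulename{wsmtt-eq-expr-cong-sub}. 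Transitivity of the two links then gives the stated chain.

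I expect the only genuine obstacle to be bookkeeping rather than ideas: I must verify that the lock telescope implicitly recorded by \inlinerulename{sf-var-suc} inside $\embed{\suc{v}}$ is literally the combined telescope $\extendsctx{\Lambda}{\Lambda'}$ produced by the weakening, so that the two WSMTT weakenings coincide on the nose and the congruence step applies directly rather than up to some reindexing. Once that identification is made, the whole argument rests on the single unit-law computation $\embed{\pi} = \id \circ \pi \sigmeq \pi$ transported through the locks, with \cref{lem:embed-ren-term-lock-tele} supplying the passage from variables to arbitrary expressions.
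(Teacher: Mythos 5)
Your proposal is correct and follows essentially the same route as the paper's proof: the second link is settled by computing $\embed{\addtele{\pi}{\Lambda}} = \addtele{(\id \circ \pi)}{\Lambda} \sigmeq \addtele{\pi}{\Lambda}$ via the unit law, and the first link is reduced to variables via \cref{lem:embed-ren-term-lock-tele}, where $\arenexpr{v}{\addtele{\pi}{\extendsctx{\Lambda}{\Lambda'}}} = \suc{v}$ and the defining clause of $\embed{\suc{v}}$ produces exactly the WSMTT weakening under the combined lock telescope. The bookkeeping concern you flag about the telescope recorded by $\embed{\suc{v}}$ does work out as you expect, and the paper treats it as immediate.
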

\begin{proof}
  We first prove the second $\upsigma$-equivalence by computing the following.
  \begin{align*}
    \embed{\addtele{\pi}{\Lambda}}
    &= \addtele{\embed{\pi}}{\Lambda} = \addtele{\embed{\weaken{\ida}}}{\Lambda} \\
    &= \addtele{(\embed{\ida} \circ \pi)}{\Lambda} = \addtele{(\id \circ \pi)}{\Lambda} \\
    &\sigmeq \addtele{\pi}{\Lambda} && \explanation{\inlinerulenamestyle{wsmtt-eq-sub-id-left}}
  \end{align*}
  The rule \inlinerulenamestyle{wsmtt-eq-sub-id-left} is not included in \cref{fig:sigma-equiv}, but it is similar to \inlinerulename{wsmtt-eq-sub-id-right}.

  To prove the other $\upsigma$-equivalence we use \cref{lem:embed-ren-term-lock-tele}, so we take an arbitrary lock telescope $\Theta : \Locktele{o}{n}$ and a variable $\sfvar{\addtele{\addtele{\hat{\Gamma}}{\Lambda}}{\Theta}}{v}{o}$ and then show that $\embed{\arenexpr{v}{\addtele{\addtele{\pi}{\Lambda}}{\Theta}}} = \subexprmtt{\embed{v}}{\embed{\addtele{\addtele{\pi}{\Lambda}}{\Theta}}}$.
  This can be easily proved by expanding the definition of $\embed{\_}$ as follows.
  \begin{align*}
    \embed{\arenexpr[\addtele{\Lambda}{\Theta}]{v}{\pi}}
    &= \embed{\suc{v}} \\
    &= \subexprmtt{\embed{v}}{\addtele{\addtele{\pi}{\Lambda}}{\Theta}} \\
    &\sigmeq \subexprmtt{\embed{v}}{\embed{\addtele{\addtele{\pi}{\Lambda}}{\Theta}}} \qedhere
  \end{align*}
\end{proof}

Using \cref{lem:embed-pi-ren}, we can now prove a result similar to \cref{lem:embed-ren-term-lock-tele} but for substitutions instead of renamings.
\begin{lemma}
  \label{lem:embed-sub-term-lock-tele}
  Let $\sfasub{\sigma}{\hat{\Gamma}}{\hat{\Delta}}{m}$ be an atomic \subfree{} substitution and assume that $\sigmeqexpr{\addtele{\hat{\Gamma}}{\Lambda}}{\embed{\asubexpr{v}{\addtele{\sigma}{\Lambda}}}}{\subexprmtt{\embed{v}}{\embed{\addtele{\sigma}{\Lambda}}}}{n}$ for every lock telescope $\Lambda : \Tele{n}{m}$ and variable $\sfvar{\addtele{\hat{\Delta}}{\Lambda}}{v}{n}$.
  Then we have that $\sigmeqexpr{\hat{\Gamma}}{\embed{\asubexpr{t}{\sigma}}}{\subexprmtt{\embed{t}}{\embed{\sigma}}}{m}$ for all expressions $\sfexpr{\hat{\Delta}}{t}{m}$.
\end{lemma}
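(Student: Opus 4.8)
The plan is to mirror the proof of \cref{lem:embed-ren-term-lock-tele} almost verbatim, replacing the renaming-specific ingredients by their substitution counterparts. First I would invoke \cref{lem:embed-ren-term-tele}: since that lemma is phrased for the combined $\mathsf{aren}/\mathsf{asub}$ judgement, it applies directly to the atomic substitution $\sigma$, and it reduces the goal to establishing $\sigmeqexpr{\addtele{\hat{\Gamma}}{\Phi}}{\embed{\asubexpr{v}{\addtele{\sigma}{\Phi}}}}{\subexprmtt{\embed{v}}{\embed{\addtele{\sigma}{\Phi}}}}{n}$ for every \emph{scoping} telescope $\Phi : \Tele{n}{m}$ and every variable $\sfvar{\addtele{\hat{\Delta}}{\Phi}}{v}{n}$. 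I would then prove this by induction on the number of variables occurring in $\Phi$, exactly as in the renaming case.

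The base case $\Phi = \Lambda$, where $\Phi$ is a pure lock telescope, is precisely the hypothesis of the lemma. For the inductive step I would write $\Phi = \extendsctx{\extendsctx{\Phi'}{\mu}}{\Lambda}$ and case split on $v$. When $v = \vzero{\alpha}$ the computation is identical to the renaming case: since a lifted atomic substitution fixes $\vzero{\alpha}$ (\cref{lem:lift-asub-var}), the left-hand side reduces to $\embed{\vzero{\alpha}}$, and on the right-hand side I would rewrite $\embed{\addtele{\sigma}{\Phi}}$ using \cref{lem:embed-lift} and then simplify with \inlinerulename{wsmtt-eq-sub-key-natural} followed by \inlinerulename{wsmtt-eq-expr-extend-var}, which is exactly the manipulation carried out for $\vzero{\alpha}$ in \cref{lem:embed-ren-term-lock-tele}.

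The genuinely new work is the case $v = \suc{v'}$. Here \cref{lem:lift-asub-var} gives $\asubexpr[\Lambda]{\suc{v'}}{\lift{(\addtele{\sigma}{\Phi'})}} = \arenexpr[\Lambda]{\asubexpr[\Lambda]{v'}{\addtele{\sigma}{\Phi'}}}{\pi}$, so acting with the lifted \emph{substitution} on a successor variable produces a $\pi$-\emph{renaming} wrapped around the substituted variable. I would push the embedding past this renaming using \cref{lem:embed-pi-ren}, then apply the induction hypothesis to $\asubexpr{v'}{\addtele{\addtele{\sigma}{\Phi'}}{\Lambda}}$, obtaining $\subexprmtt{\subexprmtt{\embed{v'}}{\embed{\addtele{\addtele{\sigma}{\Phi'}}{\Lambda}}}}{\addtele{\pi}{\Lambda}}$. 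On the WSMTT side, $\embed{\suc{v'}} = \subexprmtt{\embed{v'}}{\addtele{\pi}{\Lambda}}$, and after expanding $\embed{\addtele{\sigma}{\Phi}}$ via \cref{lem:embed-lift} I would reconcile the two using the substitution composition and lock laws (\inlinerulename{wsmtt-eq-expr-sub-compose}, \inlinerulename{wsmtt-eq-sub-lock-compose}) together with the weakening/lift interaction $\pi \circ \lift{\tau} \sigmeq \tau \circ \pi$ supplied by \inlinerulename{wsmtt-eq-sub-extend-weaken}, exactly mirroring the marked steps of \cref{lem:embed-ren-term-lock-tele}.

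The main obstacle I anticipate is precisely this $\suc{v'}$ case: one must correctly match the $\pi$-renaming created by \cref{lem:lift-asub-var} against the $\pi$ already present in the embedding of $\suc{v'}$, and keep careful track of where the lock telescope $\Lambda$ is appended throughout the chain of $\upsigma$-equivalences. The crucial bridge is \cref{lem:embed-pi-ren}, which is what lets the $\pi$-renaming commute past $\embed{\_}$; with that in hand the remaining reasoning is routine bookkeeping of substitution composition up to $\upsigma$-equivalence.
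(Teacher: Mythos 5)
Your proposal matches the paper's proof essentially step for step: reduce to scoping telescopes via \cref{lem:embed-ren-term-tele}, induct on the number of variables in $\Phi$, handle $\vzero{\alpha}$ exactly as in the renaming case, and in the $\suc{v'}$ case use \cref{lem:lift-asub-var} followed by \cref{lem:embed-pi-ren} to commute the embedding past the generated $\pi$-renaming before applying the induction hypothesis. You have also correctly identified \cref{lem:embed-pi-ren} as the one genuinely new ingredient relative to \cref{lem:embed-ren-term-lock-tele}, which is precisely the role it plays in the paper.
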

\begin{proof}
  The proof is very similar to that of \cref{lem:embed-ren-term-lock-tele}.
  Again we make use of \cref{lem:embed-ren-term-tele}, so we take an arbitrary $\Phi : \Tele{n}{m}$ and $\sfvar{\addtele{\hat{\Delta}}{\Phi}}{v}{n}$ and show that $\sigmeqexpr{\addtele{\hat{\Gamma}}{\Phi}}{\embed{\asubexpr{v}{\addtele{\sigma}{\Phi}}}}{\subexprmtt{\embed{v}}{\embed{\addtele{\sigma}{\Phi}}}}{n}$ by induction on the number of variables in $\Phi$.
  \begin{itemize}
  \item \case{} $\Phi = \Lambda$, so $\Phi$ contains no variables \\
    The result we need to show is exactly the assumption in the lemma.
  \item \case{} $\Phi = \extendsctx{\extendsctx{\Phi'}{\mu}}{\Lambda}$ \\
    We proceed by case distinction for the variable $v$.
    \begin{itemize}
    \item \case{} $v = \vzero{\alpha}$ with $\twocell{\alpha}{\mu}{\locks{\Lambda}}$ \\
      For the left-hand side, we get
      \begin{align*}
        &\embed{\asubexpr{\vzero{\alpha}}{\addtele{\extendsctx{\addtele{\sigma}{\Phi'}}{\mu}}{\Lambda}}} \\
        &= \embed{\asubexpr[\Lambda]{\vzero{\alpha}}{\lift{(\addtele{\sigma}{\Phi'})}}} \\
        &= \embed{\vzero{\alpha}} && \explanation{\cref{lem:lift-asub-var}} \\
        &= \subexprmtt{\vzero{}}{\key{\alpha}{\,\locksym{\mu}}{\Lambda}{\extendsctx{\addtele{\hat{\Gamma}}{\Phi'}}{\mu}}}. && \explanation{Definition of $\embed{\vzero{\alpha}}$}
      \end{align*}
      The right-hand side can be computed in exactly the same way as in the proof of \cref{lem:embed-ren-term-lock-tele}.
    \item \case{} $v = \suc{v'}$ with $\sfvar{\addtele{\addtele{\hat{\Delta}}{\Phi'}}{\Lambda}}{v'}{n}$ \\
      The left-hand side now becomes
      \begin{align*}
        &\embed{\asubexpr{\suc{v'}}{\addtele{\extendsctx{\addtele{\sigma}{\Phi'}}{\mu}}{\Lambda}}} \\
        &= \embed{\asubexpr[\Lambda]{\suc{v'}}{\lift{(\addtele{\sigma}{\Phi'})}}} \\
        &= \embed{\arenexpr[\Lambda]{\asubexpr[\Lambda]{v'}{\addtele{\sigma}{\Phi'}}}{\pi}} && \explanation{\cref{lem:lift-asub-var}} \\
        &\sigmeq \subexprmtt{\embed{\asubexpr[\Lambda]{v'}{\addtele{\sigma}{\Phi'}}}}{\addtele{\pi}{\Lambda}} && \explanation{\cref{lem:embed-pi-ren}} \\
        &\sigmeq \subexprmtt{\subexprmtt{\embed{v'}}{\embed{\addtele{\addtele{\sigma}{\Phi'}}{\Lambda}}}}{\addtele{\pi}{\Lambda}}. && \explanation{Induction hypothesis}
      \end{align*}
      Again, the right-hand side can be computed in entirely the same way as in the proof of \cref{lem:embed-ren-term-lock-tele}. \qedhere
    \end{itemize}
  \end{itemize}
\end{proof}

\begin{lemma}
  \label{lem:embed-key-ren}
  Given lock telescopes $\Lambda, \Theta : \Locktele{n}{m}$ and a 2-cell $\twocell{\alpha}{\locks{\Lambda}}{\locks{\Theta}}$, we have that
  \[
    \sigmeqexpr{\addtele{\addtele{\hat{\Gamma}}{\Theta}}{\Psi}}{\embed{\arenexpr{t}{\addtele{\key{\alpha}{\Lambda}{\Theta}{\hat{\Gamma}}}{\Psi}}}}{\subexprmtt{\embed{t}}{\embed{\addtele{\key{\alpha}{\Lambda}{\Theta}{\hat{\Gamma}}}{\Psi}}}}{o}
  \]
  for all lock telescopes $\Psi : \Locktele{o}{n}$ and expressions $\sfexpr{\addtele{\addtele{\hat{\Gamma}}{\Lambda}}{\Psi}}{t}{o}$.
\end{lemma}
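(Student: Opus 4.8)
The plan is to reduce the claim to the case where $t$ is a single variable, and then to establish that case by induction on the variable. The reduction rests on the observation that, for any fixed lock telescope $\Psi$, the renaming $\addtele{\key{\alpha}{\Lambda}{\Theta}{\hat{\Gamma}}}{\Psi}$ is an atomic \subfree{} renaming, being a key renaming to which a number of locks have been applied. Hence \cref{lem:embed-ren-term-lock-tele} applies to it, and to invoke that lemma I only need the variable statement; since $\addtele{\addtele{\key{\alpha}{\Lambda}{\Theta}{\hat{\Gamma}}}{\Psi}}{\Xi} = \addtele{\key{\alpha}{\Lambda}{\Theta}{\hat{\Gamma}}}{\addtele{\Psi}{\Xi}}$ and $\addtele{\Psi}{\Xi}$ is again a lock telescope, it suffices to prove, for every lock telescope $\Psi$ and every variable $\sfvar{\addtele{\addtele{\hat{\Gamma}}{\Lambda}}{\Psi}}{v}{o}$, that
\[
  \sigmeqexpr{\addtele{\addtele{\hat{\Gamma}}{\Theta}}{\Psi}}{\embed{\arenexpr{v}{\addtele{\key{\alpha}{\Lambda}{\Theta}{\hat{\Gamma}}}{\Psi}}}}{\subexprmtt{\embed{v}}{\embed{\addtele{\key{\alpha}{\Lambda}{\Theta}{\hat{\Gamma}}}{\Psi}}}}{o}.
\]
Once this universally quantified variable statement is in hand, the full lemma follows by applying \cref{lem:embed-ren-term-lock-tele} to $\sigma = \addtele{\key{\alpha}{\Lambda}{\Theta}{\hat{\Gamma}}}{\Psi}$ for each fixed $\Psi$.

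I would prove the variable statement by induction on $v$, with the two cases $v = \suc{v'}$ and $v = \vzero{\gamma}$. In the case $v = \suc{v'}$, \cref{lem:key-ren-suc-var} rewrites the left-hand renaming action as $\suc{\arenexpr{v'}{\key{\alpha}{\Lambda}{\Theta}{\hat{\Gamma}'}}}$, so that after expanding the definition of $\embed{\suc{\cdot}}$ as a $\pi$-weakening and applying the induction hypothesis, both sides reduce to an embedded key substitution composed with a weakening; these are then reconciled using the naturality rule \inlinerulename{wsmtt-eq-sub-key-natural} to commute the embedded key substitution past the weakening. This case should be routine.

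The genuinely delicate case is $v = \vzero{\gamma}$. On the \subfree{} side, \cref{eq:arenvar-key} followed by \cref{eq:twocell-vzero} produces a variable whose annotation is a whiskered composite of $\gamma$ with $\alpha$ and identity 2-cells, whose embedding is a single WSMTT key substitution applied to $\vzero{}$. On the WSMTT side, $\embed{\vzero{\gamma}}$ is itself a key substitution encoding $\gamma$ applied to $\vzero{}$, post-composed with $\embed{\addtele{\key{\alpha}{\Lambda}{\Theta}{\hat{\Gamma}}}{\Psi}}$. The main obstacle will be matching these two: I expect to use \inlinerulename{wsmtt-eq-sub-key-natural} to commute the key substitution past the lock-telescope structure of the embedded renaming, and then the composition rules \inlinerulename{wsmtt-eq-sub-key-compose-vertical} and \inlinerulename{wsmtt-eq-sub-key-compose-horizontal}, together with the unit rule \inlinerulename{wsmtt-eq-sub-key-unit}, to collapse the resulting composite of key substitutions into the single key substitution corresponding to the annotation computed on the \subfree{} side. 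Every rearrangement of the whiskered 2-cells must be justified by the strict 2-category laws, and getting this 2-cell bookkeeping to line up exactly is where essentially all of the work lies.
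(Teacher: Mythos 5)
Your proposal matches the paper's proof: it performs the same reduction via \cref{lem:embed-ren-term-lock-tele} (absorbing the extra lock telescope into $\Psi$ exactly as the paper does with its auxiliary telescope $\Upsilon$), then inducts on the variable, handling $\suc{v'}$ with \cref{lem:key-ren-suc-var}, the definition of $\embed{\_}$, the induction hypothesis and \inlinerulename{wsmtt-eq-sub-key-natural}, and handling $\vzero{}$ with \cref{eq:arenvar-key}, \cref{eq:twocell-vzero} and the key composition/unit rules plus the strict 2-category laws. The only cosmetic difference is that the paper's $\vzero{}$ case needs no naturality step, only \inlinerulename{wsmtt-eq-sub-key-compose-vertical}, \inlinerulename{wsmtt-eq-sub-key-compose-horizontal} and \inlinerulename{wsmtt-eq-sub-key-unit}.
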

\begin{proof}
  We again use \cref{lem:embed-ren-term-lock-tele}, so we take a lock telescope $\Upsilon : \Locktele{p}{o}$ and a variable $\sfvar{\addtele{\addtele{\addtele{\hat{\Gamma}}{\Lambda}}{\Psi}}{\Upsilon}}{v}{p}$.
  We then distinguish between two cases for $v$.
  \begin{itemize}
  \item \case{} $v = \vzero{\beta}$ with $\hat{\Gamma} = \addtele{\extendsctx{\hat{\Gamma}'}{\mu}}{\Omega}$ and $\twocell{\beta}{\mu}{\locks{\addtele{\addtele{\addtele{\Omega}{\Lambda}}{\Psi}}{\Upsilon}}}$ \\
    Now we can compute that
    \begin{align*}
      &\embed{\arenexpr[\addtele{\Psi}{\Upsilon}]{\vzero{\beta}}{\key{\alpha}{\Lambda}{\Theta}{\addtele{\extendsctx{\hat{\Gamma}'}{\mu}}{\Omega}}}} \\
      &= \embed{\vzero{(1_\Omega \star (\alpha \star 1_{(\addtele{\Psi}{\Upsilon})})) \circ \beta}} && \explanation{\cref{eq:arenvar-key,eq:twocell-vzero}} \\
      &= \subexprmtt{\vzero{}}{\key{(1_\Omega \star (\alpha \star 1_{(\addtele{\Psi}{\Upsilon})})) \circ \beta}{\,\locksym_\mu}{\addtele{\addtele{\addtele{\Omega}{\Theta}}{\Psi}}{\Upsilon}}{\extendsctx{\hat{\Gamma}'}{\mu}}} && \explanation{Definition of $\embed{\_}$} \\
      &\sigmeq \subexprmtt{\subexprmtt{\vzero{}}{\key{\beta}{\,\locksym_\mu}{\addtele{\addtele{\addtele{\Omega}{\Lambda}}{\Psi}}{\Upsilon}}{\extendsctx{\hat{\Gamma}'}{\mu}}}}{\addtele{\key{1_\Omega}{\Omega}{\Omega}{\extendsctx{\hat{\Gamma}'}{\mu}}}{\addtele{\addtele{\Lambda}{\Psi}}{\Upsilon}}} \\
      &\phantom{{}\sigmeq} \qquad\quad \mathrlap{\subexprmtt{\subexprmtt{}{{\addtele{\key{\alpha}{\Lambda}{\Theta}{\addtele{\extendsctx{\hat{\Gamma}'}{\mu}}{\Omega}}}{\addtele{\Psi}{\Upsilon}}}}}{\key{1_{(\addtele{\Psi}{\Upsilon})}}{\addtele{\Psi}{\Upsilon}}{\addtele{\Psi}{\Upsilon}}{\addtele{\addtele{\extendsctx{\hat{\Gamma}'}{\mu}}{\Omega}}{\Theta}}}} && \explanation{\textasteriskcentered} \\
      &\sigmeq \subexprmtt{\subexprmtt{\vzero{}}{\key{\beta}{\,\locksym_\mu}{\addtele{\addtele{\addtele{\Omega}{\Lambda}}{\Psi}}{\Upsilon}}{\extendsctx{\hat{\Gamma}'}{\mu}}}}{\addtele{\key{\alpha}{\Lambda}{\Theta}{\addtele{\extendsctx{\hat{\Gamma}'}{\mu}}{\Omega}}}{\addtele{\Psi}{\Upsilon}}} && \explanation{\inlinerulename{wsmtt-eq-sub-key-unit}} \\
      &= \subexprmtt{\embed{\vzero{\beta}}}{\embed{\addtele{\key{\alpha}{\Lambda}{\Theta}{\addtele{\extendsctx{\hat{\Gamma}'}{\mu}}{\Omega}}}{\addtele{\Psi}{\Upsilon}}}}. && \explanation{Definition of $\embed{\_}$}
    \end{align*}
    In the step marked by (\textasteriskcentered) we use of the rules \inlinerulename{wsmtt-eq-sub-key-compose-vertical} and \inlinerulename{wsmtt-eq-sub-key-compose-horizontal} from \cref{fig:sigma-equiv}.
  \item \case{} $v = \suc{v'}$ with $\hat{\Gamma} = \addtele{\extendsctx{\hat{\Gamma}'}{\mu}}{\Omega}$ and $\sfvar{\addtele{\addtele{\addtele{\addtele{\hat{\Gamma}'}{\Omega}}{\Lambda}}{\Psi}}{\Upsilon}}{v'}{p}$ \\
    In this case we have that
    \begin{align*}
      &\embed{\arenexpr[\addtele{\Psi}{\Upsilon}]{\suc{v'}}{\key{\alpha}{\Lambda}{\Theta}{\addtele{\extendsctx{\hat{\Gamma}'}{\mu}}{\Omega}}}} \\
      &= \embed{\suc{\arenexpr[\addtele{\Psi}{\Upsilon}]{v'}{\key{\alpha}{\Lambda}{\Theta}{\addtele{\hat{\Gamma}'}{\Omega}}}}} && \explanation{\cref{eq:arenvar-key,eq:twocell-vsuc}} \\
      &= \subexprmtt{\embed{\arenexpr[\addtele{\Psi}{\Upsilon}]{v'}{\key{\alpha}{\Lambda}{\Theta}{\addtele{\hat{\Gamma}'}{\Omega}}}}}{\addtele{\addtele{\addtele{\addtele{\pi}{\Omega}}{\Theta}}{\Psi}}{\Upsilon}} && \explanation{Definition of $\embed{\_}$} \\
      &\sigmeq \mathrlap{\subexprmtt{\subexprmtt{\embed{v'}}{\embed{\addtele{\addtele{\key{\alpha}{\Lambda}{\Theta}{\addtele{\hat{\Gamma}'}{\Omega}}}{\Psi}}{\Upsilon}}}}{\addtele{\addtele{\addtele{\addtele{\pi}{\Omega}}{\Theta}}{\Psi}}{\Upsilon}}} \\
      &&& \explanation{Induction hypothesis} \\
      &\sigmeq \mathrlap{\subexprmtt{\subexprmtt{\embed{v'}}{\addtele{\addtele{\addtele{\addtele{\pi}{\Omega}}{\Lambda}}{\Psi}}{\Upsilon}}}{\embed{\addtele{\addtele{\key{\alpha}{\Lambda}{\Theta}{\addtele{\hat{\Gamma}'}{\Omega}}}{\Psi}}{\Upsilon}}}} \\
      &&& \explanation{\inlinerulename{wsmtt-eq-sub-key-natural}} \\
      &= \subexprmtt{\embed{\suc{v'}}}{\embed{\addtele{\addtele{\key{\alpha}{\Lambda}{\Theta}{\addtele{\hat{\Gamma}'}{\Omega}}}{\Psi}}{\Upsilon}}}. && \explanation{Definition of $\embed{\_}$} \qedhere
    \end{align*}
  \end{itemize}
\end{proof}

We can now prove that the condition in \cref{lem:embed-sub-term-lock-tele} is actually always satisfied.
\begin{lemma}
  \label{lem:embed-subst-var}
  Given an atomic \subfree{} substitution $\sfasub{\sigma}{\hat{\Gamma}}{\hat{\Delta}}{m}$, a lock telescope $\Lambda : \Locktele{n}{m}$ and a variable $\sfvar{\addtele{\hat{\Delta}}{\Lambda}}{v}{n}$, then we have that $\sigmeqexpr{\addtele{\hat{\Gamma}}{\Lambda}}{\embed{\asubexpr[\Lambda]{v}{\sigma}}}{\subexprmtt{\embed{v}}{\embed{\addtele{\sigma}{\Lambda}}}}{n}$.
\end{lemma}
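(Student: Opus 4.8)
The plan is to prove the statement by induction on the structure of the atomic substitution $\sigma$, following the six defining clauses of $\asubexpr[\Lambda]{v}{\sigma}$ (\cref{eq:asubvar-id,eq:asubvar-weaken,eq:asubvar-lock,eq:asubvar-key,eq:asubvar-extend-vzero,eq:asubvar-extend-vsuc}), with the extension clause split on the shape of $v$. The induction hypothesis must be quantified over all lock telescopes $\Lambda$ and all variables $v$, because the clause for $\lock{\sigma'}{\mu}$ (\cref{eq:asubvar-lock}) recurses on $\sigma'$ with the enlarged telescope $\extendlocktele{\mu}{\Lambda}$, and the $\suc{v'}$ case of an extension (\cref{eq:asubvar-extend-vsuc}) recurses on $\sigma'$ with the smaller variable $v'$. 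Throughout I would use two bookkeeping facts: first, $\embed{\_}$ commutes definitionally with locks on rensubs, so that $\embed{\addtele{\sigma}{\Lambda}}$ is the WSMTT substitution obtained by locking $\embed{\sigma}$ and its constituents; and second, that these locks distribute over composition and over $\id$ up to $\upsigma$-equivalence via \inlinerulename{wsmtt-eq-sub-lock-compose} and \inlinerulename{wsmtt-eq-sub-lock-id}.

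The easy cases are $\ida$, $\weaken{\sigma'}$ and $\lock{\sigma'}{\mu}$. For $\ida$ both sides reduce to $\embed{v}$ after simplifying $\embed{\addtele{\ida}{\Lambda}}$ to $\id$ (\inlinerulename{wsmtt-eq-sub-lock-id}) and applying \inlinerulename{wsmtt-eq-expr-sub-id}. For $\weaken{\sigma'}$, \cref{eq:asubvar-weaken} rewrites the left-hand side to $\embed{\arenexpr{(\asubexpr[\Lambda]{v}{\sigma'})}{\addtele{\pi}{\Lambda}}}$; I would pull the weakening out of $\embed{\_}$ with \cref{lem:embed-pi-ren}, apply the induction hypothesis for $\sigma'$, and then use \inlinerulename{wsmtt-eq-expr-sub-compose} together with \inlinerulename{wsmtt-eq-sub-lock-compose} to recognise $\embed{\addtele{\sigma'}{\Lambda}} \circ \addtele{\pi}{\Lambda}$ as $\embed{\addtele{\weaken{\sigma'}}{\Lambda}}$. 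For $\lock{\sigma'}{\mu}$, I would invoke the induction hypothesis for $\sigma'$ with telescope $\extendlocktele{\mu}{\Lambda}$; the only content is the routine identity $\addtele{(\lock{\sigma'}{\mu})}{\Lambda} = \addtele{\sigma'}{(\extendlocktele{\mu}{\Lambda})}$, which follows from how lock telescopes act on rensubs.

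The two cases carrying the modal content are the key clause and the $\vzero{\alpha}$ sub-case of an extension. For $\sigma = \key{\beta}{\Theta}{\Psi}{\hat{\Gamma}}$ I would first note that a key substitution and its corresponding key renaming act identically on a variable (both equal the 2-cell transport, by \cref{eq:asubvar-key,eq:arenvar-key}), so the left-hand side is $\embed{\arenexpr{v}{\addtele{\key{\beta}{\Theta}{\Psi}{\hat{\Gamma}}}{\Lambda}}}$, which is exactly an instance of \cref{lem:embed-key-ren} with $t = v$ and outer lock telescope $\Lambda$. For $\sigma = \sigma' . t$ with $v = \vzero{\alpha}$ (\cref{eq:asubvar-extend-vzero}), the left-hand side is $\embed{\arenexpr{t}{\key{\alpha}{\locksym_\mu}{\Lambda}{\hat{\Gamma}}}}$, which \cref{lem:embed-key-ren} rewrites to $\subexprmtt{\embed{t}}{\key{\alpha}{\locksym_\mu}{\Lambda}{\hat{\Gamma}}}$. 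On the right-hand side, $\embed{\vzero{\alpha}}$ unfolds to $\vzero{}$ substituted by a key, so after \inlinerulename{wsmtt-eq-expr-sub-compose} I would commute that key past $\embed{\addtele{(\sigma' . t)}{\Lambda}}$ using the naturality rule \inlinerulename{wsmtt-eq-sub-key-natural}, arriving at $\subexprmtt{\subexprmtt{\vzero{}}{\lock{(\embed{\sigma'} . \embed{t})}{\mu}}}{\key{\alpha}{\locksym_\mu}{\Lambda}{\hat{\Gamma}}}$, after which \inlinerulename{wsmtt-eq-expr-extend-var} collapses the inner substitution to $\embed{t}$, matching the left-hand side. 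Finally, for $\sigma = \sigma' . t$ with $v = \suc{v'}$ (\cref{eq:asubvar-extend-vsuc}), the left-hand side is simply $\embed{\asubexpr[\Lambda]{v'}{\sigma'}}$, dispatched by the induction hypothesis, while the right-hand side simplifies by pushing the weakening hidden inside $\embed{\suc{v'}}$ through the extension with \inlinerulename{wsmtt-eq-sub-extend-weaken} (and \inlinerulename{wsmtt-eq-sub-lock-compose} to carry the telescope).

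The main obstacle I anticipate is the $\vzero{\alpha}$ extension case: it is the sole point where the generalised key-naturality rule \inlinerulename{wsmtt-eq-sub-key-natural} must be threaded together with \cref{lem:embed-key-ren} and \inlinerulename{wsmtt-eq-expr-extend-var}, and where I must check most carefully that the lock telescopes and 2-cells decorating the keys agree on both sides. The remaining delicate point is purely clerical, namely keeping the lock-telescope bookkeeping straight (the identity $\addtele{(\lock{\sigma'}{\mu})}{\Lambda} = \addtele{\sigma'}{(\extendlocktele{\mu}{\Lambda})}$ and the repeated appeals to $\embed{\_}$ commuting with locks); none of this is conceptually hard, but it is where errors would most easily slip in.
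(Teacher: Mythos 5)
Your proof follows essentially the same route as the paper's: the same induction on the structure of the atomic substitution with the induction hypothesis generalised over lock telescopes and variables, the same appeals to \cref{lem:embed-pi-ren} and \cref{lem:embed-key-ren}, and the same combination of \inlinerulename{wsmtt-eq-sub-key-natural} with \inlinerulename{wsmtt-eq-expr-extend-var} in the $\vzero{\alpha}$ extension case. The only omission is the constructor case $\sigma = \emptysub{}$, which the paper dispatches as vacuously true since $\addtele{\emptyctx}{\Lambda}$ contains no variables.
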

\begin{proof}
  This proof proceeds by induction on the atomic substitution $\sigma$.
  \begin{itemize}
  \item \case{} $\sfasub{\emptysub}{\hat{\Gamma}}{\emptyctx{}}{m}$ (\inlinerulename{sf-arensub-empty}) \\
    In this case there can be no variable in the scoping context $\addtele{\emptyctx}{\Lambda}$, so the statement we have to prove is vacuously true.
  \item \case{} $\sfasub{\ida}{\hat{\Gamma}}{\hat{\Gamma}}{m}$ (\inlinerulename{sf-arensub-id}) \\
    Now $\embed{\asubexpr[\Lambda]{v}{\ida{}}} = \embed{v}$ and on the other hand
    \begin{align*}
      \subexprmtt{\embed{v}}{\embed{\addtele{\ida{}}{\Lambda}}}
      &= \subexprmtt{\embed{v}}{\addtele{\id{}}{\Lambda}} && \explanation{Definition of $\embed{\_}$} \\
      &\sigmeq \subexprmtt{\embed{v}}{\id} && \explanation{\inlinerulename{wsmtt-eq-sub-lock-id}} \\
      &\sigmeq \embed{v}. && \explanation{\inlinerulename{wsmtt-eq-expr-sub-id}}
    \end{align*}
  \item \case{} $\sfasub{\weaken{\sigma}}{\extendsctx{\hat{\Gamma}}{\mu}}{\hat{\Delta}}{m}$ (\inlinerulename{sf-arensub-weaken}) \\
    In this case we can compute
    \begin{align*}
      &\embed{\asubexpr[\Lambda]{v}{\weaken{\sigma}}} \\
      &= \embed{\arenexpr{\asubexpr[\Lambda]{v}{\sigma}}{\addtele{\pi}{\Lambda}}} && \explanation{\cref{eq:asubvar-weaken}} \\
      &\sigmeq \subexprmtt{\embed{\asubexpr[\Lambda]{v}{\sigma}}}{\embed{\addtele{\pi}{\Lambda}}} && \explanation{\cref{lem:embed-pi-ren}} \\
      &\sigmeq \subexprmtt{\subexprmtt{\embed{v}}{\embed{\addtele{\sigma}{\Lambda}}}}{\embed{\addtele{\pi}{\Lambda}}} && \explanation{Induction hypothesis} \\
      &\sigmeq \subexprmtt{\embed{v}}{\addtele{\big(\embed{\sigma} \circ \pi\big)}{\Lambda}} && \explanation{\textasteriskcentered}\\
      &= \subexprmtt{\embed{v}}{\embed{\addtele{\weaken{\sigma}}{\Lambda}}}. && \explanation{Definition of $\embed{\_}$}
    \end{align*}
    In the step marked with (\textasteriskcentered) we made use of \inlinerulename{wsmtt-eq-expr-sub-compose} and \inlinerulename{wsmtt-eq-sub-lock-compose}.
  \item \case{} $\sfasub{\lock{\sigma}{\mu}}{\lock{\hat{\Gamma}}{\mu}}{\lock{\hat{\Delta}}{\mu}}{m}$ (\inlinerulename{sf-arensub-lock}) \\
    Then we have that
    \begin{align*}
      \embed{\asubexpr[\Lambda]{v}{\lock{\sigma}{\mu}}}
      &= \embed{\asubexpr[\addtele{\locksym_\mu}{\Lambda}]{v}{\sigma}} && \explanation{\cref{eq:asubvar-lock}} \\
      &= \subexprmtt{\embed{v}}{\embed{\addtele{\lock{\sigma}{\mu}}{\Lambda}}}. && \explanation{Induction hypothesis}
    \end{align*}
  \item \case{} $\sfasub{\key{\alpha}{\Theta}{\Psi}{\hat{\Gamma}}}{\extendsctx{\hat{\Gamma}}{\Psi}}{\extendsctx{\hat{\Gamma}}{\Theta}}{n}$ (\inlinerulename{sf-arensub-key}) \\
    In this case the result is a direct consequence of \cref{lem:embed-key-ren} because $\asubexpr[\Lambda]{v}{\key{\alpha}{\Theta}{\Psi}{\hat{\Gamma}}} = \arenexpr[\Lambda]{v}{\key{\alpha}{\Theta}{\Psi}{\hat{\Gamma}}}$.
  \item \case{} $\sfasub{\sigma . t}{\hat{\Gamma}}{\extendsctx{\hat{\Delta}}{\mu}}{n}$ (\inlinerulename{sf-asub-extend}) \\
    Now we distinguish between two cases for the variable $v$.
    \begin{itemize}
    \item \case{} $v = \vzero{\alpha}$ \\
      On the one hand, by \cref{eq:asubvar-extend-vzero} we have that
      \[
        \embed{\asubexpr[\Lambda]{\vzero{\alpha}}{\sigma . t}} = \embed{\arenexpr{t}{\key{\alpha}{\locksym_\mu}{\Lambda}{\hat{\Gamma}}}}.
      \]
      On the other hand, we can compute
      \begin{align*}
        &\subexprmtt{\embed{\vzero{\alpha}}}{\embed{\addtele{(\sigma . t)}{\Lambda}}} \\
        &= \mathrlap{\subexprmtt{\subexprmtt{\vzero{}}{\key{\alpha}{\,\locksym_\mu}{\Lambda}{\hat{\Delta}}}}{\addtele{\big(\embed{\sigma} . \embed{t}\big)}{\Lambda}}} \\
        &&& \explanation{Definition of $\embed{\_}$} \\
        &\sigmeq \mathrlap{\subexprmtt{\subexprmtt{\vzero}{\lock{\big(\embed{\sigma} . \embed{t}\big)}{\mu}}}{\key{\alpha}{\,\locksym_\mu}{\Lambda}{\hat{\Gamma}}}} \\
        &&& \explanation{\inlinerulename{wsmtt-eq-sub-key-natural}} \\
        &\sigmeq \subexprmtt{\embed{t}}{\key{\alpha}{\,\locksym_\mu}{\Lambda}{\hat{\Gamma}}}. && \explanation{\inlinerulename{wsmtt-eq-expr-extend-var}}
      \end{align*}
      Combining these two computations, the result follows from \cref{lem:embed-key-ren}.
    \item \case{} $v = \suc{v'}$ \\
      The left-hand side now reduces to
      \begin{align*}
        &\embed{\asubexpr[\Lambda]{\suc{v'}}{\sigma . t}} \\
        &= \embed{\asubexpr[\Lambda]{v'}{\sigma}} && \explanation{\cref{eq:asubvar-extend-vsuc}} \\
        &\sigmeq \subexprmtt{\embed{v'}}{\embed{\addtele{\sigma}{\Lambda}}}. && \explanation{Induction hypothesis}
      \end{align*}
      For the right-hand side, we have
      \begin{align*}
        &\subexprmtt{\embed{\suc{v'}}}{\embed{\addtele{(\sigma . t)}{\Lambda}}} \\
        &= \subexprmtt{\subexprmtt{\embed{v'}}{\addtele{\pi}{\Lambda}}}{\addtele{\big(\embed{\sigma} . \embed{t}\big)}{\Lambda}} && \explanation{Definition of $\embed{\_}$} \\
        &\sigmeq \subexprmtt{\embed{v'}}{\addtele{\big(\pi \circ \big(\embed{\sigma} . \embed{t}\big)\big)}{\Lambda}} \\
        &\sigmeq \subexprmtt{\embed{v'}}{\embed{\addtele{\sigma}{\Lambda}}}.
      \end{align*}
      In the last two steps we made use of \inlinerulename{wsmtt-eq-expr-sub-compose}, \inlinerulename{wsmtt-eq-sub-lock-compose} and \inlinerulename{wsmtt-eq-sub-extend-weaken}. \qedhere
    \end{itemize}
  \end{itemize}
\end{proof}

\begin{proposition}
  \label{prop:embed-subst-term}
  Given an \subfree{} expression $\sfexpr{\hat{\Delta}}{t}{m}$ and a substitution $\sfsub{\sigma}{\hat{\Gamma}}{\hat{\Delta}}{m}$, we have that
  $\sigmeqexpr{\hat{\Gamma}}{\embed{\subexpr{t}{\sigma}}}{\subexprmtt{\embed{t}}{\embed{\sigma}}}{m}$.
\end{proposition}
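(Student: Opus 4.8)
The plan is to reduce everything to the already-established behaviour of \emph{atomic} substitutions and then induct on the list structure of the regular substitution $\sigma$. The key observation is that \cref{lem:embed-sub-term-lock-tele,lem:embed-subst-var} together already settle the statement whenever $\sigma$ is atomic: \cref{lem:embed-subst-var} verifies exactly the hypothesis demanded by \cref{lem:embed-sub-term-lock-tele}, so for every atomic substitution $\tau$ (from some $\hat{\Gamma}$ to $\hat{\Delta}$) and every expression $\sfexpr{\hat{\Delta}}{s}{m}$ we obtain $\embed{\asubexpr{s}{\tau}} \sigmeq \subexprmtt{\embed{s}}{\embed{\tau}}$. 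This atomic case is the workhorse; what remains is bookkeeping over sequences.

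Recall that a regular SFMTT substitution is either the empty sequence $\id$ or a snoc $\rensubcons{\sigma'}{\tau}$ of a regular substitution $\sigma'$ with an atomic substitution $\tau$; that applying it to an expression unfolds as $\subexpr{t}{\id} = t$ and $\subexpr{t}{\rensubcons{\sigma'}{\tau}} = \asubexpr{\left(\subexpr{t}{\sigma'}\right)}{\tau}$ (apply the tail first, then the outermost atomic substitution); and that the embedding sends $\id$ to $\id$ and $\rensubcons{\sigma'}{\tau}$ to $\embed{\sigma'} \circ \embed{\tau}$. I would therefore proceed by induction on $\sigma$. In the base case $\sigma = \id$, the left-hand side is $\embed{t}$, and the right-hand side $\subexprmtt{\embed{t}}{\embed{\id}} = \subexprmtt{\embed{t}}{\id}$ is $\upsigma$-equivalent to $\embed{t}$ by \inlinerulename{wsmtt-eq-expr-sub-id}, which closes the case.

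In the inductive step $\sigma = \rensubcons{\sigma'}{\tau}$, I would first invoke the atomic case on the expression $s := \subexpr{t}{\sigma'}$, giving $\embed{\asubexpr{\left(\subexpr{t}{\sigma'}\right)}{\tau}} \sigmeq \subexprmtt{\embed{\subexpr{t}{\sigma'}}}{\embed{\tau}}$. The induction hypothesis supplies $\embed{\subexpr{t}{\sigma'}} \sigmeq \subexprmtt{\embed{t}}{\embed{\sigma'}}$, and congruence of explicit substitution (\inlinerulename{wsmtt-eq-expr-cong-sub}) lets me rewrite inside the outer substitution, yielding $\subexprmtt{\left(\subexprmtt{\embed{t}}{\embed{\sigma'}}\right)}{\embed{\tau}}$. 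Finally, reading \inlinerulename{wsmtt-eq-expr-sub-compose} from right to left collapses this to $\subexprmtt{\embed{t}}{\embed{\sigma'} \circ \embed{\tau}}$, which is precisely $\subexprmtt{\embed{t}}{\embed{\rensubcons{\sigma'}{\tau}}}$ by the definition of the embedding on regular substitutions.

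The genuine difficulty has already been discharged upstream: \cref{lem:embed-subst-var} is where the naturality of key renamings (\cref{prop:key-natural}) and the coherence results \cref{prop:key-vert-comp-cell,prop:key-hor-comp-cell} do the real work, so relative to those the present proposition is a short wrap-up. The only points I expect to require care are the \emph{direction} in which \inlinerulename{wsmtt-eq-expr-sub-compose} is applied, and checking that the snoc conventions on the two sides agree — that the rightmost atomic substitution $\tau$ is both the one applied first to $t$ in $\subexpr{t}{\rensubcons{\sigma'}{\tau}}$ and the one composed on the right in $\embed{\sigma'} \circ \embed{\tau}$ — so that the associativity and unit laws for $\circ$ (used silently throughout this section) align the two sides exactly.
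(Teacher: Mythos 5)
Your proposal is correct and matches the paper's proof: the paper likewise reduces to the atomic case via \inlinerulename{wsmtt-eq-expr-sub-id} and \inlinerulename{wsmtt-eq-expr-sub-compose} and then combines \cref{lem:embed-sub-term-lock-tele,lem:embed-subst-var}. You have merely spelled out the list induction that the paper leaves implicit, and your unfolding of the snoc conventions and the direction of \inlinerulename{wsmtt-eq-expr-sub-compose} are both right.
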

\begin{proof}
  Because of the rules \inlinerulename{wsmtt-eq-expr-sub-id} and \inlinerulename{wsmtt-eq-expr-sub-compose}, it suffices to prove this result for an atomic substitution $\sigma$.
  This follows directly by combining \cref{lem:embed-sub-term-lock-tele,lem:embed-subst-var}.
\end{proof}

\subsection{Proof of \cref{thm:soundness}}

Just like the completeness theorem, we will prove a more general statement than \cref{thm:soundness} that also takes substitution into account.
\begin{theorem}[Soundness]
  For every WSMTT expression $\mttexpr{\hat{\Gamma}}{t}{m}$ we have $\sigmeqexpr{\hat{\Gamma}}{\embed{\translate{t}}}{t}{m}$ and for every WSMTT substitution $\mttsub{\sigma}{\hat{\Gamma}}{\hat{\Delta}}{m}$ we have $\sigmeqsub{\embed{\translate{\sigma}}}{\sigma}{\hat{\Gamma}}{\hat{\Delta}}{m}$.
\end{theorem}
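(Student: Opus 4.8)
The plan is to prove the generalised statement by a single mutual induction on the structure of the WSMTT expression $t$ and the WSMTT substitution $\sigma$, mirroring the mutually recursive definition of $\translate{\_}$. Almost all of the genuine work has already been discharged in \cref{prop:embed-subst-term}, which relates $\embed{\subexpr{t}{\sigma}}$ to $\subexprmtt{\embed{t}}{\embed{\sigma}}$; what remains is essentially bookkeeping, matching each syntactic constructor against the composite $\embed{\translate{\_}}$ and collapsing the result with the CwF rules of \cref{fig:sigma-equiv}. Two small auxiliary facts will be convenient throughout: first, that embedding turns list concatenation of regular rensubs into composition, i.e. $\embed{\concat{S}{T}} \sigmeq \embed{S} \circ \embed{T}$, which is an easy induction on $S$ using $\embed{\id} = \id$ as a unit together with associativity of $\circ$; and second, that WSMTT lifting preserves $\upsigma$-equivalence, which is immediate from the congruence rules since $\lift{\sigma} = (\sigma \circ \pi) . \vzero{}$.

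For expressions, the only case with real content is explicit substitution $\subexprmtt{t}{\sigma}$. Here I would compute $\translate{\subexprmtt{t}{\sigma}} = \subexpr{\translate{t}}{\translate{\sigma}}$, apply \cref{prop:embed-subst-term} to obtain $\embed{\subexpr{\translate{t}}{\translate{\sigma}}} \sigmeq \subexprmtt{\embed{\translate{t}}}{\embed{\translate{\sigma}}}$, and finish with the induction hypotheses $\embed{\translate{t}} \sigmeq t$ and $\embed{\translate{\sigma}} \sigmeq \sigma$ together with \inlinerulename{wsmtt-eq-expr-cong-sub}. The variable $\vzero{}$ is handled by unfolding $\embed{\translate{\vzero{}}} = \embed{\vzero{1_\mu}} = \subexprmtt{\vzero{}}{\key{1_\mu}{\locksym_\mu}{\locksym_\mu}{\hat{\Gamma}}}$ and then invoking \inlinerulename{wsmtt-eq-sub-key-unit} followed by \inlinerulename{wsmtt-eq-expr-sub-id}. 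Every remaining expression constructor ($\lam{\mu}{\_}$, $\app{\mu}{\_}{\_}$, $\modty{\mu}{\_}$, $\modtm{\mu}{\_}$, $\letmod{\_}{\_}{\_}{\_}{\_}{\_}$, the boolean formers and the conditional) commutes with both $\translate{\_}$ and $\embed{\_}$, so each follows immediately from the induction hypotheses for its subexpressions and the matching congruence rule.

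For substitutions, the constructors $\id$, $\pi$, $\emptysub$, $\key{\alpha}{\Theta}{\Psi}{\hat{\Gamma}}$, $\lock{\sigma}{\mu}$ and $\sigma \circ \tau$ are all routine: after unfolding $\translate{\_}$ and $\embed{\_}$ they reduce to $\id$ being a left unit of $\circ$, the uniqueness rule \inlinerulename{wsmtt-eq-sub-empty-unique}, congruence of $\lock{\_}{\mu}$ (\inlinerulename{wsmtt-eq-sub-cong-lock}), and — for composition — the concatenation fact above combined with \inlinerulename{wsmtt-eq-sub-cong-compose}. The one substantive case is context extension $\sigma . t$, where $\translate{\sigma . t} = \rensubcons{\lift{\translate{\sigma}}}{(\ida . \translate{t})}$ embeds to $\embed{\lift{\translate{\sigma}}} \circ (\embed{\ida} . \embed{\translate{t}})$. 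Using \cref{lem:embed-lift} and the induction hypotheses this is $\sigmeq \lift{\sigma} \circ (\id . t)$, so it remains to establish the CwF identity $\lift{\sigma} \circ (\id . t) \sigmeq \sigma . t$. I would prove this directly via $\eta$: by \inlinerulename{wsmtt-eq-sub-extend-eta} the left side equals $(\pi \circ \rho) . (\subexprmtt{\vzero{}}{\lock{\rho}{\mu}})$ with $\rho = \lift{\sigma} \circ (\id . t)$, and repeated use of \inlinerulename{wsmtt-eq-sub-extend-weaken}, \inlinerulename{wsmtt-eq-sub-lock-compose}, \inlinerulename{wsmtt-eq-expr-sub-compose} and \inlinerulename{wsmtt-eq-expr-extend-var} collapses the first component to $\sigma$ and the second to $t$.

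I expect the main obstacle to be organisational rather than conceptual: because $\embed{\_}$ is defined separately for expressions and for rensubs (unlike $\translate{\_}$, which is mutually recursive), the interaction between the two layers must be routed through \cref{prop:embed-subst-term} and the congruence/CwF rules instead of following the induction transparently. Concretely, the extension case is the only place that forces a genuine — though entirely standard — CwF calculation, and the concatenation fact is the only extra ingredient that must be in place before the induction can close; everything else is direct unfolding followed by an appeal to the appropriate rule of \cref{fig:sigma-equiv}.
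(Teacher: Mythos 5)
Your proposal is correct and follows essentially the same route as the paper: a structural induction over expressions and substitutions, with \cref{prop:embed-subst-term} discharging the explicit-substitution case, the variable case handled via \inlinerulename{wsmtt-eq-sub-key-unit} and \inlinerulename{wsmtt-eq-expr-sub-id}, composition handled via the concatenation-to-composition fact, and the extension case reduced to the CwF identity $\lift{\sigma} \circ (\id . t) \sigmeq \sigma . t$ using \cref{lem:embed-lift}, \inlinerulename{wsmtt-eq-sub-extend-eta}, \inlinerulename{wsmtt-eq-sub-extend-weaken} and \inlinerulename{wsmtt-eq-expr-extend-var}. The only detail worth adding is that the $\lock{\sigma}{\mu}$ case needs \inlinerulename{wsmtt-eq-sub-lock-id} and \inlinerulename{wsmtt-eq-sub-lock-compose} (not just congruence) to commute the embedding of a locked sequence with the lock of the embedded composite.
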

\begin{proof}
  The proof proceeds by induction on the expression $t$ and the substitution $\sigma$.
  All cases for the expression constructors that are shared between SFMTT and WSMTT are trivial from the induction hypotheses, but we show two of them (\inlinerulename{wsmtt-expr-arrow} and \inlinerulename{wsmtt-expr-lam}) as illustration.
  In particular, all constructors from \cref{fig:raw-mtt-expr-sub} are covered below.
  \begin{itemize}
  \item \case{} $\mttexpr{\hat{\Gamma}}{\modfunc{\mu}{T}{S}}{n}$ (\inlinerulename{wsmtt-expr-arrow}) \\
    By definition of $\translate{\_}$ and $\embed{\_}$ we have that
    \[
      \embed{\translate{\modfunc{\mu}{T}{S}}} = \modfunc{\mu}{\embed{\translate{T}}}{\embed{\translate{S}}}.
    \]
    Hence the result follows from the induction hypothesis applied to the subexpressions $T$ and $S$.
  \item \case{} $\mttexpr{\hat{\Gamma}}{\lam{\mu}{t}}{n}$ (\inlinerulename{wsmtt-expr-lam}) \\
    Again, by expanding the definitions of $\translate{\_}$ and $\embed{\_}$, we get $\embed{\translate{\lam{\mu}{t}}} = \lam{\mu}{\embed{\translate{t}}}$,
    so that the result follows from the induction hypothesis applied to the subexpression $t$.
  \item \case{} $\mttexpr{\lock{\extendsctx{\hat{\Gamma}}{\mu}}{\mu}}{\vzero{}}{m}$ (\inlinerulename{wsmtt-expr-var}) \\
    Now we have that
    \[
      \embed{\translate{\vzero{}}} = \embed{\vzero{1_\mu}} = \subexprmtt{\vzero{}}{\key{1_\mu}{\locksym_\mu}{\locksym_\mu}{\extendsctx{\hat{\Gamma}}{\mu}}}.
    \]
    This last expression is indeed $\upsigma$-equivalent to $\vzero{}$ because of \inlinerulename{wsmtt-eq-sub-key-unit} and \inlinerulename{wsmtt-eq-expr-sub-id}.
  \item \case{} $\mttexpr{\hat{\Gamma}}{\subexprmtt{t}{\sigma}}{m}$ (\inlinerulename{wsmtt-expr-sub}) \\
    In this case we have
    \begin{align*}
      \embed{\translate{\subexprmtt{t}{\sigma}}}
      &= \embed{\subexpr{\translate{t}}{\translate{\sigma}}} && \explanation{Definition of $\translate{\_}$} \\
      &\sigmeq \subexprmtt{\embed{\translate{t}}}{\embed{\translate{\sigma}}} && \explanation{\cref{prop:embed-subst-term}} \\
      &\sigmeq \subexprmtt{t}{\embed{\translate{\sigma}}} && \explanation{Induction hypothesis for $t$} \\
      &\sigmeq \subexprmtt{t}{\sigma}. && \explanation{Induction hypothesis for $\sigma$}
    \end{align*}
  \item \case{} $\mttsub{\emptysub}{\hat{\Gamma}}{\emptyctx{}}{m}$ (\inlinerulename{wsmtt-sub-empty}) \\
    Since $\embed{\translate{!}}$ is a WSMTT substitution from $\hat{\Gamma}$ to the empty scoping context $\emptyctx$, the result follows immediately from \inlinerulename{wsmtt-eq-sub-empty-unique}.
  \item \case{} $\mttsub{\id}{\hat{\Gamma}}{\hat{\Gamma}}{m}$ (\inlinerulename{wsmtt-sub-id}) \\
    By the definition of translation and embedding, we immediately have $\embed{\translate{\id}} = \id$.
  \item \case{} $\mttsub{\pi}{\extendsctx{\hat{\Gamma}}{\mu}}{\hat{\Gamma}}{n}$ (\inlinerulename{wsmtt-sub-weaken}) \\
    Now we have that
    \begin{align*}
      \embed{\translate{\pi}}
      &= \embed{\rensubcons{\id}{\weaken{\ida}}} && \explanation{Definition of $\translate{\_}$ and \cref{eq:weakening-arensub}} \\
      &= \id \circ (\id \circ \pi). && \explanation{Definition of $\embed{\_}$}
    \end{align*}
    This last substitution is indeed $\upsigma$-equivalent to $\pi$ by \inlinerulenamestyle{wsmtt-eq-sub-id-left}.
  \item \case{} $\mttsub{\sigma \circ \tau}{\hat{\Gamma}}{\hat{\Xi}}{m}$ (\inlinerulename{wsmtt-sub-compose}) \\
    Now we compute that $\embed{\translate{\sigma \circ \tau}} = \embed{\concat{\translate{\sigma}}{\translate{\tau}}}$.
    Since the embedding of a sequence of atomic \subfree{} substitutions is the composition of the embedding of these atomic substitutions and since WSMTT substitution composition is associative up to $\upsigma$-equivalence, we have that $\embed{\concat{\translate{\sigma}}{\translate{\tau}}} \sigmeq \embed{\translate{\sigma}} \circ \embed{\translate{\tau}}$.
    From this the result follows via the induction hypothesis applied to $\sigma$ and $\tau$.
  \item \case{} $\mttsub{\lock{\sigma}{\mu}}{\lock{\hat{\Gamma}}{\mu}}{\lock{\hat{\Delta}}{\mu}}{m}$ (\inlinerulename{wsmtt-sub-lock}) \\
    In this case we get that $\embed{\translate{\lock{\sigma}{\mu}}} = \embed{\lock{\translate{\sigma}}{\mu}} \sigmeq \lock{\embed{\translate{\sigma}}}{\mu}$,
    where the last equivalence follows from \inlinerulename{wsmtt-eq-sub-lock-id} and \inlinerulename{wsmtt-eq-sub-lock-compose}.
    The desired result is then a consequence of the induction hypothesis applied to $\sigma$.
  \item \case{} $\mttsub{\key{\alpha}{\Theta}{\Psi}{\hat{\Gamma}}}{\addtele{\hat{\Gamma}}{\Psi}}{\addtele{\hat{\Gamma}}{\Theta}}{n}$ (\inlinerulename{wsmtt-sub-key}) \\
    We can now compute that
    \begin{align*}
      \embed{\translate{\key{\alpha}{\Theta}{\Psi}{\hat{\Gamma}}}}
      &= \embed{\rensubcons{\id}{\key{\alpha}{\Theta}{\Psi}{\hat{\Gamma}}}} && \explanation{Definition of $\translate{\_}$} \\
      &= \id \circ \key{\alpha}{\Theta}{\Psi}{\hat{\Gamma}}, && \explanation{Definition of $\embed{\_}$}
    \end{align*}
    which is indeed $\upsigma$-equivalent to $\key{\alpha}{\Theta}{\Psi}{\hat{\Gamma}}$ because of \inlinerulenamestyle{wsmtt-eq-sub-id-left}
  \item \case{} $\mttsub{\sigma . t}{\hat{\Gamma}}{\extendsctx{\hat{\Delta}}{\mu}}{n}$ (\inlinerulename{wsmtt-sub-extend}) \\
    Expanding the definitions of $\translate{\_}$ and $\embed{\_}$, we have that
    \[
      \embed{\translate{\sigma . t}} = \embed{\rensubcons{\lift{\translate{\sigma}}}{\left(\ida{} . \translate{t}\right)}} = \embed{\lift{\translate{\sigma}}} \circ \left(\id . \embed{\translate{t}}\right).
    \]
    By \cref{lem:embed-lift} we know that $\embed{\lift{\translate{\sigma}}} \sigmeq \lift{\embed{\translate{\sigma}}}$ and combining this with the induction hypothesis for $\sigma$ and $t$, we get that
    \[
      \embed{\translate{\sigma . t}} \sigmeq \lift{\sigma} \circ (\id . t).
    \]
    This last substitution can be proven $\upsigma$-equivalent to $\sigma . t$ by the rules \inlinerulename{wsmtt-eq-sub-extend-eta}, \inlinerulename{wsmtt-eq-sub-extend-weaken} and \inlinerulename{wsmtt-eq-expr-extend-var}. \qedhere
  \end{itemize}
\end{proof}

\bibliography{bibliography}

\end{document}

\typeout{get arXiv to do 4 passes: Label(s) may have changed. Rerun}